\newtheorem{theorem}{Theorem}[section]
\newtheorem*{*theorem}{Theorem}
\newtheorem{corollary}{Corollary}[theorem]
\newtheorem{lemma}[theorem]{Lemma}
\newtheorem{observation}[theorem]{Observation}
\newtheorem{claim}[theorem]{Claim}
\newenvironment{claimproof}{\paragraph{Proof:}}{\hfill$\square$}
\newtheorem{definition}{Definition}[section]
\newcommand{\StrongTrueStatelessness}{\emph{Strict Statelessness}}
\newcommand{\StrictStatelessness}{\emph{Strict Statelessness}}
\newcommand{\Obliviousness}{\emph{Obliviousness}}
\newcommand{\Determinism}{\emph{Determinism}}
\newcommand{\Bandwidth}{\emph{Unit Bandwidth}}
\newcommand{\anew}[1]{#1}
\begin{document}

%%
%% The "title" command has an optional parameter,
%% allowing the author to define a "short title" to be used in page headers.
%\title{On the properties of Stateless Broadcast Algorithms for Synchronous Systems}

\title{Amnesiac Flooding: Easy to break, hard to escape}

%%
%% The "author" command and its associated commands are used to define
%% the authors and their affiliations.
%% Of note is the shared affiliation of the first two authors, and the
%% "authornote" and "authornotemark" commands
%% used to denote shared contribution to the research.
\author{Henry Austin}
\email{henry.b.austin@durham.ac.uk}
\orcid{}
\author{Maximilien Gadouleau}
\email{m.r.gadouleau@durham.ac.uk}
\orcid{0000-0003-4701-738X}
\author{George B. Mertzios}
\authornote{Supported by the EPSRC grant EP/P020372/1.}
\email{george.mertzios@durham.ac.uk}
\orcid{0000-0001-7182-585X}
\author{Amitabh Trehan}
\email{amitabh.trehan@durham.ac.uk}
\authornote{Supported by the EPSRC grant EP/P021247/1.}
\orcid{0000-0002-2998-0933}
\affiliation{%
  \institution{Department of Computer Science, University of Durham}
  \city{Durham}
  \country{UK}
}

%%
%% By default, the full list of authors will be used in the page
%% headers. Often, this list is too long, and will overlap
%% other information printed in the page headers. This command allows
%% the author to define a more concise list
%% of authors' names for this purpose.
\renewcommand{\shortauthors}{Henry Austin, Maximilien Gadouleau, George B. Mertzios, and Amitabh Trehan}

%%
%% The abstract is a short summary of the work to be presented in the
%% article.

\begin{abstract}
   %   Broadcast is a central problem in distributed computing. Recently, Hussak and Trehan [PODC'19/STACS'20/DC'23] proposed a truly stateless broadcasting protocol (Amnesiac Flooding), which  even has optimal termination time (O(diameter)) in a synchronous network. This spawned a series of work examining the algorithm and variants and the role of memory and state in broadcast in static and dynamic networks [Parzych and Damude, DISC 2024].
   %  %, the algorithm terminates with optimal termination time in a deterministic network.
   % % a truly  was proposed (Amnesiac Flooding), which was surprisingly proven to terminate within time linear in the diameter of the network.
   %   Despite this it has remained unclear: (i) What conditions are necessary for stateless broadcast algorithms? (ii) Is Amnesiac Flooding the only terminating stateless broadcast algorithm under certain distributed models? (iii) How robust is Amnesiac Flooding with respect to \textit{faults} and variations to the network models.
    Broadcast is a central problem in distributed computing. Recently, Hussak and Trehan [PODC'19/DC'23] proposed a stateless broadcasting protocol (Amnesiac Flooding), which was surprisingly proven to terminate in asymptotically optimal time (linear in the diameter of the network). However, it remains unclear: (i) Are there other stateless terminating broadcast algorithms with the desirable properties of Amnesiac Flooding, (ii) How robust is Amnesiac Flooding with respect to \emph{faults}?

    In this paper we make progress on both of these fronts. Under a reasonable restriction (obliviousness to message content) additional to the fault-free synchronous model, we prove that Amnesiac Flooding is the \emph{only} strictly stateless deterministic protocol that can achieve terminating broadcast. We achieve this by identifying four natural properties of a terminating broadcast protocol that Amnesiac Flooding uniquely satisfies. In contrast, we prove that even minor relaxations of \textit{any} of these four criteria allow the construction of other terminating broadcast protocols.
    
    On the other hand, we prove that Amnesiac Flooding can become non-terminating or non-broadcasting, even if we allow just one node to drop a single message on a single edge in a single round. As a tool for proving this, we focus on the set of all \textit{configurations} of transmissions between nodes in the network, and obtain a \textit{dichotomy} characterizing the configurations, starting from which, Amnesiac Flooding terminates. 
    Additionally, we characterise the structure of sets of Byzantine agents capable of forcing non-termination or non-broadcast of the protocol on arbitrary networks.
\end{abstract}

%%
%% The code below is generated by the tool at http://dl.acm.org/ccs.cfm.
%% Please copy and paste the code instead of the example below.
%%
\begin{CCSXML}
<ccs2012>
 <concept>
  <concept_id>00000000.0000000.0000000</concept_id>
  <concept_desc>Do Not Use This Code, Generate the Correct Terms for Your Paper</concept_desc>
  <concept_significance>500</concept_significance>
 </concept>
 <concept>
  <concept_id>00000000.00000000.00000000</concept_id>
  <concept_desc>Do Not Use This Code, Generate the Correct Terms for Your Paper</concept_desc>
  <concept_significance>300</concept_significance>
 </concept>
 <concept>
  <concept_id>00000000.00000000.00000000</concept_id>
  <concept_desc>Do Not Use This Code, Generate the Correct Terms for Your Paper</concept_desc>
  <concept_significance>100</concept_significance>
 </concept>
 <concept>
  <concept_id>00000000.00000000.00000000</concept_id>
  <concept_desc>Do Not Use This Code, Generate the Correct Terms for Your Paper</concept_desc>
  <concept_significance>100</concept_significance>
 </concept>
</ccs2012>
\end{CCSXML}

\ccsdesc[500]{Mathematics of computing~Graph algorithms}
\ccsdesc[500]{Mathematics of computing~Discrete mathematics}
\ccsdesc[500]{Theory of computation~Graph algorithms analysis}
\ccsdesc[500]{Theory of computation~Distributed algorithms}

%%
%% Keywords. The author(s) should pick words that accurately describe
%% the work being presented. Separate the keywords with commas.
\keywords{Amnesiac flooding, Terminating protocol, Algorithm state, stateless protocol, Flooding algorithm, Network algorithms, Graph theory,  Termination, Communication, Broadcast.}
%% A "teaser" image appears between the author and affiliation
%% information and the body of the document, and typically spans the
%% page.

%\received{20 February 2007}
%\received[revised]{12 March 2009}
%\received[accepted]{5 June 2009}

%%
%% This command processes the author and affiliation and title
%% information and builds the first part of the formatted document.
\maketitle

\newpage
\setcounter{page}{1}
% \pagenumbering{Arabic}

\section{Introduction}
The dissemination of information to disparate participants is a fundamental problem in both the construction and theory of distributed systems. A common strategy for solving this problem is to ``broadcast'', i.e.~to transmit a piece of information initially held by one agent to all other agents in the system~\cite{Attiya-WelchBook, Lynchbook, peleg,tanenbaum2011computer, GerardTelDistributedAlgosBook}. In fact, broadcast is not merely a fundamental communication primitive in many models, but also underlies solutions to other fundamental problems such as leader election and wake-up. Given this essential role in the operation of distributed computer systems and the potential volume of broadcasts, an important consideration is simplifying the algorithms and minimizing the overhead required for each broadcast.

Within a synchronous setting, \textit{Amnesiac Flooding} as introduced by Hussak and Trehan in 2019~\cite{HussakT-AF-PODC19,HussakT-AF-STACS20} eliminates the need of the standard flooding algorithm to store historical messages. The algorithm terminates in asymptotically optimal $O(D)$ time (for $D$ the diameter of the network) and is stateless as agents are not required to hold any information between communication rounds. The algorithm in the fault-free synchronous message passing model is defined as follows:

\begin{definition}{\textbf{Amnesiac flooding algorithm.}}\label{def: AF} (adapted from~\cite{hussak2023termination})
Let $G = (V,E)$ be an undirected graph, with vertices $V$ and edges $E$ (representing a network where the vertices represent the nodes of the network and edges represent the connections between the nodes).  Computation proceeds in synchronous `rounds' where each round consists of nodes receiving messages sent from their neighbours. A receiving node then sends messages to some neighbours which arrive in the next round. No messages are lost in transit. The algorithm is defined by the following rules:
\begin{itemize}
\item[(i)]
All  nodes from a subset of sources or {\it initial nodes} $I \subseteq V$  send a message $M$ to all of their neighbours in round 1. 
\item[(ii)]
In subsequent rounds, every node that received $M$ from a neighbour in the previous round, sends $M$ to all, and only, those nodes from which it did not receive $M$. Flooding \emph{terminates} when $M$ is no longer sent to any node in the network.
\end{itemize}
\end{definition}

%Simply stated, the rule is: ``If an agent receives a message in a given round, it forwards that message to all neighbours it did not receive it from in this round''.
These rules imply several other desirable properties. Firstly, the algorithm only requires the ability to forward the messages, but does not read the content (or even the header information) of any message to make routing decisions. Secondly, the algorithm only makes use of local information and does not require knowledge of a unique identifier. Thirdly, once the broadcast has begun, the initial broadcaster may immediately forget that they started it.

However, extending Amnesiac Flooding and other stateless flooding algorithms (such as those proposed in~\cite{turau2020stateless,turau2021synchronous,bayramzadeh2021weak}) beyond synchronous fault-free scenarios is challenging. This is due to the fragility of these algorithms and their inability to build in complex fault-tolerance due to the absence of state and longer term memory. It has subsequently been shown that no stateless flooding protocol can terminate under moderate asynchrony, unless it is allowed to perpetually modify a super-constant number (i.e.~$\omega(1)$) of bits in each message~\cite{turau2020stateless}. Yet, given the fundamental role of broadcast in distributed computing, the resilience of these protocols is extremely important even on synchronous networks.

%However, 
Outside of a partial robustness to crash failures, the fault sensitivity of Amnesiac Flooding under synchrony has not been explored in the literature. This omission is further compounded by the use of Amnesiac Flooding as an underlying subroutine for the construction of other broadcast protocols.
In particular, multiple attempts have been made to extend Amnesiac Flooding to new settings (for example routing multiple concurrent broadcasts \cite{bayramzadeh2021weak} or flooding networks without guaranteed edge availability \cite{turau2021synchronous}), while maintaining its desirable properties. 
However, none have been entirely successful, typically requiring some state-fulness.
It has not in fact been established that \textit{any other} protocol can retain all of Amnesiac Flooding's remarkable properties even in its original setting.
These gaps stem fundamentally from the currently limited knowledge of the dynamics of Amnesiac Flooding beyond the fact of its termination and its speed to do so.
In particular, all of the existing techniques (e.g. parity arguments such as in~\cite{hussak2023termination} and auxiliary graph constructions such as in~\cite{turau2021amnesiac}) used to obtain termination results for Amnesiac Flooding are unable to consider faulty executions of the protocol and fail to capture the underlying structures driving terminating behaviour.\\

We address these gaps through the application of novel analysis and by considering the structural properties of Amnesiac Flooding directly. 
By considering the sequence of message configurations, we are able to identify the structures underlying Amnesiac Flooding's termination and use these to reason about the algorithm in arbitrary configurations.
The resulting dichotomy gives a comprehensive and structured understanding of termination in Amnesiac Flooding.
For example, we apply this to investigate the sensitivity of Amnesiac Flooding with respect to several forms of fault and find it to be quite fragile.
Furthermore, we show that under reasonable assumptions on the properties of a synchronous network, any strictly stateless deterministic terminating broadcast algorithm oblivious to the content of messages, must produce the exact same sequence of message configurations as Amnesiac Flooding on any network from any initiator.
We therefore argue that Amnesiac Flooding is unique.
However, we show that if any of these restrictions are relaxed, even slightly, distinct terminating broadcast algorithms can be obtained.
As a result of this uniqueness and simplicity, we argue that Amnesiac Flooding represents a prototypical broadcast algorithm.
This leaves open the natural question: do there exist fundamental stateless algorithms underlying solutions to other canonical distributed network problems? Though memory can be essential or naturally useful in certain scenarios~\cite{MemoryLowerBoundsDISC,LimitsofInformationSpread,korman2022early,RandomisedBroadcastFirst,RandomisedBroadcastFinal,SocialNetworksSpreadRumors}, understanding what we can do with statelessness can help us push fundamental boundaries.

\subsection{Our Contributions}

In this work,
we investigate the existence of other protocols possessing the following four desirable properties of Amnesiac Flooding:
\begin{enumerate}
    \item \StrictStatelessness:
    Nodes maintain no information other than their port labellings between rounds. This includes whether or not they were in the initiator set.
    \item \Obliviousness: Routing decisions may not depend on the contents of received messages.
    \item \Determinism: All decisions made by a node must be deterministic.
    \item \Bandwidth: Each node may send at most one message per edge per round.
\end{enumerate}

Our main technical results regarding the existence of alternative protocols to Amnesiac Flooding are given in the next two theorems (reworded in \cref{sec: Uniqueness,Relaxing}).

\begin{theorem}[Uniqueness of Amnesiac Flooding]
    Any terminating broadcast algorithm possessing all of \StrictStatelessness, \Obliviousness, \Determinism\ and \Bandwidth{} behaves identically to Amnesiac Flooding on all graphs under all valid labellings.
\end{theorem}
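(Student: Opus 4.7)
The plan is to encode any candidate algorithm $A$ as a family of purely local deterministic functions $f_v : 2^{N(v)} \to 2^{N(v)}$, one per node $v$, where $f_v(S)$ gives the ports along which $v$ transmits in a round as a function only of the set $S$ of ports on which it received messages in the previous round. This reduction follows from the four properties: \StrictStatelessness{} forbids memory carried between rounds (including initiator status after round $1$), \Obliviousness{} forbids dependence on message content, \Determinism{} forbids randomness, and \Bandwidth{} forces the output to be a subset of $N(v)$. A non-initiator receiving nothing has no message to forward, so $f_v(\emptyset) = \emptyset$; and an initiator in round $1$ must send to every neighbour (else a pendant reachable only through it is never informed), exactly as in Amnesiac Flooding. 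It then remains to show $f_v(S) = N(v) \setminus S$ for every non-empty $S \subseteq N(v)$.

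I would establish this equality by two containments, each via a tailored graph construction. For $f_v(S) \supseteq N(v) \setminus S$, fix $w \in N(v) \setminus S$ and build a graph where $w$ is a pendant attached only to $v$ and each $x \in S$ sits at the far end of a length-tuned feeder path rooted at a single initiator, so that at some round $r$ the received set at $v$ is exactly $S$. If $w \notin f_v(S)$, then $w$ never hears the message, contradicting terminating broadcast. For the reverse inclusion, suppose toward a contradiction that some $u \in S$ lies in $f_v(S)$. I would build a graph containing a cycle through $v$ and $u$, with additional feeder paths and leaves arranged so that (i) $v$'s received set at round $r$ is exactly $S$, and (ii) the extra message $v \to u$ forced by the hypothesis travels once around the cycle and recreates the same received set at $v$ after $\ell$ rounds, where $\ell$ is the cycle length. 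This recurrence yields an infinite periodic orbit of message configurations, contradicting termination.

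Once the local rule is pinned down at every node and every labelling, a routine induction on the round number shows that $A$ and Amnesiac Flooding produce identical message configurations on every graph under every valid port labelling: they agree in round $1$ by the initiator rule, and if they agree up through round $r$ then every node has the same received set under both, so by the derived rule they send identically in round $r+1$. The main obstacle is the termination construction in the second containment, because the candidate algorithm might deviate from Amnesiac Flooding at many nodes simultaneously, potentially disrupting the planned oscillation. I would address this by choosing a putative deviation that is minimal in a suitable sense (smallest $|S|$, then smallest $|N(v)|$, with ties broken by the round in which it is first exercised), so that by minimality the auxiliary nodes in the feeder paths and the cycle behave exactly like Amnesiac Flooding, letting the oscillation argument go through cleanly.
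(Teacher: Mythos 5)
Your reduction to local functions and the final round-by-round induction match the paper's framing, and your first containment (pendant $w$ forces $w \in f_v(S)$) is essentially the paper's argument. But the second containment is where all the real work lies, and your sketch has two genuine gaps there. First, the claim that an echo $u \in S \cap f_v(S)$ can be turned into a periodic orbit by routing it around a cycle is not justified and is false as a general principle: the paper's own \textsc{Parrot Flooding} is a protocol in which every leaf echoes the message back to its neighbour, yet it is correct and terminating from every non-leaf initiator. Whether an extra message on a cycle persists or annihilates is governed by parity/balance of the configuration (this is precisely the content of the paper's dichotomy theorem), so the non-termination witness for each deviation must be a carefully chosen graph, labelling \emph{and} initiator; the paper needs a whole menagerie of them ($P_6$ with an exhaustive case check, extended paw graphs, an extended $C_5$, the diamond, $K_{2,3}$, and binary-tree stars for high degree), not a single cycle template.

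Second, your minimality ordering $(|S|, |N(v)|, \text{round})$ does not isolate the deviation, because in the paper's model the forwarding function depends on the node's identifier and the identifiers of its neighbours, and the labelling is adversarial. The smallest possible deviation profile is already $|S|=1$, $|N(v)|\in\{1,2\}$ (a leaf or path-interior node echoing), and many \emph{incomparable} deviations of that same minimal profile can coexist for different identifier pairs; nothing in your ordering prevents the feeder-path and cycle nodes of your gadget from deviating too. Worse, you cannot freely relabel the gadget to avoid the bad identifiers, since the deviation is tied to specific IDs. This is exactly why the paper opens with a Ramsey-theoretic forbidden-subdigraph argument to extract a set of identifiers that are \emph{mutually} non-echoing, and only then runs an induction on identifiers and degrees using gadgets in which at most one or two nodes are not already pinned to the Amnesiac Flooding policy. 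Without some substitute for that step, your oscillation argument cannot be made to "go through cleanly."
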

Note that the last theorem allows, but does not require, that nodes have access to unique identifiers labelling themselves and their ports. 
However, we enforce the condition that these identifiers, should they exist, may be drawn adversarially from some super set of $[n+\kappa]$ where $n$ is the number of nodes on the networks and $\kappa=R(9,8)$ where $R(9,8)$ is a Ramsey Number.
It is important to stress here that this result holds even if the space of unique identifiers is only greater than $n$ by an additive constant.
In contrast to the last theorem, the next one does not assume that agents have access to unique identifiers.
\begin{theorem}[Existence of relaxed Algorithms]
    There exist terminating broadcast algorithms which behave distinctly from Amnesiac Flooding on infinitely many networks possessing any three of: \StrictStatelessness, \Obliviousness, \Determinism\ and \Bandwidth.
\end{theorem}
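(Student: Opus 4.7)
The plan is to prove the theorem constructively: for each of the four properties I exhibit a terminating broadcast protocol that drops only that one property, retains the other three, and differs from Amnesiac Flooding on an explicit infinite family of graphs. The family $\{K_n : n \geq 3\}$ works in every case, since on any such clique AF has a rigid back-and-forth transmission pattern in its first few rounds that each of the candidates visibly breaks.

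Relaxing \StrictStatelessness, the natural witness is textbook flooding: each node stores a single bit recording whether it has already forwarded $M$, forwards to all non-senders on first receipt, and ignores $M$ thereafter. This protocol is oblivious, deterministic, unit-bandwidth, terminates in at most $D$ rounds, and on $K_3$ it finishes in two transmission rounds while AF takes three. Relaxing \Obliviousness, I pack the forwarding history into the message itself: each message carries the set $S$ of node identifiers that have already forwarded it, and a node $v$ receiving messages with histories $S_1,\ldots,S_k$ acts only if $v\notin\bigcup_i S_i$, in which case it forwards $(M,\,\bigcup_i S_i\cup\{v\})$ only to those non-sending neighbours that are not already in $\bigcup_i S_i\cup\{v\}$. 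This protocol is strictly stateless (all memory resides in the message), deterministic, unit-bandwidth, broadcasts via an elementary BFS induction on shortest paths from the initiator, and terminates in at most $n$ rounds since every forwarding node is added to every descending history at most once. On $K_3$ the history-piggybacking protocol goes silent a round earlier than AF, witnessing distinctness. Relaxing \Bandwidth, the simplest witness is double-AF: execute AF exactly, but send two copies of $M$ along every edge on which AF sends one. This is strictly stateless, oblivious, deterministic, inherits broadcast and termination from AF verbatim, and differs from AF on every graph with at least one active edge.

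Relaxing \Determinism, I propose a randomised variant: each node performs its deterministic AF step as usual, and independently with probability $1/2$ additionally retransmits $M$ once on a uniformly chosen sender-edge from the current round. The deterministic AF substructure guarantees broadcast, unit bandwidth is preserved because the random retransmission uses an edge disjoint from AF's non-sender edges in that round, and almost-sure termination follows because in every round there is a uniform positive probability (at least $(1/2)^n$) that every active node simultaneously abstains from its random retransmission, after which the residual process coincides with AF and terminates deterministically by the Hussak--Trehan theorem; a standard Borel--Cantelli argument then yields almost-sure termination in finite expected time. Distinctness is immediate: on $K_3$ the random retransmission in round~$2$ occurs with probability $3/4$ and produces a configuration that AF never exhibits.

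The main obstacle is the termination analysis for the randomised variant, where the interaction between the deterministic AF step and the random retransmission is non-trivial and benefits from the dichotomy provided by the paper's main technical result, which can be used to separate the AF-terminating portion of each configuration from the residual stochastic activity. The remaining three cases amount to bookkeeping: both broadcast and termination follow from elementary inductions on shortest paths or on history sizes, and distinctness on the infinite family $\{K_n : n \geq 3\}$ is witnessed by a direct comparison of the first few transmission rounds.
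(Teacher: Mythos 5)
Your witnesses for \StrictStatelessness{} (classical flooding with a ``have I forwarded'' bit) is fine and in fact simpler than the paper's choice (the paper notes classical flooding as a known relaxation but instead presents \textsc{Neighbourhood-2 Flooding}). The other three cases, however, each have a problem, and one of them is fatal.

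The genuine gap is in the \Determinism{} case. Your termination argument is: with probability at least $(1/2)^n$ every active node abstains from its random retransmission in a given round, ``after which the residual process coincides with AF and terminates deterministically by the Hussak--Trehan theorem.'' This is a non sequitur: one clean AF round does not yield termination, and more importantly the Hussak--Trehan theorem only guarantees termination from configurations arising from a correct (multi-)broadcast. By the paper's dichotomy (\cref{thm:balance}), AF run cleanly from a configuration terminates if and only if that configuration is \emph{balanced}, and your random retransmission --- a single extra message injected back along one sender-edge --- generically destroys balance on any cycle or FEC (this is exactly the single-message-perturbation phenomenon behind \cref{thm:message drop}). Once the configuration is imbalanced, no number of subsequent abstention rounds helps: clean AF from an imbalanced configuration \emph{never} terminates. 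So you would need to argue that the process returns to a balanced configuration with probability bounded below, which your proposal does not do. The paper's \textsc{Random Flooding} sidesteps this entirely by making the random alternative ``forward to \emph{all} neighbours'': after $D{+}1$ consecutive lucky rounds the network saturates (every node sends on every edge), and a single subsequent AF round quiesces everything, so no balance analysis is needed.

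Two further concerns. For \Obliviousness{}, your history-piggybacking protocol requires every node to rewrite the message (appending its identifier), but ``nodes \ldots can only forward, not modify the messages'' is a base-model assumption in \cref{sec: Model}, separate from the four properties being relaxed; the paper's \textsc{1-Bit Flooding} is careful to let only the source write a single read-only control bit. Your protocol therefore relaxes more than \Obliviousness{} alone. For \Bandwidth{}, ``double-AF'' sends messages to exactly the same sets of neighbours as AF in every round on every graph; under the paper's formalisation (configurations are \emph{sets} of directed edges, and protocols are functions returning subsets of the neighbourhood) it is not distinct from AF at all, and at best it is distinct only in a degenerate sense that carries no routing content. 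The paper instead uses the second copy to encode the control bit, which genuinely changes which edges are active in which rounds.
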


We derive four of these relaxed algorithms which all build upon Amnesiac Flooding:
\begin{enumerate}
    \item   \textsc{Neighbourhood-2 Flooding:} \StrictStatelessness\  is relaxed and agents are given knowledge of their neighbours' neighbours. This allows for distinct behaviour on networks of radius one as some agents are aware of the entire network topology.
    \item   \textsc{Random Flooding:} \Determinism\  is relaxed and agents are given access to a random coin. Agents randomly choose in each round whether to use Amnesiac Flooding or to forward messages to all of their neighbours.
    \item   \textsc{1-Bit Flooding (Message Dependent)}: \Obliviousness\  is relaxed and the source is allowed to include one bit of read-only control information in the messages. The source records in the message whether they are a leaf vertex, and if so agents perform Amnesiac Flooding upon receiving the message. Otherwise, agents perform a modified version where leaf vertices return messages.
    \item   \textsc{1-Bit Flooding (High Bandwidth):} \Bandwidth\ is relaxed and agents are allowed to send either one or two messages over an edge. This permits the same algorithm as the previous case by encoding the control information in the number of messages sent.
\end{enumerate}

We note that despite being a non-deterministic algorithm \textsc{Random Flooding} achieves broadcast with certainty and terminates almost surely in finite time.\\

We also perform a comprehensive investigation of the fault sensitivity of Amnesiac Flooding in a synchronous setting. 
Through the use of a method of invariants, we obtain much stronger characterizations of termination than were previously known, for both Amnesiac Flooding, and a subsequently proposed Stateless Flooding protocol~\cite{turau2020stateless}.  This allows us to provide precise characterizations of the behaviour of Amnesiac Flooding under the loss of single messages, uni-directional link failure, and time bounded Byzantine failures. 
The above invariants may be of independent interest, beyond fault sensitivity, as they provide strong intuition for how asynchrony interferes with the termination of both Amnesiac Flooding and the Stateless Flooding proposed in~\cite{turau2020stateless}.

%Our first

The main technical result concerning fault sensitivity  is a \textit{dichotomy} characterizing the configurations, starting from which, Amnesiac Flooding terminates. 
As the rigorous statement of the result requires some additional notation and terminology, we will state it only informally here. 
We show in \cref{thm:balance} that, whether Amnesiac Flooding terminates when begun from a configuration or not, is determined exclusively by the parity of messages distributed around cycles and so-called faux-even cycles (FEC) (essentially pairs of disjoint odd cycles connected by a path, see section \ref{sec: faults} for a full definition) within the graph. 
This result implies the following three theorems which demonstrate the fragility of Amnesiac Flooding under three increasingly strong forms of fault.  We give our fault model explicitly in \cref{sec:faults}.
    \begin{theorem}[Single Message Failure]
    \label{thm:message drop}
        If single-source Amnesiac Flooding experiences a \emph{single message drop failure} for the message $(u,v)$ then it fails to 
       terminate
        if either:
        \begin{itemize}
            \item $uv$ is not a bridge
            \item $uv$ lies on a path between odd cycles
        \end{itemize}
        Moreover, it fails to broadcast if and only if this is the first message sent along $uv$, $uv$ is a bridge, and the side of the cut containing $u$ does not contain an odd cycle.
    \end{theorem}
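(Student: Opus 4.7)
The plan is to apply the dichotomy theorem (\cref{thm:balance}) of the paper: a configuration of Amnesiac Flooding terminates precisely when certain parity invariants hold on every cycle and every faux-even cycle (FEC). A single message drop on edge $uv$ alters the parity exactly along those cycles and FECs that traverse $uv$, and leaves all others untouched. So the proof reduces to two tasks: identifying a cycle or FEC whose invariant is flipped by the drop (for non-termination), and tracing whether any message can ever reach the $v$-side of the graph (for non-broadcast).

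For the non-termination clause, first suppose $uv$ is not a bridge. Then $uv$ lies on some simple cycle $C$. In the fault-free execution the cycle-parity invariant of $C$ is exactly the value required by \cref{thm:balance} for termination; dropping one message along $uv$ flips this parity, so the resulting configuration violates the termination condition and the execution runs forever. Next, suppose $uv$ is a bridge lying on a path $P$ between two odd cycles $C_1$ and $C_2$. No simple cycle contains $uv$, but $C_1\cup C_2\cup P$ forms an FEC $F$ through $uv$. Dropping the message flips the FEC-parity invariant of $F$, so by \cref{thm:balance} the execution again fails to terminate.

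For the broadcast characterization, the backward (sufficiency) direction is straightforward: if $uv$ is a bridge, the dropped message is the first across $uv$, and the $u$-side has no odd cycle, then the $u$-side is bipartite, and Amnesiac Flooding on a bipartite component with a single source sends each edge exactly once and never re-visits it. Hence no further message is ever generated along $uv$, and the $v$-side never receives $M$. For the forward (necessity) direction, assume broadcast fails. If $uv$ were not a bridge an alternative path would still deliver $M$ to the $v$-side, so $uv$ must be a bridge. If the drop were not the first along $uv$ then $v$ would already have received $M$ in an earlier round, so the drop must be the first. Finally, if the $u$-side contained an odd cycle $C'$, the standard odd-cycle bouncing behaviour of Amnesiac Flooding (itself a consequence of the parity analysis behind \cref{thm:balance}) would force a later transmission from $u$ across $uv$, contradicting failure of broadcast.

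The main obstacle is the odd-cycle bouncing step in the forward direction of the broadcast iff: one must argue that a single missing message cannot cancel the extra activity generated by an odd cycle on the $u$-side, so that some subsequent transmission is still guaranteed to cross $uv$. Handling this cleanly requires invoking the full structural content of \cref{thm:balance}, not merely the parity of individual cycles. Concretely, one identifies the FEC obtained by combining the $u$-side odd cycle with either another odd cycle or with the structure on the $v$-side, and uses the invariant on that FEC to show that perpetual message activity is forced on the $u$-side, which must eventually re-send along the bridge and deliver $M$ to $v$.
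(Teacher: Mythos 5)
Your overall strategy is the paper's: non-termination follows from \cref{thm:balance} because the drop creates an imbalance on a cycle through $uv$ (non-bridge case) or on the FEC formed by the two odd cycles and their connecting path (bridge case), and the broadcast characterisation reduces to whether the side of the cut containing $u$ is bipartite. That part of your argument is essentially the paper's proof, spelled out in slightly more detail.

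The genuine problem is your proposed treatment of what you yourself flag as the main obstacle: showing that an odd cycle on the $u$-side forces a later transmission across the bridge. You propose finding an FEC ``combining the $u$-side odd cycle with either another odd cycle or with the structure on the $v$-side'' and using its invariant to force \emph{perpetual} activity on the $u$-side. This cannot work in general: if the $u$-side contains a single odd cycle and the $v$-side is bipartite, there is no FEC through $uv$ at all, the post-drop configuration is balanced, and the execution does terminate --- so no perpetual activity exists to exploit. The correct (and much simpler) observation, which is what the paper uses, is that the dropped message crosses the bridge, so until something returns over $uv$ the $u$-side evolves exactly as fault-free single-source Amnesiac Flooding on the $u$-side component; that component is non-bipartite, hence $u$ is activated a second time by the standard two-wave behaviour of AF on non-bipartite graphs, and the second copy of $(u,v)$ is not dropped. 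No appeal to \cref{thm:balance} is needed for this step. Two smaller loose ends: (i) your ``alternative path'' justification in the non-bridge case ignores that the drop perturbs the entire subsequent execution --- the paper instead argues that the resulting imbalance makes every node receive messages infinitely often, so broadcast succeeds; (ii) in the ``not the first message'' case you only conclude that $v$ is informed, whereas one must show every node is still informed; the paper does this by noting that each node is first informed along a shortest message path, which never reuses an edge, so deleting a second use of $uv$ is harmless.
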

We also consider the possibility of a link/edge failing in one direction (or if we consider an undirected edge as two directed links in opposite directions, only one of the directed links fails).
    \begin{theorem}[Uni-directional link failure]
    \label{thm: directed}
        For any graph $G=(V,E)$ and any initiator set $I\subsetneq V$ there exists an edge $e \in E$ such that a uni-directional link failure at $e$ will cause Amnesiac Flooding to either fail to broadcast or fail to terminate when initiated from $I$ on $G$. Furthermore, for any non-empty set of uni-directional link failures there exists $v \in V$ such that, when Amnesiac Flooding is initiated at $v$, it will either fail to broadcast or fail to terminate.
    \end{theorem}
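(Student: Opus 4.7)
The plan is to reduce both statements to the dichotomy of \cref{thm:balance}, viewing a uni-directional link failure as the deletion of a single transmission from the current message configuration, which flips the parity of messages on every cycle and every FEC through the affected edge. Since unperturbed AF produces balanced configurations throughout (otherwise it would not terminate), any perturbation that leaves at least one cycle or FEC with flipped parity forces non-termination, while any perturbation that isolates a vertex from the message's domain of reachability forces non-broadcast.

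For the first statement, since $G$ is connected and $I\subsetneq V$, there is a boundary edge $e=xy$ with $x\in I$ and $y\notin I$, and in round~$1$ of normal AF this edge carries the transmission $x\to y$. I would split into two cases. Case~(i): $e$ lies on some cycle or FEC of $G$. Failing $x\to y$ in round~$1$ flips the parity on every cycle/FEC through $e$, unbalancing the round-$1$ configuration and thus forcing non-termination by \cref{thm:balance}. Case~(ii): $e$ is a bridge. By choosing $e$ as an outermost boundary bridge under a BFS from $I$ (possible since $I\subsetneq V$), the component of $G-e$ containing $y$ has no initiator; failing $x\to y$ in round~$1$ then prevents $M$ from ever crossing $e$, so $y$'s component never receives $M$ and broadcast fails.

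For the second statement, given a non-empty $F$ I would fix any failure $(u\to v)\in F$ and consider AF initiated at the single vertex~$u$. In round~$1$ the transmission $u\to v$ is dropped, so the resulting post-round-$1$ configuration differs from the unperturbed single-source configuration at~$u$ by the removal of that transmission (together with any other round-$1$ transmissions from $u$ that are blocked by failures in $F$ incident to $u$). Applying the case analysis of Part~$1$ to the edge $uv$ yields either parity imbalance (hence non-termination by \cref{thm:balance}) or severing of $v$'s side of a bridge (hence non-broadcast), unless the other failures incident to $u$ happen to cancel the parity change exactly.

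The main obstacle is ruling out this cancellation in the second statement. I would address it by iterating: if initiating at $u$ does not break AF for $(u\to v)\in F$, try initiating at $v$, and more generally at each endpoint of each failure in $F$. The key observation is that the net parity change induced by $F$ on any given cycle or FEC depends on which failures are exercised, which in turn depends on the initiator; since $F$ is non-empty, there must exist some initiator for which at least one cycle or FEC sees a non-zero net parity change (the cancellation cannot hold for every possible initiator simultaneously). Making this counting argument tight, and in particular extending the single-message characterization of \cref{thm:message drop} to multi-message perturbations, is the technical heart of the second assertion.
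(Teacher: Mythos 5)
Your central move --- treating the uni-directional link failure as the deletion of a single transmission and then invoking \cref{thm:balance} on the resulting round-$1$ configuration --- does not match the fault model, and this is a genuine gap. A uni-directional link failure is \emph{persistent}: the dynamics are $S_{i+1}=A(S_i)\setminus X$ in \emph{every} round, not just round $1$. But \cref{thm:balance} characterizes termination only under the unperturbed operator $A_G$; imbalance of the round-$1$ configuration therefore does not imply non-termination of the faulty dynamics, because each later round in which the failed direction would be exercised deletes another message and flips the relevant parities again. What you have sketched is essentially the proof of \cref{thm:message drop}, where the deletion happens exactly once and the subsequent evolution is unperturbed. The paper does not use \cref{thm:balance} here at all; instead it proves a separate monotone invariant for mixed graphs (\cref{lemma: directed}): on any cycle containing an arc, the number of messages travelling with the arc orientation minus the number travelling against it can never decrease under the faulty dynamics, since external activations, collisions and arc-blockings each preserve or increase this excess. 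A strictly positive excess then certifies non-termination despite the failure recurring. Your edge selection for the first statement also differs and has a hole: the paper cases on whether $G$ has a bridge anywhere (fail it towards the initiator side, giving non-broadcast) or is bridgeless (fail a cycle edge into the cycle vertex nearest the initiator, creating an excess of $+1$), whereas your ``outermost boundary bridge'' choice is unjustified when the far component of every boundary bridge still contains an initiator, and your cycle sub-case rests on the unsound application of \cref{thm:balance} above.

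For the second statement you candidly leave the cancellation problem open, and the proposed fix (iterate over endpoints of failures and count net parity changes) inherits the same difficulty: which failures are exercised depends on the trajectory, which depends on the failures, so ``net parity change on a cycle'' is not a well-defined static quantity. The paper sidesteps all of this with a dichotomy on the resulting mixed graph: either it is not strongly connected, in which case some $u$ cannot reach some $v$ and initiating at $u$ fails broadcast; or some cycle contains an arc $(u,v)$, in which case initiating at $v$ immediately places exactly one message travelling with the arc and none against it on that cycle, and \cref{lemma: directed} yields non-termination. No accounting for interacting failures is needed, because the invariant is monotone under the faulty dynamics themselves rather than being pulled back to the unperturbed dichotomy.
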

We, finally, consider a set of Byzantine nodes, who know the original message, are free to collude among themselves and may decide to forward this message in any arbitrary pattern to their neighbours. However, for a discussion of termination to be meaningful, we require that the nodes have Byzantine behaviour for only a finite number of rounds.
    \begin{theorem}[Byzantine Failure]\label{thm: Byzantine}
        If Amnesiac Flooding on $G=(V,E)$ initiated from $I\subsetneq V$ experiences a \emph{weak Byzantine failure} at $J\subseteq V\setminus I$, then the adversary can force:
        \begin{itemize}
            \item Failure to broadcast if and only if $J$ contains a cut vertex set.
            \item Non-termination if and only if at least one member of $J$ lies on either a cycle or a path between odd-cycles.
        \end{itemize}
    \end{theorem}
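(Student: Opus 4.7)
The plan is to lift the result to the dichotomy of Theorem~\ref{thm:balance}, which characterises termination by the parities of in-flight messages around cycles and FECs. Because the weak Byzantine phase lasts finitely many rounds, every execution decomposes into an \emph{adversarial phase} followed by a \emph{correct phase} starting from some post-Byzantine configuration $\mathcal{C}^{\ast}$. I therefore reduce both parts of the statement to: which configurations $\mathcal{C}^{\ast}$ can the adversary realise, and what does Theorem~\ref{thm:balance} then say about termination and about which nodes have learned $M$.

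For the broadcast characterisation, the ``if'' direction is direct: if $J$ contains a cut set separating $I$ from a component $C$ of $G-J$, the adversary instructs every $j\in J$ to absorb all incoming messages and transmit nothing for more than the fault-free diameter of $G-J$ many rounds. The $I$-side then executes fault-free Amnesiac Flooding on $G-J$ and terminates with $M$ never crossing into $C$; strict statelessness leaves $J$ permanently silent after the adversarial phase, so $C$ stays uninformed. For the converse, if every component of $G-J$ meets $I$, the restriction of the execution to the good subgraph is fault-free AF on that subgraph and delivers $M$ to every vertex of $V\setminus J$. Each $j\in J$ has a good neighbour whose AF execution forces it to forward $M$ to $j$ at some round, and I use Theorem~\ref{thm:balance} to rule out the possibility that the good-side configuration stabilises before this transmission occurs.

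For the termination characterisation, the ``if'' direction is adversarial construction: if $j\in J$ lies on a cycle $C$, a single Byzantine round in which $j$ sends $M$ along exactly one of its two $C$-edges flips the parity of $M$-messages around $C$, and by Theorem~\ref{thm:balance} this parity imbalance precludes termination; the same idea applied to a single edge of the connecting path works if $j$ lies on a path between two disjoint odd cycles, creating an odd FEC parity. For the ``only if'' direction, if no $j\in J$ lies on a cycle or on a path between odd cycles, every $j\in J$ is connected to every cycle and every FEC of $G$ only via bridges, so every message inserted or suppressed by $J$ travels through a tree fragment before reaching any cycle. I argue that such a ripple contributes an even number of messages to each cycle and each FEC, preserving the parity invariants of Theorem~\ref{thm:balance}; the post-Byzantine $\mathcal{C}^{\ast}$ is therefore in the terminating class.

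The main obstacle is the ``only if'' direction of the termination part: formalising that every sequence of Byzantine actions at tree-separated nodes preserves every cycle and every FEC parity. This demands a careful invariant that tracks each injected or suppressed message along its tree path and uses the bridge separation to show it arrives at any cycle or FEC only in parity-neutral pairs. A related but simpler invariant is needed for the broadcast ``only if'' step, where one must rule out the good-side flood terminating before it has delivered $M$ to at least one good neighbour of every $j\in J$.
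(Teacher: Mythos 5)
Your termination characterisation follows essentially the same route as the paper: the finite Byzantine phase reduces everything to whether the adversary can leave an imbalanced configuration in the sense of \cref{thm:balance}, a node on a cycle or on a path between odd cycles can create an imbalance with a single asymmetric send or suppression, and a node separated from every cycle and FEC by bridges can only externally activate (or deny activation to) those structures, which preserves balance. That part is sound and matches the paper's argument.

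The broadcast ``only if'' direction, however, contains a genuine gap. You assert that when $J$ contains no cut vertex set, ``the restriction of the execution to the good subgraph is fault-free AF on that subgraph.'' This is false: a weak-Byzantine node may \emph{inject} messages into $G-J$ at adversarially chosen times, and such injections change the dynamics on the good subgraph --- an injected message can reach a node $w$ of $G-J$ early, cause $w$ to forward in a non-AF pattern relative to the unperturbed flood, and in particular collide head-on with the legitimate flood along the very path it needs to traverse. So you cannot conclude that every vertex of $V\setminus J$ is informed by appealing to correctness of AF on $G-J$, and the concern you flag (the good-side flood quiescing too early) is not the real difficulty. The paper's proof confronts the injection issue directly: fix an uninformed target $v$, take a shortest path $P$ from the initiator to $v$ in $G\setminus J$ (which exists since $J$ is not a cut set and contains no vertex of $P$), and observe that the only way to stop the message on $P$ is to send a colliding message in the opposite direction; but the first node $w$ of $P$ reached by such a blocker is not yet informed, hence forwards the blocker's message in \emph{both} directions along $P$, so the blocking message itself becomes a message heading to $v$. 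Iterating, every attempted block produces a message that reaches $v$ sooner, so no strategy prevents broadcast. Your proposal needs this (or an equivalent) argument; without it the converse does not go through.
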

Two natural corollaries of the Single Message Failure Theorem we wish to highlight here are: (i) on any network from any initial node there exists a single message, the dropping of which, will produce either non-termination or non-broadcast and (ii) dropping any message on a bipartite network will cause either non-termination or non-broadcast.
Similarly, the Byzantine Failure Theorem implies that any Byzantine set containing a non-leaf node on any network can force either non-termination or non-broadcast for Amnesiac Flooding with any initial set.\\

\anew{
\textbf{Organisation of the paper:} The initial part of the paper presents the required technical and motivational background, statements of our results and a technical outline of the more interesting proofs. The detailed proofs are presented in the appendix. Related work is presented in Section~\ref{sec: relatedwork}. Section~\ref{sec: Model} presents the model and notation required for the following technical sections. \emph{Uniqueness} of the algorithm is discussed in Section~\ref{sec: Uniqueness} with full proofs given in appendix~\ref{apx: Uniqueness}. We relax the conditions  individually to derive some more stateless protocols in Section~\ref{subsec: relaxation} with the detailed proofs in appendix~\ref{apx: relaxed}. We present our work on the \emph{fault sensitivity} of amnesiac flooding in Section~\ref{sec: faults} with the detailed proofs in appendix~\ref{apx: faults}. However, for technical reasons, we give the full proof of ~\cref{thm:balance} earlier in appendix~\ref{apx: balance}. We end with our conclusions and pointers to future work in Section~\ref{sec: conclusions}.
}

\section{Related Work}
\label{sec: relatedwork}
The literature surrounding both the broadcast problem and fault sensitivity is vast, and a summary of even their intersection is well beyond the scope of this work. Instead, we shall focus on work specifically concerning stateless, or nearly stateless broadcast.

The termination of Amnesiac Flooding and its derivatives has been the focus of several works. In combination they provide the following result that Amnesiac Flooding terminates under any sequence of multi-casts, i.e.
\begin{theorem}[Termination of Amnesiac Flooding (adapted from \cite{HT-AFCases-Arxiv20, hussak2023termination, turau2021amnesiac})]
    \label{Thm: termination}
    For $G=(V,E)$ a graph and $I_1,...,I_k \subseteq V$ a sequence of sets initiator nodes, Amnesiac Flooding on $G$ terminates when initiated from $I_1$ in round $1$, then $I_2$ in round $2$ and so forth.
\end{theorem}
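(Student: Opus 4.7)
The plan is to reduce the multi-wave execution on $(G, I_1, \ldots, I_k)$ to a single-source, single-round execution on an enlarged graph $G^*$, and then invoke the single-source termination of Amnesiac Flooding as a black box. Concretely, I would construct $G^*$ by attaching to each vertex $v \in I_t$ (for $t = 1, \ldots, k$) a dedicated ``trigger gadget'' --- a small tree or path whose pendant leaves collectively form a new single initiator set $I^* \subseteq V(G^*) \setminus V(G)$. The gadget for $I_t$ is calibrated (in length and branching) so that the first Amnesiac Flooding message reaches $v$ precisely at round $t-1$, forcing $v$ to send $M$ to all of its $G$-neighbours in round $t$, which exactly matches the behaviour of an initiator activating in round $t$ in the original multicast schedule.

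Assuming the simulation is faithful, the single-source termination of AF on $(G^*, I^*)$, whose diameter is bounded by $\operatorname{diam}(G) + O(k)$, delivers termination on $G$ after finitely many additional rounds. The base ingredient here is the $k=1$ termination result of~\cite{hussak2023termination}, proved via a parity argument that pairs oppositely oriented messages travelling along walks of complementary parities. The overall strategy --- single-cast termination combined with gadget-based simulation --- mirrors the auxiliary-graph technique of~\cite{turau2021amnesiac} together with the case analyses of~\cite{HT-AFCases-Arxiv20}, and it is precisely this combination that the theorem statement attributes to those three works.

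The main obstacle is ensuring \emph{fidelity} of the simulation. Because Amnesiac Flooding's complement rule causes messages to reflect at internal nodes, an ill-chosen trigger gadget can leak spurious messages back into $G$ at the wrong round, corrupting the intended execution trace. The fix requires gadgets whose internal AF dynamics either terminate in isolation strictly before the next injection round, or whose residual messages arriving at $v$ are absorbed (cancelled) by the AF rule at $v$ itself --- exploiting the pairing structure of AF on tree-like gadgets so that reflections telescope. Verifying this cancellation uniformly across an arbitrary multicast schedule is the technically delicate step; once it is established for each trigger gadget in isolation, global termination follows directly from the single-source termination applied to $G^*$.
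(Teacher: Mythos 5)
First, note that the paper does not actually prove this statement: it is imported wholesale from the cited works (the theorem header says ``adapted from'') and is used as a black box, e.g.\ to get the implication $(2)\Rightarrow(1)$ in the extended dichotomy. The paper's own machinery would in fact yield it almost for free --- the empty configuration is balanced, the lemma in Appendix~\ref{apx: balance} shows that $A_{I,G}$ preserves balance for \emph{any} initiator set $I$, and \cref{thm:balance} says balanced configurations terminate --- with no graph surgery required. Your reduction is therefore a genuinely different route, but as written it has a real gap.

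The gap is precisely the ``fidelity'' step you defer. Two concrete failure modes are left unresolved. (i) To get a \emph{single} source you must join the trigger gadgets into one tree; but once $v\in I_t$ later receives messages from its $G$-neighbours, the AF complement rule forces $v$ to forward \emph{into} the gadget, and in a shared trigger tree that leaked message can propagate to a sibling branch and spuriously activate some $v'\in I_{t'}$ at the wrong round. If instead you keep disjoint pendant paths (which do absorb upward leaks harmlessly at their far leaves), your initiator set $I^*$ is no longer a singleton and your black box is multi-source single-round termination --- a weaker base, but one you must then state and cite explicitly. (ii) More seriously, when a vertex carries several triggers (it lies in $I_{t_1}$ and $I_{t_2}$, or it is reached by the $I_{t_1}$ flood before its own trigger arrives), the message $v$ reflects into the second gadget travels toward the incoming trigger message; depending on the parity of their separation they meet at a node and annihilate, so the round-$t_2$ activation never happens and the simulation diverges from the prescribed multicast schedule. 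Your proposed fix --- that reflections ``telescope'' and cancel --- is exactly the claim that needs proof, and no gadget calibration is exhibited for which it holds uniformly over arbitrary schedules. Until that is done, the argument establishes termination only for schedules in which no initiator vertex interacts with more than one wave, which is far from the general statement.
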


Two independent proofs of the algorithm's termination have been presented, using either parity arguments over message return times~\cite{hussak2023termination} or axillary graph constructions~\cite{turau2021amnesiac}. The latter technique has further been used to establish tight diameter independent bounds on the termination of multicast using Amnesiac Flooding, complementing the eccentricity based bounds of~\cite{hussak2023termination}. The techniques we develop in this work, however, are more closely aligned with those of~\cite{HT-AFCases-Arxiv20}, as we exploit a similar notion to their "even flooding cycles" in our message path argument. In contrast to that work we focus on arbitrary configurations of messages, rather than just those resulting purely from a correct broadcast which allows us to obtain a much stronger characterisation. Combining our techniques with the dual "reverse" flood introduced by~\cite{HT-AFCases-Arxiv20}, we are able to show the complement of \cref{Thm: termination}, that only those configurations reachable from a sequence of multi-casts lead to termination.

There have, further, been multiple variants of Amnesiac Flooding introduced. It was observed by~\cite{turau2020stateless} in a result reminiscent of the BASIC protocol proposed by~\cite{gopal1999fast}, that sending a second wave of messages from a subset of the initial nodes could reduce the worst case $2D+1$ termination time to the optimal $D+1$ in all but a specific subset of bipartite graphs. We note that our fault sensitivity results extend naturally to this algorithm as well, as the same invariants apply to this setting. Beyond this, there have been several approaches to deal with the flooding of multiple messages simultaneously. In~\cite{HT-AFCases-Arxiv20}, the original authors of~\cite{HussakT-AF-PODC19} show that under certain conditions termination can be retained, even when conflicting floods occur. Since then, two partially stateless algorithms have been proposed, both making use of message buffers and a small amount of local memory~\cite{turau2021synchronous, bayramzadeh2021weak}. We will not be directly concerned with these approaches, however, as we assume a single concurrent broadcast throughout. However, the mechanism employed in~\cite{turau2021synchronous} should be highlighted as it rather cleverly exploits the underlying parity properties we identify as driving termination. Furthermore, as reduction to Amnesiac Flooding is used as a technique for proofs in many of these works, the comprehensive understanding of its termination we present here could prove a powerful tool for future work in these areas.

While the robustness of Amnesiac flooding and its variants have been previously studied, this has been focused on two forms of fault. The first is the disappearance and reappearance of nodes and links. The termination of Amnesiac Flooding is robust to disappearance and vulnerable to reappearance~\cite{hussak2023termination}. We will observe that  this is a necessary consequence of the invariants driving termination and their relation to cyclicity. In particular, the disappearance of nodes and links cannot form new cycles violating the invariant, whereas their reappearance can. A pseudo-stateless extension to Amnesiac Flooding has been proposed to circumvent this~\cite{turau2021synchronous}, implicitly exploiting the parity conditions of~\cite{hussak2023termination}. The second are faults that violate synchrony. Under a strong form of asynchrony, truly stateless and terminating broadcast is impossible~\cite{turau2020stateless}. However, the landscape under a weaker form of asynchrony (namely, the case of fixed delays on communication links) is more fine-grained. Although termination results have been obtained for cycles, as well as the case of single delayed edges in bipartite graphs~\cite{hussak2023termination}, there is no clear understanding of the impact of fixed channel delays. While we do not directly address this, we believe that techniques mirroring our invariant characterizations may prove fruitful in this area. To our knowledge, this work is the first to consider both the uniqueness of Amnesiac Flooding, as well as its fault sensitivity beyond node/link unavailability in a synchronous setting.

Beyond Amnesiac Flooding and its extensions, the role of memory in information dissemination is well studied in a variety of contexts. Frequently, stateful methods obtain faster termination time, such as in the phone-call model where the ability to remember one's communication partners and prevent re-communication dramatically improves termination time and message efficiency~\cite{RandomisedBroadcastFirst,RandomisedBroadcastFinal,SocialNetworksSpreadRumors}. 
Similarly, for bit dissemination in the passive communication model the addition of only $\log{\log{n}}$ bits of memory is sufficient to break the near linear time convergence lower bound of~\cite{LimitsofInformationSpread} and achieve polylogarithmic time~\cite{korman2022early}. 
Even more strongly, a recent work~\cite{MemoryLowerBoundsDISC} has shown that in the context of synchronous anonymous dynamic networks, stabilizing broadcast from an idle start is impossible with $O(1)$ memory and even with $o(\log{n})$ memory if termination detection is required.
Despite this, low memory and even stateless broadcasts remain desirable~\cite{gopal1999fast}. 
The possibility of solving other canonical distributed computing problems beyond broadcast, in a stateless manner, remains intriguing. In this direction, various low memory stateful models have been proposed to handle more complex distributed problems e.g. the compact local streaming (CLS) model in~\cite{CastanedaLT20-CLS-ICDCN20} with deterministic solutions (routing, self-healing fault-tolerance etc.) and randomised solutions for distributed colouring in a similar model~\cite{Flin0HKN23-LowMemColouring-SPAA23}. 

Several stateless broadcast schemes exist. To give an example, they are used in mobile ad hoc networks, which, due to the lower power and rapid movement of devices, see diminishing returns from maintaining information about the network~\cite{manfredi2011understanding}. 
However, given a lack of synchronisation as well as the wish to avoid so-called broadcast storms~\cite{ni1999broadcast}, these techniques typically rely on either some form of global knowledge (such as the direction or distance to the initiator) or the ability to sample network properties by eavesdropping on communications over time~\cite{manetsArentReallyStateless,ruiz2015survey}.
It should be noted however, that in contrast to many models, such as anonymous dynamic networks, radio networks and many manets, the typical framework for studying stateless flooding ("true statelessness" as defined by~\cite{turau2020stateless} restricting the model of~\cite{dolev2017stateless}) permits the knowledge and distinguishing of neighbours in both broadcasting and receiving.

% Similarly, the resilience of stateful broadcast methods is well studied.
% <Random message adversaries> <Message dropping><Amitabh?>

\section{Model and Notation}
\label{sec: Model}
Throughout this work we consider only finite, connected graphs on at least two nodes.
We denote the set $\{1,...,x\}$ by $[x]$ and $R(r,s)$ the Ramsey number such that any graph of size $R(r,s)$ contains either a clique on $r$ vertices or an independent set on $s$ vertices. In this work, we make use of a generic synchronous message passing model with several additional assumptions based on the truly stateless model of \cite{turau2020stateless}. In particular, nodes cannot maintain any additional information between rounds (such as routing information, previous participation in the flood or even a clock value), cannot hold onto messages and can only forward, not modify the messages.\\
%In this work, we make use of a generic synchronous message passing model with several additional assumptions based on the truly stateless model of \cite{turau2020stateless}. In particular, nodes cannot maintain any additional information between rounds (such as routing information, previous participation in the flood or even a clock), cannot hold onto messages and can only forward, not modify the messages.\\
Unless stated otherwise we do not assume that nodes have access to unique identifiers, however they have locally labelled ports that are distinguishable and totally orderable for both receiving and sending messages. When we do work with identifiers these identifiers are assumed to be unique and drawn from $[|V|+\kappa]$ for $\kappa>0$ a constant. However, we assume that individual nodes have access to arbitrarily powerful computation on information they do have.\\
We are principally interested in the problem of \textit{broadcast}, although we will occasionally consider the related \emph{multicast} problem \anew{i.e. there are multiple initiators who may potentially even wake up in different rounds with the same message to be broadcast}. For a graph $G=(V,E)$ and an \textit{initiator} set $I\subseteq V$ we say that a node is informed if it has ever received a message from a previously informed node (where initiators are assumed to begin informed). An algorithm \textit{correctly} solves broadcast (resp. multicast) on $G$ if for all singleton (resp. non-empty) initiator sets there exists a finite number of rounds after which all nodes will be informed. Unless specified otherwise, we assume that initiator nodes remain aware of their membership for only a single round. We say that an algorithm \textit{terminates} on $G=(V,E)$ if, for all valid initiator sets, there exists a finite round after which no further messages are sent (i.e. the communication network quiesces).\\
\anew{Formally, for the message $M$, we denote a \textit{configuration} of Amnesiac Flooding as follows:
\begin{definition}
A configuration of Amnesiac Flooding on graph $G=(V,E)$ is a collection of \textit{messages/edges} %\ATside{Should that be edges?} 
$S\subseteq \{(u,v)| uv \in E\}$ where $(u,v) \in S$ implies that in the current round $u$ sent a message to $v$.
\end{definition}
%on a graph $G=(V,E)$ by a collection of \textit{messages} $S\subseteq \{(u,v)| uv \in E\}$ where $(u,v) \in S$ implies that in the current round $u$ sent a message to $v$. 
Further, for $H$ a subgraph of $G$, we denote by $S_H$, $S$ restricted to $H$. We will refer to $u$ as the \emph{head} of the message and $v$ as its \emph{tail}.
Below, we define the operator $A_{I,G}:2^{V^2}\to 2^{V^2}$ to implement one round of Amnesiac Flooding on the given configuration:
\begin{definition} %\ATside{Fix this!}
%We then define the operator $A_{I,G}:2^{V^2}\to 2^{V^2}$ to implement one round of Amnesiac Flooding on the given configuration, where all nodes in $I$ become externally informed. 
The operator $A_{I,G}:2^{V^2}\to 2^{V^2}$ is defined as follows: The set $I$ of nodes initiate the broadcast. The message $(u,v) \in A_{I,G}(S)$ if  $uv \in E$, $(v,u)\notin S$ and either $\exists w \in V: (w,u) \in S$ or $u \in I$.
\end{definition}

We will drop the subscript when $G$ is obvious from context and $I=\emptyset$. The multicast termination result of \cite{hussak2023termination} in this notation can be expressed as follows:
\begin{lemma}[Multicast Amnesiac Flooding]
    For any graph $G=(V,E)$, and any finite sequence $I_1,...,I_k\subseteq V$, there exists $m \in \mathbb{N}$ such that $$A^m_{\emptyset,G}(A_{I_k,G}(...A_{I_1,G}(\emptyset)))=\emptyset.$$
\end{lemma}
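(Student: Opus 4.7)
The plan is to reduce this lemma directly to the already-cited multicast termination result (\cref{Thm: termination}), since the statement here is essentially a reformulation of that theorem in the newly introduced operator notation. The only real content is to verify that the operator $A_{I,G}$ faithfully encodes one round of Amnesiac Flooding as described in \cref{def: AF}.

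First, I would unpack the definition of $A_{I,G}$ and check that it agrees round-for-round with the protocol. A message $(u,v)$ is placed in $A_{I,G}(S)$ exactly when $uv\in E$, $(v,u)\notin S$, and either $u$ received a message from some neighbour ($\exists w:(w,u)\in S$) or $u$ is a fresh initiator ($u\in I$). This matches the two-clause rule in \cref{def: AF}: an informed node in the current round forwards to every neighbour from which it did not just receive, and initiators fire on their designated round. Hence the composed configuration $S_j := A_{I_j,G}(A_{I_{j-1},G}(\cdots A_{I_1,G}(\emptyset)))$ is precisely the multiset of messages in transit after round $j$ of multicast Amnesiac Flooding with source set $I_j$ entering at round $j$.

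Second, after round $k$ no new initiators join, so the evolution becomes $S_{k+t}=A^t_{\emptyset,G}(S_k)$, which is ordinary (non-initiating) Amnesiac Flooding continuing from the configuration $S_k$. By \cref{Thm: termination}, multicast Amnesiac Flooding on any graph under any finite sequence of initiator sets terminates, i.e.\ there is some round after which no further messages are sent. Translating termination into our configuration language: there exists $m\in\mathbb{N}$ with $S_{k+m}=A^m_{\emptyset,G}(S_k)=\emptyset$, which is exactly the claimed identity.

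I do not expect any real obstacle here, since the cited theorem is doing all the heavy lifting (via the parity arguments of \cite{hussak2023termination} or the auxiliary graph construction of \cite{turau2021amnesiac}); the only care needed is in the bookkeeping step showing that the operator definition matches the protocol's sending rule exactly, in particular the asymmetry that prevents a node from echoing a message back along the same edge ($(v,u)\notin S$) and the treatment of initiators as a one-shot injection at the round they enter, consistent with the convention that ``initiator nodes remain aware of their membership for only a single round'' stated in \cref{sec: Model}.
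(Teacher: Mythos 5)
Your proposal is correct and matches the paper's approach exactly: the paper offers no separate proof of this lemma, presenting it simply as the known multicast termination result (\cref{Thm: termination}, adapted from the cited prior work) re-expressed in the operator notation, which is precisely the reduction you carry out. The only thing you add is the explicit round-for-round bookkeeping verifying that $A_{I,G}$ encodes the protocol's sending rule, which the paper leaves implicit.
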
}

\section{Uniqueness}
\label{sec: Uniqueness}
In this section, we investigate broadcast protocols similar to Amnesiac Flooding and establish four desirable properties that Amnesiac Flooding uniquely satisfies in combination. On the other hand, we show that this result is sharp and that by relaxing any of these conditions one can obtain similar terminating broadcast protocols.
\subsection{Uniqueness}
\label{subsec: Uniqueness}
% As there have been several attempts to obtain terminating broadcast protocols outside of the fault-free synchronous setting while maintain Amnesiac Flooding's desirable properties, a natural question is whether other such algorithms even exist in the original setting.
% Our first major result concerns the uniqueness of Amnesiac Flooding, given the known sensitivity of Amnesiac Flooding to even minor faults and instances of asynchrony, a natural question is whether there exists a more robust protocol which preserves its desirable properties.
Our first major result concerns the uniqueness of Amnesiac Flooding. Given the algorithms surprising properties, a natural question is whether other broadcast algorithms exist maintaining these properties. Specifically, does there exist a terminating protocol for broadcast which obeys all of the following for all graphs and valid port labellings:
    \begin{enumerate}
        \label{obs:properties}
        \item  \StrongTrueStatelessness:
        %\StrongTrueStatelessness:
        Nodes maintain no information other than their port labellings between rounds. This includes whether or not they were in the initiator set.
        \item \Obliviousness: Routing decisions may not depend on the contents of received messages.
        \item \Determinism: All decisions made by a node must be deterministic.
        \item \Bandwidth: Each node may send at most one message per edge per round.
    \end{enumerate}

The answer is strongly negative. In fact there is no other terminating broadcast protocol preserving these properties at all. We will actually prove the slightly stronger case, that this holds even if agents are provided with unique identifiers and are aware of the identifiers of their neighbours. Intuitively, the \StrictStatelessness{} condition forces any broadcast protocol to make its forwarding decisions based only on the messages it receives in a given round.
The combination of \Obliviousness{} and \Bandwidth{} forces any protocol meeting the conditions to view messages as atomic.
Finally, \Determinism{} forces the protocol to make identical decisions every time it receives the same set of messages.
Formally, any broadcast protocol meeting the four conditions must be expressible in the following form:
\begin{definition}
\label{def: functions}
    A \emph{protocol} $P=(b,f)$ is a pair of functions, an \emph{initial function} $b$ and a \emph{forwarding function} $f$, where $b:\mathbb{N}\times 2^{\mathbb{N}}\to 2^{\mathbb{N}}$ and $f:\mathbb{N}\times 2^{\mathbb{N}}\times 2^\mathbb{N}\to 2^\mathbb{N}$. 
    The protocol is implemented as follows. On the first round the initiator node $s$ with neighbourhood $N(s)$ sends messages to $b(s,N(s))$.
    On future rounds, each node $u$ sends messages to every node in $f(u,N(u), S)$ where $S$ is the set of nodes they received messages from in the previous round.
    Further, we require that $b(u,N(u)), f(u,N(u),S)\subseteq N(u)$ and that $f(u,N(u),\emptyset)=\emptyset$, enforcing that agents can only communicate over edges of the graph and can only forward messages they have actually received respectively.\\
    If there exists a graph $G=(V,E)$ with a unique node labelling and initiator node $s$ such that protocols $P$ and $Q$ send different messages in at least one round when implementing broadcast on $G$ initiated at $s$, then we say that $P$ and $Q$ are \emph{distinct}.
    Otherwise, we consider them the same protocol.
\end{definition}
In this setting, achieving broadcast is equivalent to every node receiving a message at least once and terminating in finite time corresponds to there existing a finite round after which no messages are sent.
For example, Amnesiac Flooding is defined by the following functions:
\begin{definition}[Amnesiac Flooding Redefinition]
    $P_{AF}=(b_{AF},f_{AF})$, where $b(u,S)=S$ and $f(u,S,T)=S\setminus T$ if $T\neq \emptyset$ and $\emptyset$ otherwise.
\end{definition}
For a protocol $P=(b,f)$, a set $S \subset \mathbb{N}$ and a degree $k\in \mathbb{N}$ we describe $P$ as AF up to degree $k$ on $S$ if there is no graph of maximum degree $k$ or less labelled with only members of $S$ which distinguishes $P$ from $P_{AF}$.
From here on we will assume that all unique labels are drawn from $[n+\kappa]$ where $\kappa$ is a sufficiently large constant and $n$ is the number of nodes in the graph. We obtain the following result, discussion and sketching the proof of which makes up the remainder of this section, provided $\kappa\geq R(9,8)$.
\begin{theorem}
\label{thm: AF is unique}
    Let $P=(b,f)$ be a correct and terminating broadcast protocol defined according to definition \ref{def: functions}, then $P$ is not distinct from Amnesiac Flooding.
\end{theorem}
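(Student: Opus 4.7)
The plan is to show that $P=(b,f)$ agrees with $(b_{AF}, f_{AF})$ on every realisable input, which is equivalent to being not distinct from Amnesiac Flooding. I would proceed in three stages of increasing generality, starting from the smallest neighbourhood sizes and then bootstrapping.

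First I would pin down the degree-one forwarding $f(u,\{v\},\{v\})$. On the single edge $uv$ initiated at $v$, termination forces at least one of $f(u,\{v\},\{v\})$ and $f(v,\{u\},\{u\})$ to equal $\emptyset$; the remaining asymmetry is broken by embedding the edge into a 3-path $u-v-w$ initiated at $u$, where the need to inform $w$ combined with termination forces $f(x,\{y\},\{y\}) = \emptyset$ universally. With this in hand I would establish $b(u,S)=S$: if some $v_0 \in S_0 \setminus b(u_0, S_0)$ existed, the star with centre $u_0$ and leaves $S_0$, initiated at $u_0$, would leave $v_0$ permanently uninformed, since by the degree-one result no leaf ever sends back to $u_0$, so $u_0$ never receives and never re-sends; broadcast fails. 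Next, by short-gadget arguments on 3-paths, triangles and 4-cycles, I would pin down $f$ on low-degree inputs in a ``label-oblivious'' regime: every candidate value of $f(v,\{u,w\},\{u\})$ other than $\{w\}$ is ruled out by either broadcast failure (sending nothing or only back to $u$) or a persistent non-terminating pattern on an attached cycle (sending to both), invoking the cycle-parity invariants that will be formalised in \cref{sec: faults}.

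The third stage --- and the main obstacle --- extends the gadget arguments to arbitrary labellings via Ramsey. A priori, $f(u,N(u),T)$ could depend on the actual identifiers in $N(u)$ and $T$ in arbitrarily complicated ways. For each port-arity $d$, I would colour ordered $d$-tuples of labels from $[n+\kappa]$ by the combinatorial type of $f$'s output (an indicator vector for which input labels appear), and apply Ramsey's theorem in the form $R(9,8)$ to extract a label subset on which $f$ behaves label-obliviously at the relevant arities. On this canonical subset the earlier gadget analysis directly forces $f = f_{AF}$. Finally, since the $n$ labels of any graph may be drawn adversarially from $[n+R(9,8)]$, any purportedly deviating local configuration can be embedded into a graph whose remaining labels are canonical --- converting a single-node deviation into a global failure on an admissible labelling and contradicting either correctness or termination.

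The hardest step will be this Ramsey-based extension, which has to balance a label budget exceeding $n$ by only an additive constant $R(9,8)$ against gadgets that must isolate a single node's forwarding decision, so that no other nodes' choices can ``rescue'' broadcast or termination. Aligning the Ramsey parameters precisely with the combinatorial types realisable by $f$ at each arity --- and in particular ruling out ``cancelling'' deviations at distinct nodes within the same gadget, so that the local deviation genuinely survives as a global violation of the \cref{sec: faults} invariants --- is where I expect most of the technical work to live.
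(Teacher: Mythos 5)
Your overall architecture (small gadgets at low degree, a Ramsey step to tame label-dependence, then an extension to all labels within the tight budget $[n+\kappa]$) matches the paper's, and you correctly identify where the difficulty lives. But there are two concrete gaps.

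First, your opening step is false: the $3$-path does \emph{not} force $f(x,\{y\},\{y\})=\emptyset$ universally. On the path $u\,v\,w$ initiated at $u$, the asymmetric behaviour $f(w,\{v\},\{v\})=\{v\}$, $f(u,\{v\},\{v\})=\emptyset$, with $v$ forwarding the bounced message back towards $u$, both broadcasts and terminates; more generally \textsc{Parrot Flooding} shows that leaf-bounce behaviour is compatible with termination from non-leaf initiators, so no small path or star gadget can exclude it. This is precisely why the paper's \cref{lemma: ramsey} does something weaker: it defines the relation $u>_*v$ (``$u$ bounces back to $v$''), shows via a $P_6$ argument that two disjoint configurations $u>_*v<_*w$ and $x>_*y<_*z$ cannot coexist, and then applies $R(8,c')$ to the resulting digraph (a tournament on $8$ vertices would contain the forbidden pattern because $K_4$ cannot be oriented without a vertex of in-degree two) to extract only a \emph{subset} $T$ of labels on which no bouncing occurs. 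Even then, \cref{lemma:degree 2} must discard three further labels. Your derivation of $b(u,S)=S$ rests on the false universal claim, and the universal exclusion of bouncing is only recovered much later, in \cref{lemma: id induction}, via the extended paw graph padded with a path carrying all unused labels --- which is exactly the device needed to respect the $[n+\kappa]$ constraint you flag.

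Second, your Ramsey step as described --- colouring ordered $d$-tuples by the combinatorial type of $f$ at each arity $d$ and extracting label-obliviousness at all relevant arities --- cannot fit inside a budget of $\kappa=R(9,8)$ extra labels: the number of types and the required Ramsey numbers grow with $d$, while $\kappa$ is a fixed constant independent of $n$. The paper applies Ramsey exactly once, at arity $1$, then bootstraps to degrees $2$ and $3$ with explicit sub-cubic gadgets (\cref{lemma: degree 2 to 3}), extends to all labels at degree $\le 3$ by induction (\cref{lemma: id induction}), and finally handles arbitrary degree by a second induction in which the sole high-degree node is fed through binary-tree gadgets whose internal nodes are all of degree $\le 3$ and hence already forced to run Amnesiac Flooding. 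That degree-reduction device is the missing ingredient in your third stage; without it, pinning down $f(u,S,T)$ for large $|S|$ has no source of leverage.
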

\begin{proof}[Proof sketch for \cref{thm: AF is unique}]
The basic argument is to show that any correct and terminating broadcast protocol that meets the criteria is identical to Amnesiac Flooding.
Our core technique is to construct a set of network topologies such that any policy distinct from Amnesiac Flooding fails on at least one of them.\\
For any given protocol we derive a directed graph (separate from the network topology) describing its behaviour and demonstrate via a forbidden subgraph argument that any set of IDs of size $R(x,8)$ must contain a subset $T$ of size at least $x$ that do not respond to each other as leaf nodes, i.e. $P$ is $AF$ up to degree $1$ on $T$.
We take $\kappa=R(9,8)$ and show that there must then exist $U\subseteq T$ containing at least $6$ identifiers such that $P$ is AF up to degree $2$ on $U$.
% Through the use of a combinatoric argument we establish that for $\kappa\geq R(9,8)$, there must exist a set $T$ of at least $9$ identifiers with the following property: for all $u,v \in T$ if $u$ has only the neighbour $v$, upon receiving a message $u$ will not return the message to $v$. 
By constructing a set of small sub-cubic graphs, we are able to extend this to degree $3$.\\
These form the base case of a pair of inductive arguments. First, we construct a progression of sub-cubic graphs which enforce that if $P$ is AF on $[m]$ up to degree $3$ it must be AF on $[m+1]$ up to degree $3$.
We then construct a family of graphs which have a single node of high-degree, while all other nodes have a maximum degree of $3$ and so must behave as though running Amnesiac Flooding.
These graphs permit a second inductive argument showing that this unique high degree node must also behave as if running Amnesiac Flooding.
In combination, these two constructions enforce that $P$ behaves like Amnesiac Flooding in all possible cases.
The full proof is given in appendix \ref{apx: Uniqueness}.
\end{proof}

\subsection{Relaxing the constraints}
\label{subsec: relaxation}
\label{Relaxing}
Despite the uniqueness established in the previous subsection, we are able to derive four relaxed algorithms distinct from Amnesiac Flooding each obeying only three of the four conditions.
\begin{theorem}
    \label{thm: Relaxation}
    There exist correct and terminating broadcast algorithms distinct from Amnesiac Flooding on infinitely many networks obeying any three of \StrongTrueStatelessness, \Obliviousness, \Determinism{} and \Bandwidth{}.
\end{theorem}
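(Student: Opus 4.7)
The plan is to exhibit four explicit protocols, one for each relaxation, as already enumerated in the contributions (\textsc{Neighbourhood-2 Flooding}, \textsc{Random Flooding}, \textsc{1-Bit Flooding (Message Dependent)}, and \textsc{1-Bit Flooding (High Bandwidth)}), and then for each show three things: (i) it satisfies the three relevant properties of \StrictStatelessness, \Obliviousness, \Determinism, \Bandwidth, but not the fourth; (ii) it is a correct terminating broadcast; and (iii) it disagrees with Amnesiac Flooding on an infinite family of graphs. The overall structure is thus a disjunction of four independent constructive proofs, each one yielding a separate witness for the theorem.

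For each candidate I would first verify the three surviving conditions, which is essentially inspection of the algorithm's local rules, and then reduce correctness and termination to known facts about Amnesiac Flooding. Concretely, \textsc{Neighbourhood-2 Flooding} can be implemented by, e.g., letting a node that from its $2$-neighbourhood recognises the graph as a star perform a single-round broadcast from the centre (and otherwise simulating $P_{AF}$); correctness and termination are then immediate on stars and inherited from \cref{Thm: termination} elsewhere. \textsc{1-Bit Flooding (Message Dependent)} uses the fact that the initiator prepends a single read-only bit indicating whether it is a leaf; if the bit is off, leaves return the message once, after which all internal nodes behave exactly as Amnesiac Flooding on the bounce; termination then follows by invoking \cref{Thm: termination} on the shifted multicast $I_1$ followed by the one-round rebroadcast from the leaves. \textsc{1-Bit Flooding (High Bandwidth)} is the same algorithm but encodes the flag by sending one or two copies of the message along the chosen edges, so correctness and termination are inherited verbatim.

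The main obstacle is \textsc{Random Flooding}, because termination is non-deterministic and correctness must be established with probability one. The plan is to show almost-sure termination by coupling: at every round there is a uniform positive probability $p > 0$ that every active node independently selects the Amnesiac Flooding action; conditional on any fixed configuration, a sufficiently long window of consecutive ``AF'' rounds simulates a pure Amnesiac Flooding execution from that configuration, which, by the termination side of the forthcoming dichotomy (\cref{thm:balance}) applied to configurations reachable from an initiator set, quiesces in $O(D)$ rounds. By a Borel--Cantelli argument on disjoint windows of length $O(D)$, such a window occurs almost surely, so the process terminates almost surely; broadcast (with certainty, not just a.s.) follows because the ``send to all neighbours'' action, chosen in any round by at least one informed node, only accelerates information spread compared to flooding.

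Finally, distinctness on infinitely many networks must be checked in each case. For \textsc{Neighbourhood-2 Flooding} the family of stars $K_{1,n}$ works, since the centre transmits differently from $P_{AF}$. For the two $1$-bit variants, any graph containing a non-leaf initiator adjacent to a leaf produces a distinct message configuration, yielding an infinite family (e.g.\ paths $P_n$ with a non-endpoint initiator). For \textsc{Random Flooding}, with positive probability on any graph of at least one edge the algorithm differs from $P_{AF}$ in round~$1$, so on an infinite family of graphs one may condition on (or simply observe) such an execution. Assembling the four witnesses gives the theorem; the subtle part is really the almost-sure termination argument, where care is needed to ensure that ``fresh'' AF windows can be invoked from arbitrary reachable configurations rather than only from the initial one.
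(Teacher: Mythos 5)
Your overall plan coincides with the paper's: exhibit the four named protocols, check the three surviving conditions by inspection, and prove correctness, termination and distinctness for each; the \textsc{Neighbourhood-2} construction and all of your distinctness witnesses (stars, paths with a non-endpoint initiator, a positive-probability round-one deviation) are essentially what the paper does. However, two of your four termination arguments have genuine gaps, and they are exactly the two places where the paper has to work hardest. For \textsc{Random Flooding}, your coupling waits for a long window of consecutive \emph{AF} rounds and invokes the termination side of \cref{thm:balance}; this fails because configurations reached by Random Flooding need not be balanced. The ``send to all neighbours'' action returns a message along the edge it arrived on, which changes the clockwise-versus-anticlockwise message count on an odd cycle: on $C_3$ with initiator $a$, if in round~$2$ node $b$ sends to all while $c$ runs AF, the resulting configuration has two messages in one orientation and one in the other, so it is imbalanced and a subsequent all-AF window runs forever. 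The paper's window is the opposite one: $D+1$ consecutive rounds in which \emph{every} node sends to all neighbours saturate the graph so that each node receives from its entire neighbourhood, after which a single AF round sends nothing; your Borel--Cantelli scaffolding survives, but the event you condition on must be changed.

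For the two 1-bit variants, reducing the leaf bounce to a ``shifted multicast'' and invoking \cref{Thm: termination} does not work. Under the operator $A_{I,G}$ an initiator never sends back along an edge on which it is simultaneously receiving (the definition requires $(v,u)\notin S$ for $(u,v)$ to be emitted), whereas the bounce is precisely the message $(l,w)$ emitted while $(w,l)\in S$; so bounced configurations are not reachable by any sequence of multicasts and the multicast termination theorem says nothing about them. Moreover, ``leaves return the message once'' is not implementable under \StrictStatelessness{} --- a leaf cannot remember that it has already bounced --- so leaves must bounce on every activation and there is no single extra round to append. Balance alone also does not suffice: the paper notes that Parrot Flooding started at the endpoint of a path stays balanced yet bounces forever, which is exactly why the initiator must be a non-leaf and why the paper introduces a new invariant ($P$-balance on leaf paths) to prove termination of Parrot Flooding. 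That argument, which is the bulk of the appendix for this theorem, is missing from your proposal.
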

The algorithms we obtain all build upon Amnesiac Flooding, and we believe illuminate the role of each of the four conditions in the uniqueness result by showing what they prevent. The algorithms for \Obliviousness{} and \Bandwidth{} are presented together, as they differ only in how control information is encoded.
\begin{itemize}
    \item \StrongTrueStatelessness: Several relaxations of this already exist, such as Stateless Flooding (the initiator retains information for one round) or even classical non-Amnesiac Flooding (nodes are able to retain 1-bit for one round). We present \textsc{Neighbourhood-2 Flooding}. Nodes know the ID of their neighbours' neighbours. The protocol behaves distinctly on star graphs, as the hub can determine the entire network topology.\label{relax: stateless}
    \item \Obliviousness\ and \Bandwidth: \textsc{1-bit flooding}. Nodes are allowed to send a single bit of read-only control information (in the message header or encoded in the number of messages sent) communicating whether the initiator is a leaf vertex. If it is, nodes implement Amnesiac Flooding, otherwise they use a different mechanism called \textsc{Parrot Flooding} (leaves bounce the message back) which always terminates when begun from a non-leaf vertex. \label{relax: blind}
    \item \Determinism: \textsc{Random-Flooding}. Nodes have access to one bit of randomness per round. Each round every node randomly chooses to implement Amnesiac Flooding or to forward to all neighbours. \textsc{Random-Flooding} is correct with certainty and terminates almost surely in finite time. \label{relax: determinism}
\end{itemize}
Since each of these constitutes only a minor relaxation of the restrictions, we argue that the uniqueness of Amnesiac Flooding is in some sense sharp.
We present the protocols fully and demonstrate their correctness and termination for each of these cases independently in appendix \ref{apx: relaxed}.

\section{Fault Sensitivity}
    \label{sec: faults}
    With the uniqueness of Amnesiac Flooding established, a greater understanding of its properties is warranted.
    In this section, we study the configuration space of Amnesiac Flooding and obtain an exact characterisation of terminating configurations. We then apply this to investigate the algorithm's fault sensitivity.
\subsection{Obtaining a termination dichotomy}
    In order to consider the fault sensitivity of Amnesiac Flooding, we need to be able to determine its behaviour outside of correct broadcasts. 
    Unfortunately, neither of the existing termination proofs naturally extend to the case of arbitrary message configurations.
    Fortunately, we can derive an invariant property of message configurations when restricted to subgraphs that exactly captures non-termination, which we will call "balance". 
    \begin{theorem}
        \label{thm:balance}
        Let $S$ be a configuration on $G=(V,E)$ then there exists $k\geq 0$ such that $A^{k}_{G}(S)=\emptyset$ if and only if $S$ is balanced on $G$.
    \end{theorem}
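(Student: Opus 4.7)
The plan is to exhibit a parity-style invariant --- ``balance'' --- that is preserved in both directions by the one-round operator $A$, and to combine this with finiteness of the configuration space to obtain the dichotomy. The forward direction (termination implies balance) will fall out essentially for free from invariance once $\emptyset$ is observed to be balanced; the converse (balance implies termination) is where the real work lies, and will be handled by analysing the periodic orbits of $A$.

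First I would formalise balance as a vanishing condition on signed or $\bmod\,2$ message counts $\sigma_C(S)$ indexed by the cycles $C$ and faux-even cycles $F$ of $G$, with $S$ balanced when every $\sigma_C(S)$ and $\sigma_F(S)$ is zero. The key technical lemma is bidirectional invariance: $S$ is balanced if and only if $A(S)$ is balanced. Forward preservation follows from a local computation at each node, since the rule ``send to every neighbour except those I just received from'' is precisely the update that preserves oriented parity around any cycle. Reverse preservation uses the observation that the restriction of $A$ to the messages supported on a single cycle acts as a shift/reflection, and so $A(S)$ determines the parity of $S$ on that cycle. The faux-even case needs an analogous but separate check: the parities on the two odd components can only mix in a way compatible with $\sigma_F$, with the connecting path contributing a fixed correction term.

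Given bidirectional invariance, the forward direction of the theorem is immediate: $\emptyset$ is balanced, so $A^k(S) = \emptyset$ implies, by pulling invariance back $k$ steps, that $S$ is balanced. For the converse I would use finiteness: the space $2^{\vec E}$ is finite, so the $A$-orbit of any $S$ eventually enters a periodic orbit $T \to A(T) \to \cdots \to A^m(T) = T$. It then suffices to show that every nonempty periodic $T$ is unbalanced, because invariance then propagates unbalance back to $S$, contradicting the assumption that $S$ is balanced.

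The main obstacle is exactly this characterisation of nonempty periodic configurations of $A$. I would argue by restricting a periodic $T$ to its \emph{active subgraph} --- the edges carrying a message at some point in the period --- and showing that this subgraph must contain either an even cycle, on which coherent circulation forces some $\sigma_C \neq 0$, or a faux-even cycle, on which two odd-cycle bounces synchronised through the connecting path force $\sigma_F \neq 0$. The delicate point is ruling out ``pathological'' recurrences that cancel on each individual structure yet persist in combination; the faux-even invariant is calibrated precisely to catch these. A useful auxiliary is the dual reverse-flood of Hussak--Trehan (AF Cases): applied to a periodic $T$, it produces a reverse-orbit that must itself be periodic, which lets one reduce to minimal recurrent configurations supported on a single cycle or faux-even cycle, where the nonvanishing of the balance signature can be read off by inspection.
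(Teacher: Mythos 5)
Your overall architecture matches the paper's: prove that balance is invariant under $A$ in both directions, note that $\emptyset$ is balanced to get the forward implication, and reduce the converse (via finiteness of $2^{\vec E}$) to showing that no nonempty recurrent/periodic configuration can be balanced. The forward half of your argument is essentially the paper's Corollary on imbalance preservation and is fine, modulo one formalisation caveat: on an \emph{even} cycle the correct invariant is not a single signed circulation count $\sigma_C$ but a count graded by the parity of the distance between message heads (two opposite-direction messages whose heads are an odd distance apart pass through each other and never annihilate, yet have zero net circulation), so your ``local computation'' must track this refinement or the invariant will not be conserved in the form you state it.

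The genuine gap is in the converse, and it sits exactly where you flag ``the delicate point.'' The claim that every nonempty periodic $T$ has an active subgraph containing a cycle or FEC with nonvanishing signature is not argued --- as stated it is essentially a restatement of the theorem's converse, since the active subgraph can be all of $G$ and contain many cycles and FECs whose individual signatures you have no handle on. Your proposed escape route, using the reverse flood to ``reduce to minimal recurrent configurations supported on a single cycle or faux-even cycle,'' is unjustified: the reverse flood shows that running $A$ backwards is again an instance of multicast Amnesiac Flooding (the paper uses it only for the extended dichotomy identifying balanced configurations with those reachable from multicasts), and it gives no mechanism for localising a recurrent configuration onto a single cycle or FEC. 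What the paper actually does here is track the \emph{trajectory of a single recurrent message} as a closed walk $W$ in $G$, prove quantitative bounds on how many consecutive steps such a walk can spend on an odd cycle, an even cycle, or an FEC when the starting configuration is balanced (plus a substitution lemma for half-traversals of even cycles), and then run a combinatorial analysis of $W$ as a word --- at most one odd-cycle factor, every vertex appears at most twice, forcing $W$ to trace out a fully traversed cycle or FEC --- to derive a contradiction with those bounds. That walk analysis is the technical core of the proof and has no counterpart in your proposal; without it, or some substitute argument of comparable strength, the claim that balanced configurations admit no recurrence remains unproved.
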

    In fact we obtain that not only do balanced configurations terminate, they terminate quickly.
    \begin{corollary}
    \label{corr: balance is fast}
        Let $S$ be a balanced configuration on $G=(V,E)$ then there exists $k\leq 2|E|$ such that $A^{k}_G(S)=\emptyset$.
    \end{corollary}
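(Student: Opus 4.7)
The plan is to upgrade the qualitative termination statement of \cref{thm:balance} into a quantitative bound. Since that theorem already guarantees $A^{k}_{G}(S) = \emptyset$ for some $k \geq 0$ whenever $S$ is balanced, the task is to show that the smallest such $k$ is at most $2|E|$.

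First, I would reduce the bound on $k$ to a bound on total message traffic: every non-empty intermediate configuration $A^{i}(S)$ contains at least one message, so
\[
k \;\leq\; \sum_{i=0}^{k-1}|A^{i}(S)|,
\]
and it therefore suffices to prove that the total number of messages transmitted throughout the trajectory is at most $2|E|$. This reformulation is natural because $2|E|$ is precisely the number of directed edges of $G$, which hints at a charging scheme assigning distinct directed edges to transmitted messages.

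The main technical step is to design such an injective charging. The naive scheme maps a message $(u,v) \in A^{i}(S)$ to the directed edge $(u,v)$, which is injective whenever each directed edge carries at most one message over the whole trajectory. Small examples (single-source broadcasts on $K_n$ or on odd cycles $C_{2k+1}$) suggest this often holds, but simple bipartite configurations such as the path $v_0 - v_1 - v_2$ with $S = \{(v_0,v_1),(v_1,v_2)\}$ show that a refinement is needed: there $(v_1, v_2)$ fires in two consecutive rounds, yet the reverse directed edges $(v_1,v_0)$ and $(v_2,v_1)$ are never used, leaving unused slots available for the ``extra'' transmissions. To exploit this, I would use the parity structure supplied by \cref{thm:balance} to redirect repeated uses of $(u,v)$ onto as-yet-unused directed edges along a walk traced by the causal history of the message (following each message $(u,v) \in A^{i}(S)$ back to a message $(w,u) \in A^{i-1}(S)$ that enabled it), and show that the global bookkeeping remains injective with total load at most $2|E|$.

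The main obstacle is bridging the global nature of the balance invariant (stated in terms of parities around cycles and faux-even cycles) with the local bookkeeping of the charging scheme. I would address this by decomposing each configuration into ``wave'' components that resemble partial Amnesiac Flooding broadcasts and applying the parity invariant to each component independently. If this direct accounting proves intractable, a cleaner fallback is induction on $|E|$: identify a directed edge that fires at some round and provably never reappears afterwards (its existence being guaranteed by the termination established in \cref{thm:balance}), excise it from the effective graph at that round, and apply the inductive hypothesis to the residual trajectory.
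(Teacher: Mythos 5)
Your reduction to ``total message traffic $\le 2|E|$'' is the weak point, and it is not a gap that bookkeeping can close: the claim is simply false. Take the path $v_0-v_1-\cdots-v_n$ with the balanced (the graph has no cycles or FECs) configuration $S=\{(v_0,v_1),(v_1,v_2),\dots,(v_{n-1},v_n)\}$. One checks that $A(S)=S\setminus\{(v_0,v_1)\}$, $A^2(S)=S\setminus\{(v_0,v_1),(v_1,v_2)\}$, and so on, so the run lasts $n\le 2|E|$ rounds but the total number of transmissions is $n(n+1)/2=\Theta(|E|^2)$. No injective charging of transmissions to the $2|E|$ directed edges can exist, so both the ``naive scheme'' and its parity-based refinement are doomed from the start. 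The fallback induction is also not viable as stated: excising a directed edge changes the forwarding decisions of the node at its head, so the ``residual trajectory'' is not a trajectory of $A$ on any smaller graph, and the round at which the excised edge last fires is itself the quantity you are trying to bound.

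The distinction you need is between a directed edge firing twice \emph{during the run} (harmless and, as the example shows, unavoidable) and a directed edge appearing twice \emph{on a single message path}. The paper's proof uses only the latter: if $A^{2|E|}(S)\neq\emptyset$, trace any surviving message back through its causal history (each message $(v_i,v_{i+1})\in A^{i}(S)$ is enabled by some $(v_{i-1},v_i)\in A^{i-1}(S)$) to obtain a message path of length $2|E|+1$ for some message of $S$; by pigeonhole over the $2|E|$ directed edges this single chain repeats a directed edge, which by definition yields a recurrent message in some $A^{i}(S)$, and the recurrence characterisation underlying \cref{thm:balance} (together with preservation of balance under $A$) then contradicts the balance of $S$. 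Your instinct to follow the causal history of a message appears in passing in your charging scheme, but you use it to redistribute charges rather than to apply pigeonhole along one chain; the former fails, the latter is the whole proof.
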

    Intuitively, for the protocol not to terminate, we require that a message is passed around forever and since it is impossible for a message to be passed back from a leaf node the message must traverse either a cycle or system of interconnected cycles. 
    As we will demonstrate in the rest of the section we need only consider systems of at most two cycles. Specifically, we can introduce an invariant property determined by parity constraints on the number of messages travelling in each direction and their spacing around: odd-cycles, even cycles and what we will dub, faux-even cycles.
    \begin{definition}
        A \emph{faux-even cycle} (FEC) is a graph comprised of either two node disjoint odd cycles connected by a path or two odd cycles sharing only a single node. We denote by $FEC_{x,y,z}$ the $FEC$ with one cycle of length $2x+1$, one of length $2z+1$ and a path containing $y$ edges between them.
        We emphasize that if $y=0$ the two cycles share a common node and if $y=1$ the two cycles are connected by a single edge.
    \end{definition}
    FECs get their name from behaving like even cycles with respect to the operator $A$. In order to capture this we can perform a transformation to convert them into an equivalent even cycle.
    \begin{definition}
        Let $F=(V,E)$ be $FEC_{x,y,z}$, the \emph{even cycle representation} of $F$ denoted $F_2$ is the graph constructed by splitting the end points of the interconnecting path in two, and duplicating the path to produce an even cycle. Formally, if the two cycles are of the form $a_0...a_{2x}a_0$ and $c_0...c_{2z}c_0$, with $a_0$ and $c_0$ connected by the path $b_1...b_{y-1}$, we construct the following large even cycle from four paths: $$a_0...a_{2x}a_{-1}d_1...d_{y-1}c_{-1}c_{2z}...c_0b_{y-1}...b_1a_0.$$ Here $a_{-1}$ is a copy of $a_0$, $c_{-1}$ is a copy of $c_0$, and the path $d_1...d_{y-1}$ is a copy of the path $b_1...b_{y-1}$ (See figure \ref{fig: Even Cycle Representation}). Note that if $y=0$, $a_0=c_0$ and so we do not include any nodes from $b$ or $d$. Similarly, if $y=1$, $a_0$ and $c_0$ are connected by a single edge, as are $a_{-1}$ and $c_{-1}$.  
    \end{definition}
    There then exists a corresponding message configuration over the even cycle representation. Essentially (other than a few technical exceptions), this new configuration is the same as the old configuration but with two copies of each message on the path, one on each corresponding edge of the even cycle representation.
    Formally,
    \begin{definition}
        Let $F=(V,E)$ be $FEC_{x,y,z}$ and $S$ a configuration of Amnesiac Flooding on $F$, the \emph{even cycle representation} of $S$ on $F$ denoted $S_{2,F}$ is determined as follows. For each $m \in S$
    \begin{itemize}
        \item If $m=(a_{2x},a_0)$ (resp. $(a_0,a_{2x})$) we add $(a_{2x},a_{-1})$ (resp. $(a_{-1},a_{2x})$) to $S_{2,F}$.
        \item If $m=(b_{i},b_{j})$ for some $i,j \in \{1,...,y-1\}$, we add both $(b_{i},b_{j})$ and $(d_i,d_j)$ to $S_{2,F}$.
        \item If $m=(c_{2z},c_0)$ (resp. $(c_0,c_{2z})$) we add $(c_{2z},c_{-1})$ (resp. $(c_{-1},c_{2z})$) to $S_{2,F}$.
        \item If $m=(a_0, b_1)$ (resp. $(b_1,a_0)$) we add both $(a_0,b_1)$ and $(a_{-1},d_1)$ (resp. $(b_1,a_0)$ and $(d_1,a_{-1})$) to $S_{2,F}$.
        \item If $m=(c_0, b_{y-1})$ (resp. $(b_{y-1},c_0)$) we add both $(c_0,b_{y-1})$ and $(c_{-1},d_{y-1})$ (resp. $(b_{y-1},c_0)$ and $(d_{y-1},c_{-1})$) to $S_{2,F}$.
        \item If $m=(a_0,c_0)$ (resp. $(c_0,a_0)$) we add both $(a_0,c_0)$ and $(a_{-1},c_{-1})$ (resp. $(c_{0},a_{0})$ and $(c_{-1},a_{-1})$) to $S_{2F}$.
        \item Otherwise we add $m$ to $S_{2F}$.
    \end{itemize}
    \end{definition}
    \begin{figure}
        \centering
        \includegraphics[scale=0.25]{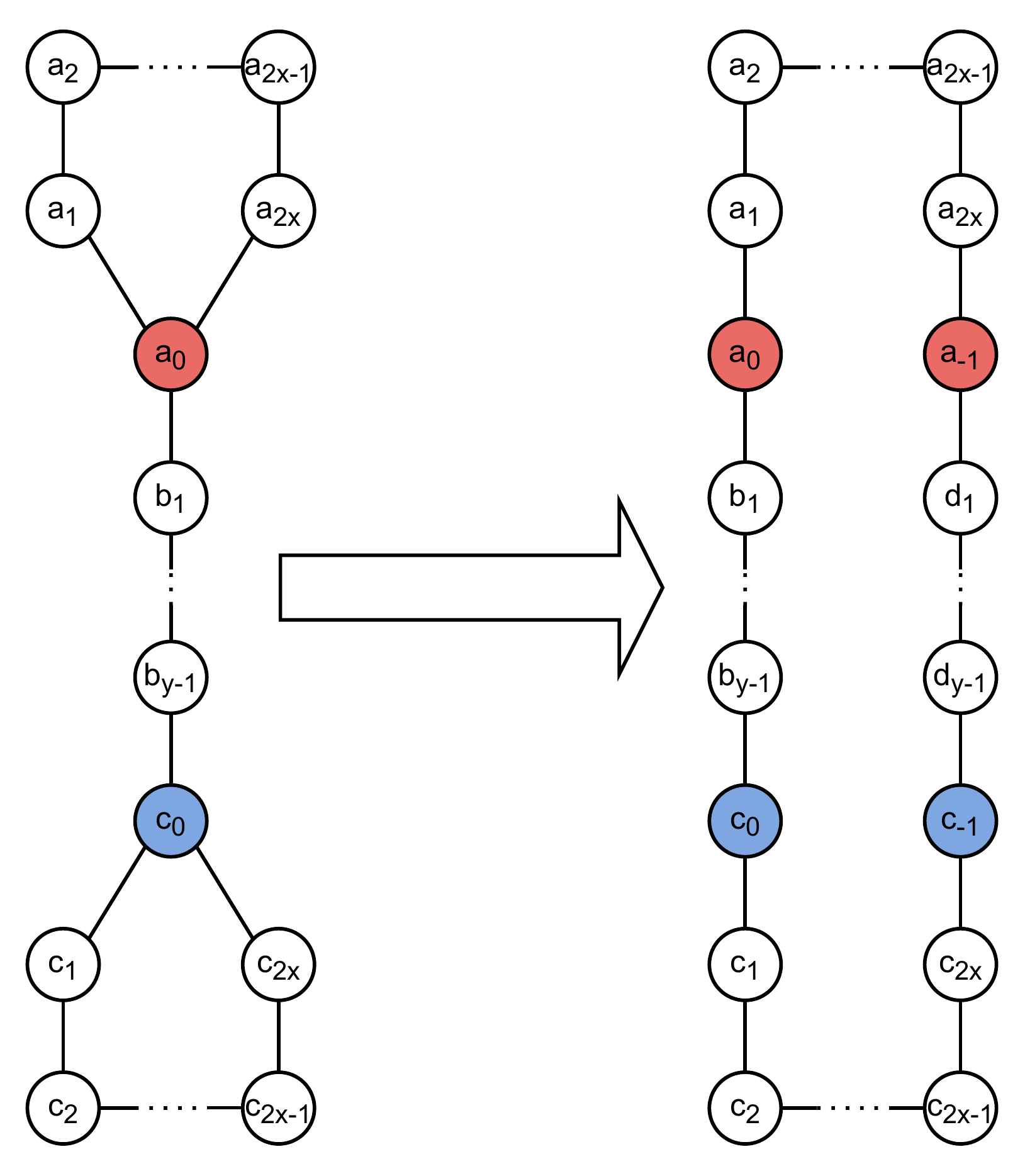}
        \caption{Left: An $FEC_{x,y,z}$. Right: The corresponding even cycle representation. Please note that this depiction only holds for $y\geq 2$. For $y=1$: $a_0$ and $c_0$ are connected directly by an edge in both sub figures (as are $a_{-1}$ and $c_{-1}$). For $y=0$: $a_0=c_0$ and $a_{-1}=c_{-1}$. }
        \label{fig: Even Cycle Representation}
    \end{figure}
    With this established we can now define the notion of balance.
    \begin{definition}
    \label{def: balance}
        A configuration $S$ is \emph{balanced} on $G=(V,E)$ if for all subgraphs $H$ of $G$ one of the following holds:
        \begin{itemize}
            \item $H$ is not a cycle or FEC.
            \item $H$ is an odd cycle and $S_H$ contains an equal number of messages travelling clockwise and anti-clockwise on $H$.
            \item $H$ is an even cycle and for any given message $m$ in $S_H$, there is an equal number of messages travelling clockwise and anti-clockwise on $H$ such that their heads are an even distance from $m$'s.
            \item $H$ is an $FEC$ and $S_{2,H}$ is balanced on $H_2$.
        \end{itemize}
    \end{definition}
    With these definitions established, we can present the intuition behind the proof of \cref{thm:balance}.
    \begin{proof}[Sketch of the proof of \cref{thm:balance}]
    We first establish that balance, and therefore imbalance, is conserved by Amnesiac flooding and, as the empty configuration is balanced, Amnesiac Flooding cannot terminate from any imbalanced configuration.
    For Amnesiac Flooding not to terminate it requires that some message travels around the communication graph and returns to the same edge, in the same direction. We show that if a configuration is balanced, the trajectory of any message can spend only a bounded number of consecutive steps on any given cycle or FEC. However, we can also show that any message's trajectory which crosses the same edge twice in the same direction, must have spent a large number of consecutive steps on some cycle or FEC, and therefore could not have begun in a balanced configuration.
    Thus, Amnesiac Flooding started from any balanced configuration must terminate.
    The full proof is given in appendix \ref{apx: balance}.
    \end{proof}
    \subsection{Applying the termination dichotomy}
    In this section, we apply the dichotomy to obtain a number of results.
    \label{sec:faults}
    \subsubsection{Extended Dichotomy}
    While \cref{thm:balance} provides a full dichotomy over the configuration space of Amnesiac Flooding and is much easier to reason about than previous results, the definition is somewhat unwieldy. In this subsection, we demonstrate the effectiveness of the dichotomy and unify it with the existing results~\cite{hussak2023termination, turau2021amnesiac, HT-AFCases-Arxiv20}. It has previously been observed that running Amnesiac Flooding backwards obtains another instance of multi-cast Amnesiac Flooding~\cite{HT-AFCases-Arxiv20}. Formally,
    \begin{definition}
        Let $G=(V,E)$ be a graph and $S$ be a configuration of messages on $G$. Then $\bar{S}=\{(v,u)|(u,v) \in S\}$.
    \end{definition}
    \begin{lemma}
    \label{lemma: going backwards}
        Let $G=(V,E)$ be a graph, $S$ a configuration of messages on $G$ and $T=\{u \in V| \forall v \in N(v): (v,u) \in S\}$ the set of sink vertices. Then $A_{T,G}\overline{A_{\emptyset,G}(\bar{S})}=S$
    \end{lemma}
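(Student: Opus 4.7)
The plan is to verify the identity pointwise on ordered pairs of adjacent vertices by unfolding the three composed operators: first $A_{\emptyset,G}$ applied to $\bar S$, then reversal via $\overline{\cdot}$, then $A_{T,G}$. Since $A_{I,G}$ is defined by a purely local Boolean condition on each ordered pair, this reduces to a short bookkeeping argument rather than a deep combinatorial one.

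First I would compute $A_{\emptyset,G}(\bar S)$ directly from the definition. By the rule for $A$: $(u,v)\in A_{\emptyset,G}(\bar S)$ iff $uv\in E$, $(v,u)\notin\bar S$, and some $w$ has $(w,u)\in\bar S$. Unwrapping the bars, this translates to $uv\in E$, $(u,v)\notin S$, and $u$ sent at least one message in $S$. Reversing once more then yields the clean description: $(a,b)\in\overline{A_{\emptyset,G}(\bar S)}$ iff $ab\in E$, $(b,a)\notin S$, and $b$ sent at least one message in $S$.

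Next I would apply $A_{T,G}$ to this set. For an edge $xy$, membership of $(x,y)$ expands into three parts: $xy\in E$; the condition $(y,x)\notin\overline{A_{\emptyset,G}(\bar S)}$, which collapses to ``$(x,y)\in S$ or $x$ sent nothing in $S$''; and the condition ``$x\in T$ or $\exists z:(z,x)\in\overline{A_{\emptyset,G}(\bar S)}$'', which collapses to ``$x\in T$ or $x$ is a sender in $S$ that did not send to every neighbor''. Using the definition of $T$ as the saturated senders (so that $x\in T$ adjacent to $y$ forces $x$ to be a sender), the second disjunction simplifies cleanly to ``$x$ is a sender in $S$''. Conjoining with the previous clause forces $(x,y)\in S$, giving the forward inclusion; the reverse inclusion is immediate since any $(x,y)\in S$ makes $x$ a sender and satisfies the other conditions, while any $x$ that has sent to every one of its neighbors is absorbed by $T$.

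The single delicate point — and precisely the reason $T$ must be added as an initiator set in the outer $A$-step — is the corner case of a vertex $x$ that sent to every neighbor in $S$: such an $x$ appears as the head of no message in $A_{\emptyset,G}(\bar S)$ and, after reversal, receives no incoming message, so without initiator status its outgoing messages in $S$ would vanish. Defining $T$ to capture exactly these saturated senders compensates, and is the only case that requires care. I anticipate no serious obstacle beyond keeping track of this case, as the whole argument is a short Boolean unfolding of the definitions.
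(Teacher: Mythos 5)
Your proof is correct and follows essentially the same route as the paper's own argument: a direct Boolean unfolding of the three composed operators, with the vertices of $T$ supplying exactly the messages of the saturated senders that would otherwise be lost. One caveat worth recording: you read $T$ as the set of vertices that sent to \emph{all} of their neighbours in $S$ (i.e.\ the sinks of $\bar{S}$), whereas the formula displayed in the statement literally says $(v,u)\in S$ for every neighbour $v$, i.e.\ the vertices that \emph{received} from all neighbours --- under that literal reading the identity is false (take $S=\{(a,b)\}$ on a single edge $ab$, where the literal $T=\{b\}$ yields $\{(b,a)\}\neq S$), but the paper's proof makes the same reading you do, so this is a typo in the statement rather than a gap in your argument.
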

    Which gives the following immediately via induction,
    \begin{lemma}
        \label{lem: time travel}
        Let $G=(V,E)$ be a graph and $S$ a configuration of messages on $G$. Then for any $k \in \mathbb{N}$ there exists a sequence $I_1,...,I_k \subseteq V$ such that $A_{I_1,G}...A_{I_k,G}\overline{A^k_{\emptyset,G}(\bar{S})}=S$.
    \end{lemma}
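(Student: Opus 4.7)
The plan is a straightforward induction on $k$, unrolling \cref{lemma: going backwards} one step at a time and at each step applying it to a suitable reversed/forward-evolved configuration.

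For the base case $k=0$, there is nothing to do: the empty composition of operators acts as the identity, and $\overline{\overline{S}}=S$, so the statement collapses to $S=S$. For the inductive step, I assume the claim for $k$ (uniformly in the configuration) and let $S$ be an arbitrary configuration on $G$. Applying \cref{lemma: going backwards} directly to $S$ produces a set $I_1$ (the sink vertices of $S$) satisfying
$$A_{I_1,G}\,\overline{A_{\emptyset,G}(\bar{S})}=S.$$
I then set $S' := \overline{A_{\emptyset,G}(\bar{S})}$, so that $\bar{S'}=A_{\emptyset,G}(\bar{S})$ (using the involutivity of the bar), and hence $A^{k}_{\emptyset,G}(\bar{S'})=A^{k+1}_{\emptyset,G}(\bar{S})$. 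Next I invoke the inductive hypothesis on $S'$ to extract $I_2,\ldots,I_{k+1}$ with
$$A_{I_2,G}\cdots A_{I_{k+1},G}\,\overline{A^{k}_{\emptyset,G}(\bar{S'})}=S'.$$
Prepending $A_{I_1,G}$ and substituting yields
$$A_{I_1,G}\,A_{I_2,G}\cdots A_{I_{k+1},G}\,\overline{A^{k+1}_{\emptyset,G}(\bar{S})}=A_{I_1,G}(S')=S,$$
closing the induction.

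There is no genuine obstacle here; the argument is purely bookkeeping around \cref{lemma: going backwards}. The only point that deserves a moment of care is the identity $A^{k}_{\emptyset,G}(\bar{S'})=A^{k+1}_{\emptyset,G}(\bar{S})$, which requires nothing more than unfolding the definition of $S'$ and using that reversing a configuration twice returns the original. If one wishes to emphasise a constructive reading, the sets $I_1,\ldots,I_k$ can be read off explicitly: $I_j$ is the set of sink vertices of the configuration $\overline{A^{\,j-1}_{\emptyset,G}(\bar{S})}$, i.e.\ the nodes all of whose in-edges in that configuration carry a message.
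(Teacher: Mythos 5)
Your proof is correct and is exactly the argument the paper intends: the paper derives this lemma "immediately via induction" from \cref{lemma: going backwards}, and your unrolling (with $I_j$ the sinks of $\overline{A^{\,j-1}_{\emptyset,G}(\bar{S})}$, applied outermost-first) is the right instantiation. No gaps.
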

    Intuitively, this means that given any configuration $S$ of Amnesiac Flooding, we can run it backwards through time to some earlier configuration $S'$. Further we obtain a sequence of vertex sets $I_1,...,I_k$ that were sinks in the time-reversed process and therefore sources in the forwards process. We can therefore reconstruct $S$ beginning from $S'$ via some sequence of fresh multi-casts from $I_1,...,I_k$. We will use this fact to obtain all configurations from which Amnesiac Flooding terminates (i.e. balanced configurations) from the empty configuration. The following lemma is immediate from the definition of balance (definition \ref{def: balance}), as reversing the direction of all messages in a configuration does not affect its balance.
    \begin{lemma}
        \label{lem: balance is invariant in time}
        $\bar{S}$ is balanced on $G$ if and only if $S$ is balanced on $G$
    \end{lemma}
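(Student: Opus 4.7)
The plan is to verify the lemma case-by-case against \cref{def: balance}, showing that the involution $S \mapsto \bar S$ preserves each of the four clauses on every subgraph $H$ of $G$. Since reversal is its own inverse, proving one direction suffices for the biconditional. The clause for subgraphs that are neither cycles nor FECs is vacuous, so the real work lies in three cases.

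For an odd cycle $H$, reversing every message on $H$ simply swaps the clockwise and anti-clockwise message sets, so the two counts get interchanged and remain equal. For an even cycle $H$, the key observation is a parity one: if $m = (u,v)$ then the head of $m$ is $u$, while the head of $\bar m = (v,u)$ is the adjacent vertex $v$. Because $H$ has even length, the canonical $2$-colouring of $H$ places $u$ and $v$ in different colour classes, so reversing every message flips the colour class of every head simultaneously. Consequently two heads lie at an even distance in $S_H$ if and only if the corresponding reversed heads lie at an even distance in $\bar S_H$. Combining this parity-preservation with the clockwise/anti-clockwise swap gives: for any reference $\bar m \in \bar S_H$, the count of clockwise (resp.\ anti-clockwise) messages in $\bar S_H$ whose heads lie at even distance from $\bar m$'s equals the count of anti-clockwise (resp.\ clockwise) messages in $S_H$ whose heads lie at even distance from $m$'s head, and those were equal by balance of $S$.

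For an FEC $H$, I would unpack the seven-clause definition of the even cycle representation and verify that each clause is symmetric under reversal of $m$: on the left-hand side, replacing $m$ by $\bar m$ simply reverses each of the one or two messages produced on the right-hand side. This yields the identity $\overline{S_{2,H}} = \bar S_{2,H}$, after which the already-handled even cycle case on $H_2$ completes the argument. The only mildly delicate step is the colour/parity observation on even cycles; the FEC case is pure bookkeeping, and I do not anticipate any real obstacle, which matches the author's remark that the lemma is immediate from the definition.
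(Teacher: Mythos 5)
Your proof is correct and is simply a careful, clause-by-clause verification of what the paper asserts without proof (the paper states the lemma is ``immediate from the definition of balance, as reversing the direction of all messages in a configuration does not affect its balance''). The three cases you check --- the clockwise/anti-clockwise swap on odd cycles, the parity-of-head-distance preservation on even cycles via the $2$-colouring, and the identity $\overline{S_{2,H}} = \bar{S}_{2,H}$ reducing the FEC case to the even cycle case on $H_2$ --- are exactly the content being waved at, so this matches the paper's (implicit) argument.
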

    Putting it all together, we obtain the following extension of the dichotomy result, as well as the complement to \cref{Thm: termination}.
    \begin{theorem}
        Let $G=(V,E)$ be a graph and $S$ a configuration of $G$, the following are all equivalent:
        \begin{enumerate}
            \item $\exists k \in \mathbb{N}: A^k_{\emptyset,G}(S)=\emptyset$
            \item $\exists k \in \mathbb{N},  I_1,...,I_k\subseteq V: A_{I_k,G}...A_{I_1,G}(\emptyset)=S$
            \item $S$ is balanced on $G$
        \end{enumerate}
    \end{theorem}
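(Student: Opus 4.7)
The plan is to establish the equivalence by closing the cycle of implications $(1)\Rightarrow(3)\Rightarrow(2)\Rightarrow(1)$, with only $(3)\Rightarrow(2)$ requiring any real work since the other two are essentially in hand.

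For $(1)\Rightarrow(3)$, I would invoke the forward direction of \cref{thm:balance} directly: if $A^k_{\emptyset,G}(S)=\emptyset$ for some $k$, then $S$ is balanced. For $(2)\Rightarrow(1)$, I would invoke the multi-cast termination result of \cref{Thm: termination}: any configuration reachable from the empty configuration via a finite sequence of multi-casts is terminating, and since Amnesiac Flooding is deterministic once the initiator sets are fixed, $A^m_{\emptyset,G}$ must eventually send $S$ back to $\emptyset$.

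The interesting direction is $(3)\Rightarrow(2)$. The idea is to run time backwards from $S$ until we reach the empty configuration, and then read the corresponding sinks as the initiator sets of a forward multi-cast sequence. Concretely, assume $S$ is balanced on $G$. By \cref{lem: balance is invariant in time}, $\bar{S}$ is also balanced on $G$, so by \cref{thm:balance} (and the quantitative refinement \cref{corr: balance is fast}) there exists $k\le 2|E|$ with $A^{k}_{\emptyset,G}(\bar{S})=\emptyset$, which in turn gives $\overline{A^{k}_{\emptyset,G}(\bar{S})}=\emptyset$. Applying \cref{lem: time travel} at this $k$ furnishes a sequence $I_1,\ldots,I_k\subseteq V$ such that
\[
A_{I_1,G}\cdots A_{I_k,G}\bigl(\overline{A^{k}_{\emptyset,G}(\bar{S})}\bigr)=S,
\]
which, substituting the empty configuration on the left, yields $A_{I_1,G}\cdots A_{I_k,G}(\emptyset)=S$. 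After relabelling the indices this is exactly statement $(2)$.

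The main (and essentially only) obstacle is justifying the use of \cref{lem: time travel} and \cref{corr: balance is fast} in tandem: one has to be careful that the \emph{same} $k$ works for both, i.e.\ that the time-reversal lemma is applied precisely at the step where the backward process has quiesced, so that the preimage configuration is empty rather than just balanced. This is guaranteed because \cref{corr: balance is fast} upper bounds the termination time of any balanced configuration by $2|E|$, giving a uniform bound independent of $S$; choosing $k=2|E|$ makes the two lemmas compatible. Once that alignment is in place the argument is a direct chain of citations.
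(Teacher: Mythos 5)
Your proposal is correct and follows essentially the same route as the paper: $(1)\Leftrightarrow(3)$ from \cref{thm:balance}, $(2)\Rightarrow(1)$ from \cref{Thm: termination}, and $(3)\Rightarrow(2)$ by reversing $S$, running to quiescence, and applying \cref{lem: time travel}. The alignment worry you raise is not really an obstacle — once $A^{k}_{\emptyset,G}(\bar{S})=\emptyset$ the configuration stays empty, so \cref{lem: time travel} can simply be applied at that same $k$ without needing the uniform $2|E|$ bound — but invoking \cref{corr: balance is fast} is harmless.
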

    \begin{proof}
        The equivalence of (1) and (3) follow immediately from \cref{thm:balance}. Further, we have that (2) implies (1) from \cref{Thm: termination}. Now assume $S$ is balanced, then by lemma \ref{lem: balance is invariant in time}, so is $\bar{S}$. Thus by \cref{thm:balance} there exists a finite $k$ such that after $k$ rounds Amnesiac flooding started from $\bar{S}$ must terminate, i.e. $A^k_{\emptyset,G}(\bar{S})=\emptyset$ . Therefore, by lemma \ref{lem: time travel} we have a sequence $I_1,...,I_k \subseteq V$ such that $S=A_{I_1,G}...A_{I_k,G}\overline{A^k_{\emptyset,G}(\bar{S})}=A_{I_1,G}...A_{I_k,G}(\emptyset)$. Thus, we have (3) implies (2) and the result follows.
    \end{proof}
    \subsubsection{Fault Sensitivity}
    In this work we consider three key forms of fault of increasing severity: message dropping, uni-directional link failure and weak-Byzantine failures. Intuitively, these correspond to a set of messages failing to send in a specific round, a link failing in one direction creating a directed edge and a set of nodes becoming transiently controlled by an adversary.

    More precisely, let $\mathbf{S}=(S_{i})_{i \in \mathbb{N}}$ be the sequence of actual message configurations on our network. 
    We say that $\mathbf{S}$ is \emph{fault free} for $G=(V,E)$ \textbf{if} $S_{i+1}=A_{G}(S_{i})$ for all $i \in \mathbb{N}$.
    Otherwise, we say it experienced a \emph{fault}. We say $\mathbf{S}$ has suffered from,
     
    \begin{itemize}
        \item Message dropping, if there exists $T\subseteq V^2$ and $k\geq 1$ such that $S_{k+1}=A(S_k)\setminus T$ and for all $i\neq k$, $S_{i+1}=A(S_i)$. This corresponds to all messages in $T$ being dropped on round $k$.
        \item Uni-directional link failure, if there exists $X\subseteq V^2$ such that for all $i\geq 1$, $S_{i+1}=A(S_i)\setminus X$. This corresponds to all oriented links in $X$ failing.
        \item Weak-Byzantine, if there exists $Y \subseteq V$ such that for some $k$ at least twice the diameter, for all $i<k$, $S_{i+1}\setminus\{(u,v)|u \in Y\}=A(S_i)\setminus\{(u,v)|u \in Y\}$. This corresponds to a possible failure where an adversary determines the forwarding decisions of the nodes in $Y$ until round $k$.
    \end{itemize}
    Note that we refer to the Byzantine failures as weak, since they are transient and only interfere with the forwarding of the message, not its content. It is obvious to see that in a stateless setting there is no way to deal with a Byzantine fault that changes the message as there is no method to verify which message is authentic. Intuitively, in our setting, Weak-Byzantine agents may choose to send messages to an arbitrary set of neighbours in each round and they are all controlled by a single coordinated adversary. We say that a Weak-Byzantine adversary with control of a given set of nodes can force some behaviour if there exists any weak byzantine failure on that set of nodes producing the forced behaviour. We can now express our fault sensitivity results, the proofs of which we defer to appendix \ref{apx: faults}, and begin with an extreme case of single message dropping.
    
    Thus, Amnesiac Flooding is extremely fault-sensitive with respect to message dropping.\\
    Secondly, considering uni-directional link failures we obtain the following.
    
     Finally for the weak-Byzantine case.

\section{Conclusions and Future Work}
\label{sec: conclusions}
In this paper, we prove a \emph{uniqueness} result: Under standard synchronous message passing assumptions, any strictly stateless deterministic algorithm oblivious to the message content which solves terminating broadcast is indistinguishable from \emph{Amnesiac Flooding}. 
We therefore argue due to both its uniqueness and simplicity, that Amnesiac Flooding is a fundamental or prototypical broadcast algorithm.
We formalise the four properties required for this uniqueness to hold, and show that by relaxing each individually one can obtain other correct and terminating broadcast algorithms, of which we present several.
These present the following natural questions: To what extent does Amnesiac Flooding represent a ``minimal'' broadcast algorithm?
Are there identifiable families of algorithms solving terminating broadcast with a subset of these restrictions? 
Are any of these independent of (i.e. not derivatives of) Amnesiac Flooding? \\

% Are there other problems whose solutions, under natural model restrictions, are limited to a single or small number of algorithms? 
We also obtain an understanding of the structural properties of Amnesiac Flooding. In particular, we study its sensitivity to single message drops, uni-directional link failures, and weak byzantine collusion, showing it can easily become non-terminating or non-broadcasting under such conditions. 
% Even though statelessness should help with properties such as self-stabilisation, this shows limitations that lack of state imposes towards building fault-tolerance. 
This is perhaps surprising, as statelessness is frequently associated with fault tolerance, such as in the self stabilizing setting.
A reasonable interpretation of \cref{thm:balance}, however, is that Amnesiac Flooding, while locally stateless, depends heavily on a distributed ``meta-state'' contained in the configuration of sent messages.
This suggests it is unlikely that any minor modification of Amnesiac Flooding will resolve its fragility without depending on an entirely different mechanism for termination.
In support of this, we note that of the four alternatives presented in the proof of the Existence of Relaxed Protocols Theorem, only Random-Flooding is meaningfully more robust (and will in fact terminate from any configuration in finite time almost surely).
Nevertheless, we contend that further exploration of stateless algorithms such as Amnesiac Flooding, their properties and related models are important for both theory and practice of distributed networks.

%%
%% If your work has an appendix, this is the place to put it.
\appendix
\newpage
\begin{center}
    \huge{\textsc{Appendix}}\\
\end{center}
\medskip
%\AT{Please repeat the theorem statements for every theorem here so that the appendix is self-contained (besides possibly the need to refer to the Model section)}
The observant reader may notice that despite its statement being presented later in the body of the text, we prove \cref{thm:balance} first in the appendix.
We choose this ordering, as the result is used in one specific case in the proof of \cref{thm: Relaxation}.
\section{Proof of the Dichotomy (Theorem \ref{thm:balance})}
    \label{apx: balance}
    In this appendix we present all remaining definitions necessary for and prove theorem \cref{thm:balance}. First, we give the formal definition of the even cycle representation of a message configuration.
    
    With that established, we begin by showing that (im)balance is preserved by the operation of Amnesiac Flooding,
    \begin{lemma}
        For any set of initiators $I\subseteq V$ and configuration $S$ on $G=(V,E)$, $A_{I,G}(S)$ is balanced if and only if $S$ is balanced.
    \end{lemma}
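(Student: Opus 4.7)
The plan is to reduce balance preservation to a subgraph-by-subgraph verification. Since the definition of balance quantifies over subgraphs $H$ of $G$ and is vacuously satisfied unless $H$ is a cycle or an FEC, it suffices to show that for every such $H$, the restriction $S|_H$ is balanced if and only if $(A_{I,G}(S))|_H$ is. External messages entering $H$ through boundary vertices need to be absorbed by the argument; the strategy is to show that they only affect sending decisions at nodes whose contributions to the relevant counts cancel.

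For an odd cycle $C = v_0 v_1 \ldots v_{n-1} v_0$, I would encode $S|_C$ by indicators $a_i = \mathbf{1}[(v_i,v_{i+1}) \in S]$ and $b_i = \mathbf{1}[(v_{i+1},v_i) \in S]$, and let $e_i$ record whether $v_i$ is \emph{effective} under $S$ (that is, $v_i \in I$ or $v_i$ has some incoming message in $S$, possibly from outside $C$). A direct computation from the definition of $A_{I,G}$ yields
\[
\mathrm{CW}(A_{I,G}(S)|_C) - \mathrm{CCW}(A_{I,G}(S)|_C) \;=\; \sum_{j} e_j (a_{j-1} - b_j).
\]
The pointwise identity $e_j(a_{j-1} - b_j) = a_{j-1} - b_j$ is the crux: if $a_{j-1} \neq b_j$ then the nonzero incoming message forces $e_j = 1$, and if $a_{j-1} = b_j$ the factor vanishes. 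Consequently the sum equals $\sum_j (a_{j-1} - b_j) = \mathrm{CW}(S|_C) - \mathrm{CCW}(S|_C)$, so the signed imbalance is exactly preserved, and both balance and imbalance transfer.

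For an even cycle, the same encoding works once messages are split by the parity class of their head. An analogous computation shows that the pair $(\mathrm{CW}_0 - \mathrm{CCW}_0,\, \mathrm{CW}_1 - \mathrm{CCW}_1)$ swaps coordinates under $A_{I,G}$. In particular the balanced pair $(0,0)$ stays $(0,0)$, and any non-trivial imbalance remains non-trivial.

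The main obstacle is the FEC case; my plan is to reduce it to the even cycle case via the even cycle representation. Concretely, I would show that $(A_{I,G}(S))_{2,F}$ coincides with $A_{F_2,\,I'}(S_{2,F})$ for an initiator set $I'$ on $F_2$ that mirrors the external effectiveness at the boundary vertices $a_0, c_0$ and lifts to their duplicates $a_{-1}, c_{-1}$. The doubling rules in the definition of $S_{2,F}$ are designed precisely so that a single round of $A$ on $F$ lifts consistently to a single round of $A$ on $F_2$: at internal path and cycle nodes this is immediate, and at $a_0, a_{-1}, c_0, c_{-1}$ one verifies case-by-case that the cycle and path contributions recombine correctly. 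Once this commutation is in place, the even cycle case applied on $F_2$ yields the desired equivalence. The delicate points are the degenerate cases $y=0$ (cycles share a vertex) and $y=1$ (cycles share an edge), where the translation of messages across the boundary requires separate verification.
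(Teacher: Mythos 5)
Your proposal is correct and follows the same overall decomposition as the paper's proof: both arguments check balance subgraph-by-subgraph over cycles and FECs, and both handle the FEC case by lifting one round of $A$ on $F$ to one round of $A$ on the even cycle representation $F_2$ with a corrective initiator set at the split boundary vertices (the paper notes exactly the exceptional case you flag, where a message arrives at a shared node from only one side of a cycle and the duplicate node in $F_2$ must be externally activated). Where you genuinely diverge is in the treatment of the cycle cases. The paper argues informally by classifying the events that can change the message count on $H$ --- an external activation adds one message in each direction, a collision removes one in each direction --- and concluding that the clockwise/anti-clockwise difference is unchanged. You instead derive an exact conservation law: writing $a'_j = e_j(1-b_j)$ and $b'_j = e_{j+1}(1-a_j)$ and using the pointwise identity $e_j(a_{j-1}-b_j) = a_{j-1}-b_j$, you show that the signed imbalance $\mathrm{CW}-\mathrm{CCW}$ is literally invariant on odd cycles, and that the pair of per-parity-class imbalances swaps coordinates on even cycles (I checked both computations; they are right, given that the head of a message is its sender). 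This buys a sharper statement --- the invariant quantity itself is preserved, not merely whether it vanishes --- absorbs initiators and external messages uniformly through the single indicator $e_j$, and replaces the paper's event-by-event bookkeeping with one telescoping identity. The FEC portion of your writeup is a plan rather than a completed verification: the case-by-case check at $a_0,a_{-1},c_0,c_{-1}$ and the degenerate cases $y\in\{0,1\}$ still need to be carried out, but it is the same reduction the paper performs and the commutation identity $(A_{I,G}(S))_{2,F}=A_{F_2,I'}(S_{2,F})$ you propose is the right key step.
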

    \begin{proof}
        We will show this for each subgraph type in the tetrachotomy of definition \ref{def: balance}. The first case is trivial. 
        
        Then, let $H$ be an odd cycle. If the number of messages on $H$ in $A_G(S)$ increases, this can only be due to a node $u$ on $H$ receiving a message from $v$ outside of $H$. In this case, if $u$ already received a message on $H$ it will have no effect, otherwise it will introduce two new messages, one clockwise and one anti-clockwise. If the number of messages decreases this means that two messages collided on $H$ removing both, this removes one clockwise and one anti-clockwise. Therefore, the balance on $H$ does not change.

        Now let $H$ be an even cycle. The parity of the distance between messages is preserved on an even cycle when $A$ is applied. Therefore, changing the balance would require the introduction or removal of messages. The argument is then the same as in the odd cycle case, with the caveat that any pair of messages added (removed) by the external activation of (collision at) a node have the same parity of distance between their heads and any other message. Therefore, the balance on $H$ is preserved.

        Finally, let $H$ be a faux-even cycle. If no nodes of $H$ receive a message from outside of $H$, then the even cycle configuration $A_H(S_H)_{2,H}$ is the same as $A_{H_2}(S_{2,H})$, unless there is a message from a cycle onto a shared node with the path and no other messages arriving at that node. This case is equivalent to an external activation of the node on the other node in $H_2$ associated with the shared path node. If new messages are added to $H$ from outside, these will produce either $2$ or $4$ messages on $H_2$ in the even cycle configuration, each pair in opposite directions and the same parity of distance apart. Therefore, the balance on $H_2$ is preserved.
    \end{proof}
    This gives the following immediate corollary, as the empty configuration is trivially balanced.
    \begin{corollary}
    \label{corr: Forward}
        If $S$ is imbalanced on $G=(V,E)$, then for all $k>0$, $A^k_G(S)\neq\emptyset$.
    \end{corollary}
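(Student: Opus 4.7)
The corollary is an immediate consequence of the preceding lemma (that balance is preserved by $A_{I,G}$) combined with the observation that the empty configuration is balanced. My plan is a short induction on $k$, with essentially all of the content of the proof shifted onto the lemma and a short verification of the base fact about $\emptyset$.

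First, I would record that the empty configuration $\emptyset$ is balanced on any $G$. For any subgraph $H$, the restriction $\emptyset_H$ contains no messages, so for odd and even cycles the parity conditions of \cref{def: balance} are vacuously satisfied (there are zero clockwise and zero anti-clockwise messages, and the even-cycle condition is quantified over messages in $S_H$, of which there are none). For an FEC $H$, the even-cycle representation $\emptyset_{2,H}$ is itself the empty configuration on $H_2$, which falls under the even-cycle case and is therefore balanced; so $\emptyset$ is balanced on $H$ by definition.

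Next, I would apply the preceding lemma with $I=\emptyset$: it states that $A_G(S)$ is balanced if and only if $S$ is balanced. Taking the contrapositive, if $S$ is imbalanced then so is $A_G(S)$, and a one-line induction on $k$ gives that $A^k_G(S)$ is imbalanced for every $k\geq 0$. Since $\emptyset$ is balanced, it follows that $A^k_G(S)\neq \emptyset$ for every $k>0$, which is the claim.

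There is no real obstacle; the only step that merits even a sentence of verification is the balance of $\emptyset$, which I would record as a short observation preceding the induction rather than prove in detail.
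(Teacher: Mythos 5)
Your proposal is correct and matches the paper's argument exactly: the paper derives this corollary immediately from the preservation-of-balance lemma together with the observation that the empty configuration is trivially balanced. Your added verification that $\emptyset$ satisfies each case of the balance definition is a harmless elaboration of what the paper leaves implicit.
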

    This gives us the forward direction of \cref{thm:balance}. For the other direction, we need the notion of message paths and their recurrence.
    \begin{definition}
    \label{def: message paths}
    A message $m=(v_0,v_1)$ in configuration $S\subset V^2$ has a \emph{message path} $v_0v_1$.
    We define the rest of its paths recursively, i.e. $m=(v_0,v_1)$ has a message path $v_0...v_{k+1}$ from $S$ on $G$ if:
    \begin{itemize}
    \item 
        \item $v_0v_1...v_{k}$ is a message path of $m$ in $S$
        \item The message $(v_k,v_{k+1})$ exists in $A_G^k(S)$
    \end{itemize}
     We say that a message $m=(v_0,v_1)$ is \emph{recurrent} on $G=(V,E)$ from $S$ if $m$ has a message path of the form $v_0v_1...v_0v_1$ on $G$ from $S$.
    \end{definition}
    We obtain the following property relating message paths and termination immediately.
    \begin{lemma}
    \label{lemma: recurrence}
        Let $S$ be a non-empty configuration on $G=(V,E)$ such that $A_{G}^k(S)=S$, then $S$ contains a recurrent message on $G$.
    \end{lemma}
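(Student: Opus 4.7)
The plan is to construct an \emph{ancestor digraph} on the set of (message, time) pairs arising in the periodic orbit $S_0, S_1, \ldots, S_{k-1}$ (writing $S_t := A_G^t(S)$, so $S_0 = S_k = S$) and to extract a directed cycle whose length is forced to be a multiple of $k$. This cycle will translate directly into a recurrent message lying in $S$. Concretely, let
\[ \mathcal{M} := \bigsqcup_{t=0}^{k-1} \{(m,t) : m \in S_t\}, \]
and put a directed edge from $(m,t)$ to $(m', (t+1) \bmod k)$ whenever $m = (u,v)$ and $m' = (v,w) \in S_{(t+1) \bmod k}$ for some neighbour $w$ of $v$. By the definition of $A_G$, these edges encode exactly the descendant relation: $v$ being activated at time $t$ by $m$ is consistent with $v$ sending $m'$ in the next round.

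The central step is to verify that every vertex of $\mathcal{M}$ has in-degree at least one. Fix $(u,v) \in S_{t+1}$, where $t$ is taken modulo $k$ so that $t = k-1$ handles $S_0 = S_k$. The definition of $A_G$ (with empty initiator set) guarantees the existence of some $(w, u) \in S_t$ and simultaneously requires $(v, u) \notin S_t$; hence $w \neq v$, and $(w, u) \to (u, v)$ is an ancestor edge in $\mathcal{M}$. Since $\mathcal{M}$ is finite and every in-degree is positive, walking backwards along in-edges must revisit a vertex, producing a directed cycle $(m_0, t_0) \to (m_1, t_0+1) \to \cdots \to (m_\ell, t_0+\ell) = (m_0, t_0)$. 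Because the time coordinate increases by one along each edge and must return to $t_0$ modulo $k$, we have $\ell = jk$ for some $j \geq 1$.

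To finish, I rotate the cycle so its starting vertex sits in $S_0$: let $m'_0 := m_{k - t_0} \in S_0 = S$ and write $m'_0 = (v_0, v_1)$, with $v_{i+1}$ defined as the second coordinate of $m_{k - t_0 + i}$. By construction each edge $(v_i, v_{i+1})$ belongs to $A_G^i(S)$, so $v_0 v_1 \cdots v_{jk+1}$ is a message path of $m'_0$ from $S$ on $G$ by the recursive definition in the excerpt. The cycle closure $m_{k-t_0 + jk} = m_{k - t_0}$ forces $v_{jk} = v_0$ and $v_{jk+1} = v_1$, so the path has the form $v_0 v_1 \cdots v_0 v_1$, exhibiting $m'_0 \in S$ as a recurrent message. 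The only substantive obstacle is the in-degree argument, which is where the specific structure of Amnesiac Flooding is essential: every sent message must be justified by a received message in the previous round from a \emph{different} neighbour, which is precisely what allows the backward walk to continue forever in a finite digraph and hence close up into a cycle of length divisible by $k$.
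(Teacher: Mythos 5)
Your proposal is correct and is essentially the paper's argument in more formal dress: the paper also walks backwards through ancestor messages (each message in $A_G^k(S)=S$ has a predecessor in $S$ via a message path) and invokes finiteness to find a repeated, hence recurrent, message. Your time-stamped ancestor digraph, the explicit in-degree verification from the definition of $A_G$, and the cycle-rotation bookkeeping simply make rigorous the steps the paper leaves implicit.
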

    \begin{proof}
        If $m$ is a message in $A_{G}^k(S)$ there must exist a message $m'$ in $S$, such that there is a message path from $m'$ to $m$. By the same argument there must exist another message $m''$ such that there is a message path from $m''$ to $m'$ and therefore to $m$. However, this cannot extend back infinitely with unique messages as there is a finite number of messages in $S$. Thus, there must exist some repeated message in the sequence, which in turn is by definition recurrent.
    \end{proof}
    The crucial observation is that the number of consecutive steps that can be spent on certain subgraphs by a message path is bounded if beginning from a balanced configuration. We will need four such results.
    \begin{lemma}
    Let $G=(V,E)$ with $H \subseteq G$ an odd cycle of length $2k+1$. If $m$ is a message of $S$ on $H$ with a message path of length $2k+2$ restricted only to $H$, then $S$ is imbalanced on $G$.
        \label{lemma: odd cycles}
\end{lemma}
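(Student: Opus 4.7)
The plan is to prove the contrapositive: if $S$ is balanced on $G$, then no CW message path of length $2k+2$ can lie entirely on an odd cycle $H$ of length $n = 2k+1$. I will combine an invariant argument on a signed message count with a particle/phase matching argument.

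First I would establish that the signed count $B(S, H) := \#\{\text{CW messages of } S \text{ on } H\} - \#\{\text{CCW messages of } S \text{ on } H\}$ is preserved by $A_G$, via a routine case analysis on each vertex $v \in V(H)$. The delicate cases are an external-only input at $v$, which produces one CW and one CCW message on $H$, and a collision in which both cycle neighbours of $v$ send to $v$, which removes one of each; in every case the difference $B$ is conserved, independent of the structure of $G$ outside $H$. Adopting a particle picture in which CW messages on $H$ are right-moving and CCW messages on $H$ are left-moving, each carrying a phase invariant under collision-free motion, the crucial cycle-length-odd computation (using that $2$ is invertible modulo $n$) is that every pair of opposite-direction phases $(p, q)$ has a unique round $\tau(p, q) \in \{0, \ldots, n-1\}$ at which the two particles converge at a common vertex and mutually annihilate. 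Normalising the target CW (the one carrying $(w_0, w_1)$) to phase $1$, every CCW of phase $q$ threatens the target at the unique round $t_q := (q-1) \cdot 2^{-1} \pmod{n}$, and the map $q \mapsto t_q$ is a bijection on $\{0, \ldots, n-1\}$.

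Suppose for contradiction that $B(S, H) = 0$. Then at every round the CCW count on $H$ equals the CW count, and each annihilation event consumes exactly one of each kind. For the target CW to survive all $n$ transitions required by a path of length $2k+2$, every CCW that would reach the target's blocking edge during some round in $\{0, \ldots, n-1\}$ must be absorbed by a non-target CW strictly before its blocking time. A pigeonhole argument using $B = 0$ shows that the number of such dangerous CCWs in $S$ strictly exceeds the number of non-target CWs available to destroy them, forcing at least one dangerous CCW to reach its blocking position and annihilate the target no later than round $n$. Hence the path has length at most $n < 2k+2$, contradicting the hypothesis and therefore forcing $B(S, H) \neq 0$, so that $H$ itself witnesses the imbalance of $S$ on $G$.

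The main obstacle I foresee is rigorously accounting for external injections, which themselves supply balanced CW/CCW pairs to $H$ and could in principle extend target survival by providing extra destroyers. The delicate point is to verify that each external event that donates an early-destroying non-target CW simultaneously donates a new dangerous CCW (whose blocking time falls within $\{0, \ldots, n-1\}$), preserving the pigeonhole deficit; a careful bookkeeping argument built around the conservation of $B$ under such injections keeps the contradiction intact.
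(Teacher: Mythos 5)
Your setup is sound and matches the paper's: the conserved quantity $B(S,H)$ is exactly the paper's odd-cycle balance invariant, and the unique annihilation time of each opposite-direction pair (via invertibility of $2$ modulo the odd length) is the same fact the paper uses when it asserts that the target must collide with some anti-clockwise message within $2k$ steps unless that message is intercepted first. The pigeonhole skeleton is also fine: with $B=0$ the initial anti-clockwise messages outnumber the non-target clockwise ones by exactly one, each annihilation pairs one of each, and messages on a cycle die only by annihilation, so the deficit of one destroyer is fatal \emph{provided} every external injection that supplies a usable destroyer also supplies a new victim.

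But that proviso is the entire content of the lemma, and you have not proved it --- you have only named it as ``the main obstacle'' and gestured at ``careful bookkeeping built around the conservation of $B$.'' Conservation of $B$ cannot close this gap: it is a global count and says nothing about \emph{where} on the cycle an injection occurs, which is what determines whether the injected anti-clockwise partner ever threatens the target within the relevant window (its scheduled meeting time, taken modulo $n$, may land in the past relative to its birth round, making it harmless and leaving the adversary with a free destroyer). The missing idea is spatial: for the injected clockwise message $m_c$ to annihilate a threat $m'$ strictly before $m'$ reaches the target, the injection must occur at a vertex lying on the arc between the target's front and $m'$'s front along which the two are closing; but then the simultaneously created anti-clockwise partner $m_a$ is born on that same arc, travelling toward the target, and therefore meets the target \emph{strictly earlier} than $m'$ would have. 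This is how the paper closes the argument: each interception replaces a threat by a strictly closer one, so iterating yields an impossible infinite strictly decreasing sequence of positive meeting times (equivalently, your pigeonhole deficit is preserved). Until you prove this localization claim, the proof does not go through.
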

\begin{proof}
    For the sake of contradiction assume that $S$ is balanced on $G$ and $m\in S$ has a message path from $S$ of length $2k+2$ restricted only to $H$. 
    Without loss of generality, assume that $m$ is travelling clockwise.
    On a cycle a message can collide with exactly one other message, upon doing which all of its message paths must terminate. 
    If there is a message $m'$ on $H$ travelling anti-clockwise in $S$, then either it collides with $m$ within $2k$ steps or it collides with another message travelling clockwise first. 
    Since $S$ is balanced on $G$ there must be an equal number of messages travelling clockwise and anti-clockwise on $H$, and a collision removes one of each.
    Therefore, unless new messages are added to $H$ in subsequent steps, $m$ must collide with a message $m'$ travelling anti-clockwise before it takes $2k+1$ steps.
    Thus, new messages must be added to $H$ that collide with the message $m$ would otherwise collide with. 
    The introduction of a new message to $H$ must produce one message travelling clockwise and one travelling anti-clockwise, $m_c$ and $m_a$ respectively. 
    Thus, $m_c$ must collide with $m'$ before $m$ does and so must be introduced between $m$ and $m'$. 
    However, this means that $m$ will collide with $m_a$ before it would have reached $m'$ giving our contradiction.
\end{proof}
\begin{lemma}
    Let $G=(V,E)$ with $H \subseteq G$ an even cycle of length $2k$. If $m$ is a message of $S$ on $H$ with a message path of length $k+1$ restricted only to $H$, then $S$ is imbalanced on $G$.
    \label{lemma: even cycles}
\end{lemma}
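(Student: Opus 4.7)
My plan is to adapt the argument for the odd cycle case (Lemma \ref{lemma: odd cycles}) to the sharper balance condition available on even cycles. The proof proceeds by contradiction: assume $S$ is balanced on $G$ and that $m$ admits a message path of length $k+1$ restricted to $H$. Without loss of generality, $m$ travels clockwise with head at some reference position $0$. By the even cycle clause of Definition \ref{def: balance}, the numbers of clockwise and anti-clockwise messages in $S_H$ whose heads are at even distance from $m$'s head are equal. Since $m$ itself contributes $1$ to the clockwise count at distance $0$, there must exist at least one anti-clockwise message $m' \in S_H$ whose head lies at even distance $d \in \{2, 4, \ldots, 2k-2\}$ from $m$'s head.

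The key dynamical observation is that on an even cycle, a clockwise message and an anti-clockwise message whose heads are at even distance close that distance at rate $2$ per round and annihilate when the distance reaches $0$. Since $d \leq 2k-2$, if no intervention occurs, $m$ and $m'$ collide after at most $(2k-2)/2 = k-1$ rounds, so $m$'s message path has length at most $k$, contradicting our assumption. The remaining work is therefore to show that no sequence of interventions can prolong $m$'s path to length $k+1$.

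The two possible forms of intervention are: (i) $m'$ collides with a different clockwise message $m_c$ (shielding $m$), and (ii) new messages are injected into $H$ from outside $G \setminus H$. For (i), the balance-preservation property of $A_G$ established above removes one clockwise and one anti-clockwise at even distance simultaneously, so as long as $m$ is alive, there remains at least one anti-clockwise at even distance whose head is at distance $\leq 2k-2$ from $m$; iterating, some such message must eventually collide with $m$ within $k-1$ rounds. For (ii), I mirror the odd cycle argument: any injection at an exterior neighbour of a cycle node $u \in H$ produces a simultaneous pair $(m_c, m_a)$ with heads at $u+1$ and $u-1$ respectively; for both to lie at even distance from $m$'s head (the only case affecting the even-distance balance), and for $m_c$ to be useful as a shield against the currently nearest anti-clockwise threat, $m_c$ must be placed strictly between $m$ and that threat, in which case $m_a$ is itself a strictly closer anti-clockwise at even distance and collides with $m$ strictly earlier than the threat it replaced would have.

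The principal obstacle I anticipate is the bookkeeping for arbitrarily many cascaded shielding events combined with injections. The cleanest formalisation appears to be a monovariant argument: define $\delta_t$ to be the minimum clockwise distance at round $t$ from $m$'s head to the head of any anti-clockwise message on $H$ at even distance, and show via the two cases above that $\delta_t \geq 2$ for all $t \in \{0,1,\ldots,k-1\}$ forces some anti-clockwise message to have had initial distance at least $2k$ at its time of creation, which is impossible on a cycle of length $2k$.
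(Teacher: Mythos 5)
Your proposal is correct and takes essentially the same approach as the paper: the paper's own proof of this lemma is a single sentence deferring to the odd-cycle argument (Lemma \ref{lemma: odd cycles}) with exactly the caveat you identify, namely that opposite-direction messages whose heads are an even distance apart must collide within $k$ steps on a cycle of length $2k$. Your expanded handling of shielding and external injections mirrors (and is, if anything, more careful than) the paper's reasoning, so no further comparison is needed.
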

\begin{proof}
    The argument is essentially the same as for the odd case with the caveat that in messages with an even distance between their heads travelling in opposite directions collide in at most $k$ steps on an even cycle. 
\end{proof}
\begin{lemma}
    Let $G=(v,E)$ with $H\subseteq G$ be a $FEC_{x,y,z}$. If $m$ is a message of $S$ on $H$ with a message path of length $x+y+z+2$ restricted only to $H$ but not only to one of its cycles, then $S$ is imbalanced on $G$.
    \label{lemma: FEC}
\end{lemma}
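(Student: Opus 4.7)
The plan is to reduce this FEC case to the even-cycle case already handled by the previous lemma (the even-cycle bound), by lifting the given message path on $H$ to a message path of the \emph{same length} on its even cycle representation $H_2$, and then invoking Lemma (even cycles) on the subgraph $H_2$ of the ambient construction. First I would verify by case analysis on $y\in\{0,1\}$ and $y\ge 2$ that $H_2$ has length exactly $2(x+y+z+1)$, so that a message path of length $x+y+z+2=(x+y+z+1)+1$ on $H_2$ matches the hypothesis of Lemma (even cycles) applied with $k=x+y+z+1$ and forces $S_{2,H}$ to be imbalanced on $H_2$. By the FEC clause in the definition of balance (\cref{def: balance}), this gives that $S$ is imbalanced on $G$.

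The main body of the argument is to construct the lifted path. Let $v_0v_1\dots v_L$ with $L=x+y+z+2$ be the given message path of $m$ in $H$. I would track a ``side'' bit $\sigma_i\in\{+,-\}$ along the path, where $\sigma_i=+$ means the current position corresponds to the $a_0,c_0,b_\bullet$ half of $H_2$ and $\sigma_i=-$ means it corresponds to the $a_{-1},c_{-1},d_\bullet$ half, and use this to define the lifted vertex $u_i\in V(H_2)$. Interior cycle edges (those not incident to $a_0$ or $c_0$) and interior path edges map canonically to a copy in $H_2$ on the current side; junction edges $(a_{2x},a_0)$ and $(c_{2z},c_0)$ map in accordance with $S_{2,H}$ to $(a_{2x},a_{-1})$ and $(c_{2z},c_{-1})$, which flips $\sigma$; the junction-to-path edges $(a_0,b_1)$ and $(c_0,b_{y-1})$ (and their duplicates) are lifted to the copy dictated by the current $\sigma$. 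I would then show, by a straightforward induction on $i$, that $(u_i,u_{i+1})\in A^{i}_{H_2}(S_{2,H})$: this follows because, away from the junction nodes, the operator $A$ on $H$ with $S_H$ and the operator $A$ on $H_2$ with $S_{2,H}$ induce identical local transitions on the corresponding edges, while the definition of $S_{2,H}$ was tailored precisely so that passing through a junction in $H$ is mirrored by crossing the corresponding endpoint (possibly with a side flip) in $H_2$.

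The main obstacle I anticipate is handling the junction transitions rigorously, together with the degenerate FEC topologies. For $y\ge 2$ the two odd cycles are separated by an actual $b$-path (and a duplicate $d$-path in $H_2$), so the side assignment is unambiguous; but when $y=1$ the two cycles are joined by a single edge with $a_{-1}c_{-1}$ as its duplicate, and when $y=0$ the junctions collapse via $a_0=c_0$ and $a_{-1}=c_{-1}$. For each of these one has to check the junction cases in the definition of $S_{2,H}$ one by one and confirm that the induction step still produces a valid successor message in $H_2$. This is also where the hypothesis that the path is ``not restricted to one of its cycles'' does essential work: if $P$ stayed within, say, the $a$-cycle, then $u_0,\dots,u_L$ would be trapped in the $(2x+1)$-edge arc $a_0a_1\dots a_{2x}a_{-1}$ of $H_2$ rather than traversing $H_2$ globally, and the even-cycle bound of length $x+y+z+2$ would not be achievable; this possibility is already excluded by the hypothesis (and, when $x$ is small, is separately covered by Lemma (odd cycles)).

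Once the lifted path $u_0u_1\dots u_L$ of length $L=x+y+z+2$ is shown to be a valid message path of some message $m'\in S_{2,H}$ on $H_2$ under iterates of $A_{H_2}$, I apply Lemma (even cycles) with $k=x+y+z+1$ to conclude that $S_{2,H}$ is imbalanced on $H_2$. By the fourth clause of \cref{def: balance}, this means $H$ (as a subgraph of $G$) witnesses the failure of balance, and hence $S$ is imbalanced on $G$, as required.
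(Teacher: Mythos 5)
Your overall strategy (compare the path on $H$ with a path on the even cycle representation $H_2$ and invoke the even-cycle lemma with $k=x+y+z+1$, since $H_2$ has $2(x+y+z+1)$ edges) is the same starting point as the paper's proof, and your length bookkeeping is correct. The gap is in your central claim that every message path on $H$ not confined to a single cycle lifts to a message path \emph{of the same length} on $H_2$. This is false. Consider the transition $a_1\to a_0\to a_{2x}$, i.e.\ the walk enters the junction node $a_0$ from inside the $a$-cycle and exits to the other side of the same cycle. In $H_2$ the vertex $a_0$ is adjacent only to $a_1$ and $b_1$, while $a_{2x}$ is adjacent to $a_{-1}$; the message $(a_1,a_0)$ maps to $(a_1,a_0)$ under the ``otherwise'' clause of the definition of $S_{2,H}$, so there is no copy of it arriving at $a_{-1}$, and no side flip can produce the successor $(a_{-1},a_{2x})$ in $A_{H_2}(S_{2,H})$. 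This is precisely why the paper's preservation lemma notes that $A_H(S_H)_{2,H}$ may contain \emph{additional} messages relative to $A_{H_2}(S_{2,H})$: the degree-$3$ junction in $H$ genuinely behaves differently from its two degree-$2$ copies in $H_2$. The hypothesis ``not restricted to one of its cycles'' does not exclude such a transition --- a path can use edges of the connecting path and of both cycles and still, somewhere along the way, cross $a_0$ while staying on the $a$-cycle --- so your induction breaks at that step and the appeal to the hypothesis does not repair it.

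The paper closes exactly this hole by a case split rather than a uniform lift: if the walk deviates from $H_2$ by staying on a cycle across the junction, one shows (using that the only exit from a cycle's interior is its junction node) that it must accumulate $2x+2$ consecutive steps on an odd cycle, contradicting the odd-cycle lemma; if it deviates by crossing from one cycle onto the path and then onto the other cycle ``in the wrong direction'', one switches to the \emph{other} even cycle representation, on which the walk does lift, and applies the even-cycle lemma there. Your side-bit device handles the benign junction transitions (e.g.\ $b_1\to a_0\to a_{2x}$, whose lift starts from the duplicate $(d_1,a_{-1})$), but even there you would need to prove that the duplicate messages on the $d$-path actually persist under the genuine $H_2$ dynamics at every step, which again fails at junctions for messages created after round $0$. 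To make your argument complete you would need to add the two deviation cases explicitly, at which point you have essentially reconstructed the paper's proof.
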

\begin{proof}
    For the sake of contradiction assume that $S$ is balanced on $G$ and such a message path exists. Then consider, a path of length $x+y+z+2$. Since $A_{H}(S)_{2,H}$ can only have additional messages relative to $A_{H_2}(S)$, a message path must deviate from $H_2$ in order for the Lemma \ref{lemma: even cycles} not to apply. Thus, such a message path must either remain on a cycle upon crossing the path point, or cross from one cycle onto the path and then to the other cycle in the wrong direction. In the former case this violates Lemma \ref{lemma: odd cycles}. In the latter case we can simply take the other even cycle representation and the result follows from Lemma \ref{lemma: even cycles}.
\end{proof}
\begin{lemma}
    Let $G=(V,E)$ with $H\subseteq G$ an even cycle of length $2k$ and let $m \in S$ be a message on $H$ where $S$ is balanced on $G$. If $m$ has a message path restricted to $H$ on $S$ of length $k$, then there exists $m'$ in $S$ on $H$ such that $m$ and $m'$ have the same start point and $m'$ also has a message path of length $k$ restricted to $H$.
    \label{lemma: EvenSub}
\end{lemma}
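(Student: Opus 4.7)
The plan is to identify the required $m'$ as $(v_0, v_{2k-1})$---the unique $H$-neighbour of $m$ at $v_0$ in the opposite orientation---and trace back a counter-clockwise wave step by step. Without loss of generality take $m = (v_0, v_1)$ to be clockwise on $H$, labelled $v_0 v_1 \ldots v_{2k-1}$, so that the message path of $m$ is $v_0 v_1 \ldots v_k$. First I would apply the preceding lemma on even cycles to rule out a path of length $k+1$ for $m$: since $S$ is balanced, $(v_k, v_{k+1}) \notin A^k_G(S)$. But $v_k$ received from $v_{k-1}$ at step $k-1$, so the only way it fails to forward to $v_{k+1}$ is a simultaneous incoming message $(v_{k+1}, v_k) \in A^{k-1}_G(S)$---a CCW message colliding with $m$'s wave at $v_k$.

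Next I would run a downward induction on $i = k-1, k-2, \ldots, 0$ showing that $(v_{2k-i}, v_{2k-i-1}) \in A^i_G(S)$, with indices read mod $2k$ so that $v_{2k} = v_0$. The base case $i = k-1$ is exactly the previous step. For the inductive step, the membership $(v_{2k-(i+1)}, v_{2k-i-2}) \in A^{i+1}_G(S)$ forces $v_{2k-(i+1)}$ to have received at step $i$ from some neighbour other than $v_{2k-i-2}$; as its only $H$-neighbours are $v_{2k-i-2}$ and $v_{2k-i}$, either we obtain the desired CCW message from $v_{2k-i}$ or a message injected from outside $H$. I would rule out the injection by leveraging the balance of $A^i_G(S)$ on $H$: together with the CW wave head already pinned at $(v_i, v_{i+1}) \in A^i_G(S)$ and the earlier even-cycle lemma (which bounds how long any such off-cycle wave can coexist with $m$'s wave on $H$ without breaking balance), the external-only scenario forces a surplus of clockwise messages in the head-parity class of $v_i$ with no counter-clockwise counterpart in $A^i_G(S)_H$, contradicting balance.

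Setting $i = 0$ in the induction then yields $m' := (v_0, v_{2k-1}) \in S$, and the entire inductive chain is precisely a message path $v_0 v_{2k-1} v_{2k-2} \ldots v_k$ of length $k$ for $m'$ restricted to $H$. Since $m$ and $m'$ share start point $v_0$, the lemma follows. I expect the inductive step to be the real obstacle: cleanly ruling out the externally-injected alternative using only balance and the bounded-consecutive-steps lemmas. A careful tracking of head-parity counts on $H$ as the wave propagates, and possibly an auxiliary reduction to the time-reversed process via the identity $A_{T,G}\overline{A(\overline{S})} = S$, will likely be required.
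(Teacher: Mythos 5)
Your overall plan coincides with the paper's in outline: take $m=(v_0,v_1)$, invoke Lemma~\ref{lemma: even cycles} to force a collision at $v_k$ at step $k$, i.e.\ $(v_{k+1},v_k)\in A^{k-1}_G(S)$, and then use balance to conclude that this counter-clockwise wave must originate at $m'=(v_0,v_{2k-1})\in S$. The base case and the identification of $m'$ are exactly right.

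The gap is in your inductive step, and you have correctly located it yourself. Ruling out an external injection at step $i$ by claiming it ``forces a surplus of clockwise messages \ldots with no counter-clockwise counterpart, contradicting balance'' does not work: an external activation of a node of $H$ creates one clockwise and one counter-clockwise message whose heads sit at the same node, hence in the same head-parity class, so balance of $A^i_G(S)$ is preserved --- this is precisely how the paper proves that balance is invariant under $A_{I,G}$. Consequently no single step of your backward trace can be closed by a local parity count on the intermediate configurations. The paper avoids the trace-back altogether: since $m$ is itself a clockwise message of its own head-parity class, balance of $S$ already guarantees a counter-clockwise message of that class present in $S_H$. If that message sits anywhere other than $(v_0,v_{2k-1})$, it would collide with $m$ strictly before step $k$; the only way to prevent this is to inject a blocker between it and $m$, but any such injection also produces a counter-clockwise message even closer to $m$, hence an even earlier collision --- contradicting the assumed length-$k$ path of $m$. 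Iterating, the counter-clockwise message that actually meets $m$ must be $(v_0,v_{2k-1})\in S$ and must itself run unimpeded to $v_k$, which is exactly the claim that it has a message path of length $k$ restricted to $H$. In short, the balance you need to exploit is that of $S$ applied globally around the cycle, not that of $A^i_G(S)$ one step at a time; with that reorganisation your argument goes through.
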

\begin{proof}
    Since $S$ is balanced on $G$ there must exist a message colliding with $m$ on $H$ exactly $k$ steps after $S$. Therefore, either this message exists in $S$ or is added to $H$ after $S$. If it is added after $S$, then there must already have existed a message travelling counter to $m$ in $S$ which would have collided with $m$ sooner. Further, any new message added to block this one would collide with $m$ even sooner. Iterating this argument, the statement holds.
\end{proof}
This gives us the wedge we will need to obtain our main result, which we now reframe according to Lemma \ref{lemma: recurrence}.
\begin{theorem}
    Let $S$ be a configuration on $G$, $S$ is imbalanced if and only if it contains a recurrent message.
\end{theorem}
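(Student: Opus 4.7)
My plan proves both directions of the equivalence, with the harder work concentrated in the backward direction because the forward direction is almost immediate from the infrastructure already in the appendix.

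\textbf{$(\Leftarrow)$ Containing a recurrent message implies imbalance.} Suppose $m = (v_0, v_1)$ is recurrent, witnessed by a message path $P = v_0 v_1 \cdots v_L v_{L+1}$ with $v_L = v_0$ and $v_{L+1} = v_1$. The walk $v_0 v_1 \cdots v_L v_0$ is closed in $G$. By passing to a shortest recurrent path of $m$, any back-and-forth excursion can be spliced out, so without loss of generality $P$ traces either a single simple cycle of $G$ or a single FEC of $G$ exactly once. The length of $P$ then meets or exceeds the threshold of \cref{lemma: odd cycles}, \cref{lemma: even cycles}, or \cref{lemma: FEC}, each of which concludes that the ambient configuration is imbalanced; combined with the invariance of (im)balance under $A$ established at the start of the appendix, this yields that $S$ itself is imbalanced.

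\textbf{$(\Rightarrow)$ Imbalance implies a recurrent message.} By \cref{corr: Forward}, imbalance prevents termination, so $A^k(S) \neq \emptyset$ for every $k \ge 0$. Since the configuration space on $G$ is finite, the trajectory must enter a periodic orbit, i.e.\ $A^i(S) = A^{i+p}(S)$ for some $i \ge 0$ and $p \ge 1$. Then $A^i(S)$ is itself a non-empty fixed point of $A^p$, and \cref{lemma: recurrence} produces a recurrent message in $A^i(S)$, which is the recurrent message required.

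\textbf{Main obstacle.} The delicate step is the reduction in $(\Leftarrow)$ to a shortest recurrent path that can be assumed to traverse a single simple cycle or single FEC. I must argue that any excursion along a recurrent path (a detour into a non-cyclic subgraph, or a revisited intermediate vertex) can be excised while retaining a valid recurrent path, and that the remaining walk is exactly one traversal of a simple cycle or FEC. The FEC case will need the even-cycle representation and \cref{lemma: EvenSub} to control the messages on the shared path portion, ensuring that the traversal indeed meets the length threshold of \cref{lemma: FEC} rather than sneaking around in its even-cycle representation. Verifying this splice-and-reduce argument cleanly, especially under the FEC-versus-cycle dichotomy, is where the bulk of the work lies.
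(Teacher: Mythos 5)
Your forward direction (imbalance implies a recurrent message) is fine and coincides with the paper's: \cref{corr: Forward} gives non-termination, finiteness of the configuration space gives eventual periodicity, and \cref{lemma: recurrence} extracts the recurrent message. The problem is the other direction, which is where all the content of the theorem lives, and your outline does not actually prove it. The pivotal move — ``by passing to a shortest recurrent path of $m$, any back-and-forth excursion can be spliced out, so without loss of generality $P$ traces a single simple cycle or a single FEC exactly once'' — is not available. A message path is not a free walk in $G$: by \cref{def: message paths} the $i$-th edge of the path must be a message present in $A^i(S)$, so the walk is pinned to the dynamics step by step. Deleting a detour shifts the timing of every later edge and in general produces a walk that is not a message path of anything; there is no ``shortest recurrent path'' that you are free to select. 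The only substitution the paper's machinery licenses is \cref{lemma: EvenSub}, which replaces a half-traversal of an even cycle by the complementary half-traversal of a \emph{different} message whose existence is forced by balance — it does not allow contracting arbitrary excursions.

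Moreover, even granting some form of reduction, the assertion that the residual walk is exactly one traversal of a simple cycle or of an FEC is essentially the whole theorem, and it is exactly the part you defer as ``the main obstacle.'' The paper's proof of this direction is a sustained combinatorial analysis of the recurrent walk $W$ under Conditions 1--5: the Repetition Lemma (any factor $u\ldots u$ contains an odd cycle), the Odd Cycle Lemma (\cref{lemma: Odd Cycle Lemma}, at most one odd-cycle factor, proved by a separate interleaved/sequential/repetitive case split that itself invokes \cref{lemma: FEC} and \cref{lemma: even cycles}), the bound that every node appears at most twice (\cref{lemma: Three is a magic number}), and then four claims showing in turn that the fully traversed cycle through $m$ is odd, that the forced excursion contains exactly one further odd cycle, that the connecting path is internally disjoint from both cycles, and that the two cycles are disjoint — so that $W$ fully traverses an FEC and contradicts \cref{lemma: FEC}. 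None of this is replaced by an alternative argument in your proposal; as written, the backward direction is a statement of intent rather than a proof.
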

The forward direction is an immediate consequence of Corollary \ref{corr: Forward}. For the reverse we take $G=(V,E)$ to be a communication graph and $S \subseteq V^2$ to be a balanced configuration of messages. For contradiction we assume that $S$ contains a recurrent message $m$ which has a message path $W$ performing exactly one excursion and return to $m$. We will view $W$ as both a walk on $G$ and a word. Further, $W$ must obey the Conditions of the lemmas defined previously re-framed here as:
\begin{enumerate}
    \item $W$ cannot return to the node it just came from, i.e. no sequence of the form $uvu$. \label{cond: No leaves}
    \item $W$ cannot take $2x+2$ consecutive steps around an odd cycle of length $2x+1$ (Lemma \ref{lemma: odd cycles}). \label{cond: odd cycle}
    \item $W$ cannot take $x+1$ consecutive steps around an even cycle of length $2x$ (Lemma \ref{lemma: even cycles}). \label{cond: even cycle}
    \item $W$ cannot take more than $x+y+z+2$ steps on an $FEC$ unless it sticks purely to one cycle (Lemma \ref{lemma: FEC}) .\label{cond: FEC}
    \item If $W$ takes $x$ steps around an even cycle of length $2x$, then there exists $W'$ which takes the opposite path of equal length and is otherwise identical (Lemma \ref{lemma: EvenSub}). \label{cond: even replacement}
\end{enumerate}
The fifth of these has the following useful interpretation, if the existence of $W'$ implies imbalance then the existence of $W$ implies imbalance. Therefore, we will use this rule as a substitution allowing us to "modify" $W$ to take the alternate path. We can now begin in earnest:
\begin{lemma}[Repetition Lemma]
    If $W$ contains a factor $u...u$ then that factor contains an odd cycle as a subfactor.
\end{lemma}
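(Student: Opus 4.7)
The plan is to apply a minimality argument to the given factor. Suppose $F = u\dots u$ is the factor of $W$ in question. Within $F$, consider all contiguous subfactors of the form $x\dots x$ for some vertex $x$ (such subfactors exist, e.g.\ $F$ itself). Let $P = v_0 v_1 \dots v_\ell$ with $v_0 = v_\ell = v$ be one of minimum length. My claim is that $P$ traces out a simple cycle, and that this cycle must be odd; this $P$ will then be the required odd-cycle subfactor.

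To see that $P$ traces a simple cycle, I argue by minimality that no vertex of $G$ is visited twice inside $P$ except $v$ at the two endpoints. Indeed, if some vertex $w$ appeared at two positions $i < j$ of $P$ with $(i,j) \neq (0,\ell)$, then the factor $v_i \dots v_j$ would itself be a closed factor of the form $w \dots w$, strictly shorter than $P$, contradicting the choice of $P$. Hence $v_1, \dots, v_{\ell-1}$ are pairwise distinct and distinct from $v$, so $P$ traverses a simple cycle $C$ of length $\ell$ in $G$. Using condition~\ref{cond: No leaves} (no backtracks $xyx$), we also get $\ell \ge 3$: $\ell = 1$ would require a self-loop, and $\ell = 2$ would be exactly a backtrack.

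It remains to exclude $\ell$ even. Suppose for contradiction $\ell = 2k$ for some $k \ge 1$. Then $P$ consists of $2k$ consecutive steps of $W$ around the even cycle $C$. Since $2k \ge k+1$ whenever $k \ge 1$, this violates condition~\ref{cond: even cycle}, which forbids $k+1$ consecutive steps around an even cycle of length $2k$. Therefore $\ell$ is odd and $P$ is an odd cycle appearing as a contiguous subfactor of $F$, as required.

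I do not anticipate a serious obstacle: the whole argument is a clean minimality-plus-parity step, and the only care needed is in the case analysis for the minimality claim (ruling out internal repetitions of both $v$ and of arbitrary vertices $w \neq v$) and in checking that the excluded short cases $\ell \in \{1,2\}$ are genuinely impossible under condition~\ref{cond: No leaves}. If anything, one should verify up front that the graph $G$ is simple (no self-loops), which is standard here, so the $\ell = 1$ case is automatic.
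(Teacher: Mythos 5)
Your proof is correct and takes essentially the same route as the paper's: both extract a minimal (``innermost'') closed subfactor, observe it must be a fully and consecutively traversed simple cycle, rule out length $2$ via the no-backtracking condition, and then exclude even length because $2k \ge k+1$ steps on an even cycle of length $2k$ contradicts the even-cycle bound. Yours simply spells out the minimality and case analysis that the paper leaves implicit.
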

\begin{proof}
    By Condition \ref{cond: No leaves}, we cannot return immediately to the same node. If the whole factor is not a cycle we must return to a node before $u$ and so we instead consider the innermost cycle. This cycle must be consecutive and traversed fully, thus for $S$ to be balanced this cycle must be odd by Condition \ref{cond: even cycle}.
\end{proof}
\begin{lemma}[Odd Cycle Lemma]
\label{lemma: Odd Cycle Lemma}
    There can only be one factor of $W$ that forms an odd cycle.
\end{lemma}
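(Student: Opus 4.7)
The plan is a proof by contradiction that pivots on Condition~\ref{cond: FEC}, with the Repetition Lemma acting as a pruning tool. Assume $W$ contains at least two distinct odd-cycle factors; among all such pairs, choose $C_1$, $C_2$ whose occurrences in $W$ are as close together as possible. Let $P$ be the intervening sub-walk, and write $|C_1|=2x+1$, $|C_2|=2z+1$ with $x,z\ge 1$. The first step is to argue that $P$ is internally vertex-disjoint from $C_1$, $C_2$, and from itself: any repeated vertex would, by the Repetition Lemma, yield a third odd-cycle factor strictly between $C_1$ and $C_2$, contradicting the minimality of the chosen pair.

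I would then treat the generic case: $C_1$ and $C_2$ are vertex-disjoint as subgraphs of $G$, or share at most one vertex. Together with $P$ they realize the subgraph $FEC_{x,|P|,z}$, with $|P|=0$ and $|P|=1$ matching the definition's degenerate settings. The portion of $W$ from the start of $C_1$ to the end of $C_2$ has length $(2x+1)+|P|+(2z+1)=2x+2z+|P|+2$ and, by construction, does not stick to a single cycle of the FEC. Condition~\ref{cond: FEC} then forces $2x+2z+|P|+2\le x+|P|+z+2$, i.e.\ $x+z\le 0$, contradicting $x,z\ge 1$.

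The remaining cases are those in which $C_1$ and $C_2$ overlap more heavily as subgraphs of $G$. If they coincide as a single cycle $C$ and $P$ never leaves $C$, then $W$ spends $4x+|P|+2\ge 2x+2$ consecutive steps on $C$, violating Condition~\ref{cond: odd cycle}. If $P$ leaves $C$ and returns, I would attach the off-cycle excursion of $P$ to the appropriate arc of $C$ to obtain a second simple cycle $C'$; by parity, $C'$ is either odd, producing an odd-cycle factor strictly between $C_1$ and $C_2$ and contradicting minimality, or else even, in which case $C\cup C'$ contains an FEC subgraph and the bound from the previous paragraph applies. The case where $C_1$ and $C_2$ share a proper path reduces similarly: the resulting theta-like subgraph decomposes, by a parity count of its three internally disjoint paths, into two odd cycles plus one even cycle, from which an FEC can again be extracted.

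The main obstacle I anticipate is this final overlap analysis. When $C_1$ and $C_2$ share edges, $C_1\cup P\cup C_2$ is not literally an FEC, and one must carefully identify either a hidden FEC subgraph on which $W$ exceeds its $x+|P|+z+2$ budget or a shorter odd-cycle factor that violates minimality. Everything else follows the uniform pattern of \emph{minimality via the Repetition Lemma, then length count against Condition~\ref{cond: FEC}}.
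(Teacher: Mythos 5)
Your sequential and repetitive cases are sound and essentially match the paper's treatment: when the two odd-cycle factors occur one after the other in $W$ and are node-disjoint (or share a single node) as subgraphs, their union with the connecting sub-walk is an $FEC_{x,|P|,z}$ and the count $2x+2z+|P|+2 > x+|P|+z+2$ contradicts Condition~\ref{cond: FEC}; when they coincide as a single traversed cycle, Condition~\ref{cond: odd cycle} applies. Your use of minimality together with the Repetition Lemma to keep the connecting sub-walk clean is a reasonable substitute for the paper's bookkeeping.

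The genuine gap is exactly the case you flag as your ``main obstacle'': when the two odd-cycle occurrences interleave in $W$ (the paper's pattern $u\dots v\dots u\dots v$), so that the traversals share a sub-walk of positive length and the union is a theta-like graph with three internally disjoint paths $w_1,w_2,w_3$ of lengths $a,b,c$, where $a+b$ and $b+c$ are odd. Here no FEC can be extracted: the only odd cycles in such a graph are $C_1=w_1w_2$ and $C_2=w_2w_3$ themselves, and they share at least two nodes, so they never meet the FEC definition (node-disjoint, or sharing exactly one node), while the third cycle formed by $w_1$ and $w_3$ has even length $a+c$ and is useless for building one. Minimality cannot rescue you either, since this subgraph contains no third odd cycle to contradict the choice of the closest pair. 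The paper closes this case by a different mechanism: Condition~\ref{cond: even cycle} applied to the $a$-step and $c$-step traversals of the even cycle of length $a+c$ forces $a=c$, so $W$ traverses exactly half of that even cycle; Condition~\ref{cond: even replacement} (Lemma~\ref{lemma: EvenSub}) then produces a sibling message path $W'$ containing the factor $w_1w_2w_1$, which takes $2a+b\geq a+b+1$ consecutive steps around the odd cycle of length $a+b$ and violates Condition~\ref{cond: odd cycle}. Without an argument of this kind --- in particular, without ever invoking Condition~\ref{cond: even replacement}, which your proposal does not use --- the interleaved case remains open and the lemma is not proved.
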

\begin{proof}
    Assume there are at least two for contradiction. We take the first two cycles of odd length in $W$, they are either sequential ($u...u...v...v$), interleaved ($u...v...u...v$) or repetitive ($u...u...u$). In the sequential case this is a fully traversed $FEC$ and so in violation of Condition $4$. In the interleaved case, we have three subsequences $w_1,w_2,w_3$ of length $a,b,c$ respectively such that $a+b$ is odd and $b+c$ is odd (see Figure \ref{fig: Odd Cycle Lemma Diagram}). Thus, $a+c$ is even. Since $w_1$ and $w_3$ induce an even cycle, either $a=c$ or $W$ is in violation of Condition $3$. Therefore, we have that $W$ travels exactly half way around an even cycle along $w_3$. Thus there must exist a message path $W'$ that contains the factor $w_1w_2w_1$ which is more than $a+b$ steps around an odd cycle of length $a+b$ and so is in violation of Condition $2$. In the repetitive case, this is multiple traversals of an odd cycle in full and so also in violation of Condition $2$. Thus, there exists at most one odd cycle in $W$.
\end{proof}
\begin{figure}
    \centering
    \includegraphics[width=0.6\linewidth]{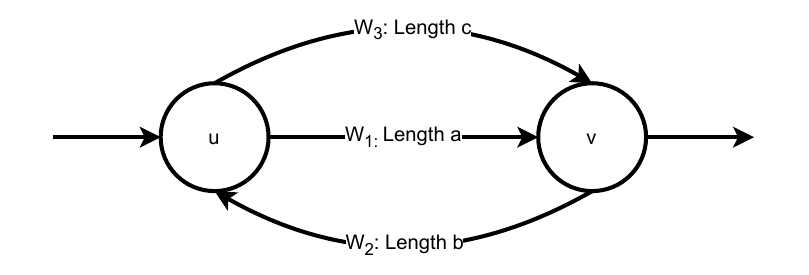}
    \caption{An illustration of the walk described in the interleaved case in the proof of \ref{lemma: Odd Cycle Lemma}}
    \label{fig: Odd Cycle Lemma Diagram}
\end{figure}
This immediately gives the following results:
\begin{lemma}
   Every node appears at most twice in $W$. \label{lemma: Three is a magic number}
\end{lemma}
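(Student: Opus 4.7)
My plan is to argue by contradiction, directly combining the two preceding lemmas. Suppose for contradiction that some vertex $u$ appears at least three times in $W$, at positions $i_1<i_2<i_3$. Then $W$ contains two consecutive factors $W[i_1,i_2]$ and $W[i_2,i_3]$, each starting and ending at $u$ and therefore each of the form $u\ldots u$.

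Next, I would apply the Repetition Lemma separately to each of these two factors to extract an odd-cycle subfactor $C_1$ contained in $W[i_1,i_2]$ and an odd-cycle subfactor $C_2$ contained in $W[i_2,i_3]$. The position range of $C_1$ lies in $[i_1,i_2]$ while that of $C_2$ lies in $[i_2,i_3]$, so these ranges can overlap only at the single index $i_2$. Since an odd cycle has length at least three, each of $C_1$ and $C_2$ must occupy a range of positive length, which prevents the two ranges from coinciding. Hence $C_1$ and $C_2$ are distinct factors of $W$, each forming an odd cycle, directly contradicting the Odd Cycle Lemma (Lemma \ref{lemma: Odd Cycle Lemma}). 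Thus no vertex can occur three or more times in $W$.

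The main point requiring care is precisely the last one: although $C_1$ and $C_2$ may well traverse the same odd cycle of $G$ viewed as a subgraph, as factors of $W$ they are distinguished by their position ranges within the walk, not by the underlying vertex sequence. Once this factor-versus-subgraph distinction is made explicit, the contradiction with the Odd Cycle Lemma is immediate, and I do not expect any further obstacle. Note also that this argument uses only Lemmas \ref{lemma: Odd Cycle Lemma} and the Repetition Lemma, so it does not rely on any of Conditions \ref{cond: even cycle}--\ref{cond: even replacement} beyond their prior role in establishing those two results.
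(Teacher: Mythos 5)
Your proof is correct and is exactly the argument the paper intends: the lemma is stated as an immediate consequence of the Repetition Lemma and the Odd Cycle Lemma, and you have simply made the "immediate" step explicit (three occurrences of $u$ yield two consecutive $u\ldots u$ factors, hence two distinct odd-cycle factors, contradicting the Odd Cycle Lemma). Your care about distinguishing factors by position range rather than by underlying subgraph is sound and matches the paper's usage.
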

\begin{lemma}
    There exists at most one node that is both a member of a consecutive odd cycle and appears twice. Furthermore, such a node must be the start and end of the cycle.
\end{lemma}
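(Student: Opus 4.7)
The plan is to invoke the Odd Cycle Lemma (\cref{lemma: Odd Cycle Lemma}) to pin down the sole consecutive odd cycle: if $W$ contains no such factor the claim is vacuous, so I assume it has exactly one, written $u_0 u_1 \ldots u_{2k} u_0$ and occupying some positions $[p,\,p{+}2k{+}1]$ in $W$. The endpoint $u_0$ already appears twice inside this factor, which by \cref{lemma: Three is a magic number} exhausts its allowed occurrences in $W$. So $u_0$ is a natural candidate for the lemma, and by construction it is the start and end of the cycle.

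The remaining task is to show that no other vertex $u_i$ of the cycle, with $1 \le i \le 2k$, can appear a second time elsewhere in $W$. I would proceed by contradiction: suppose $u_i$ also occurs at some position $q \neq p+i$, and without loss of generality take $q < p+i$ (the other case is symmetric). Then the factor $W_q \ldots W_{p+i}$ has the shape $u_i \ldots u_i$, so by the Repetition Lemma it must contain an odd-cycle subfactor. The uniqueness guaranteed by \cref{lemma: Odd Cycle Lemma} forces this subfactor to coincide with the already-identified cycle $W_p \ldots W_{p+2k+1}$, which gives $q \le p$ and $p+2k+1 \le p+i$; the latter yields $i \ge 2k+1$, contradicting $i \le 2k$. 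Thus $u_0$ is the only node meeting both conditions, which simultaneously establishes the uniqueness and the start-and-end assertion.

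I do not foresee a real obstacle here: the statement is essentially a position-bookkeeping corollary of the Repetition Lemma and the Odd Cycle Lemma combined with the ``appears at most twice'' bound from \cref{lemma: Three is a magic number}. The one mildly delicate step is to justify that the odd-cycle subfactor handed to us by the Repetition Lemma really must coincide with the known cycle rather than introduce a new one; this is immediate from the uniqueness in the Odd Cycle Lemma, but it is worth spelling out explicitly in the write-up so that the positional comparison that kills the $i \neq 0$ case is unambiguous.
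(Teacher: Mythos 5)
Your proof is correct and takes essentially the route the paper intends: the paper states this lemma with no explicit proof, presenting it as an immediate consequence of the Odd Cycle Lemma together with the bound that every node appears at most twice, and your argument (locating the unique consecutive odd cycle, ruling out a second occurrence of any internal vertex via the Repetition Lemma plus uniqueness) is exactly the natural filling-in of that step. The positional bookkeeping you flag as the delicate point is handled correctly in both directions of the case split.
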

Therefore, there must exist a cycle $C$ containing $m$ such that $W$ fully traverses $C$ before returning to $m$, with possible excursions.
To be clear, there exists a sequence of pairs $(w_0,w_1),(w_1,w_2)...(w_k,w_0),(w_0,w_1)$ such that $C=w_0w_1..w_kw_0$ is a cycle of $G$, each pair appears in $W$ in consecutive order (i.e. $W=w_0w_1...w_1w_2...w_kw_0..w_0w_1$) and $m=(w_0,w_1)$. 
With this established we are now ready to prove \cref{thm:balance}.
We demonstrate that no matter how we construct $W$ it must take too many steps around some structure and therefore cannot exist.
\begin{proof}[Proof of \cref{thm:balance}]
    \begin{claim} 
        $C$ is an odd cycle
    \end{claim}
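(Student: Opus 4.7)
My plan is to prove the claim by contradiction. Assume $C$ is an even cycle of length $2x$ and derive a contradiction with Lemma~\ref{lemma: Odd Cycle Lemma}. The high-level idea is that $W$ must use at least $2x+1$ $C$-edges in a prescribed cyclic order, and Condition~(\ref{cond: even cycle}) then forces this traversal to be broken up by enough excursions off $C$ that the Repetition Lemma produces two disjoint odd-cycle factors in $W$, more than Lemma~\ref{lemma: Odd Cycle Lemma} permits.

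First, I would observe that between any two consecutive pairs in the sequence $(w_0, w_1), (w_1, w_2), \ldots, (w_k, w_0), (w_0, w_1)$, the walk $W$ returns to the shared cycle node via a (possibly empty) closed subwalk, which I call the \emph{excursion} at that node. If every excursion were empty, $W$ would take $2x+1$ consecutive steps on $C$, violating Condition~(\ref{cond: even cycle}) which caps such runs at $x$. Hence at least one excursion is non-empty.

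Second, I would show that a single non-empty excursion is still insufficient. Suppose the lone non-empty excursion is at $w_i$; then the main traversal's $C$-edges split into two blocks of sizes $i$ and $2x+1-i$. If the excursion's boundary edges are both non-$C$, these blocks form two separate maximal $C$-runs, and Condition~(\ref{cond: even cycle}) requires $i \le x$ and $2x+1-i \le x$, forcing $i \ge x+1$, a contradiction. Condition~(\ref{cond: No leaves}) rules out immediate reversals, but the excursion could still begin with the $C$-edge $(w_i, w_{i+1})$ or end with the $C$-edge $(w_{i-1}, w_i)$, extending the adjacent block's run. A short case analysis shows that each such extension only tightens the relevant inequality, preserving the contradiction in every case. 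Hence at least two non-empty excursions are required, and since $C$ is a simple cycle, these are hosted at distinct cycle nodes and occupy disjoint positions in $W$.

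Third, the Repetition Lemma applied to each non-empty excursion yields an innermost odd-cycle subfactor (the innermost cycle cannot be $C$ itself, since a full traversal of $C$ would constitute $2x$ consecutive $C$-steps, violating Condition~(\ref{cond: even cycle})). The two non-empty excursions therefore yield two distinct odd-cycle factors at disjoint positions in $W$, directly contradicting Lemma~\ref{lemma: Odd Cycle Lemma}. The main obstacle I expect is the bookkeeping in the second step: formally verifying that no combination of internal $C$-edges inside the excursion --- including runs separated from the boundary by non-$C$-edges --- can rescue the single-excursion case, which requires careful accounting of how excursion boundaries interact with the main $C$-runs under Conditions~(\ref{cond: No leaves}) and~(\ref{cond: even cycle}).
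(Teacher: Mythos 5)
Your proposal is correct and follows essentially the same route as the paper's proof: rule out zero excursions via Condition~(\ref{cond: even cycle}), rule out exactly one excursion by the pigeonhole split of the $2x+1$ cycle steps into two blocks (one of which must exceed $x$), and rule out two or more excursions because each non-empty excursion contains an odd-cycle factor by the Repetition Lemma, contradicting Lemma~\ref{lemma: Odd Cycle Lemma}. The only difference is the order in which you treat the cases and your more explicit bookkeeping of the excursion boundary edges, which the paper elides.
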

    \begin{claimproof}
    If $C$ is of even length (say $2k$), then there must be an excursion from $C$ otherwise Condition $3$ would be violated. However, there can be at most one such excursion as it must contain a consecutive odd cycle and so the two subsequences on either side must be factors of $W$. Therefore, since some subsequence of $W$ traverses $C$ fully with one additional step, one of the two factors must take $k+1$ steps around $C$. This violates Condition $3$ and so $C$ must be an odd cycle.
    \end{claimproof}
    \begin{claim} $W$ consists of two odd cycles $C$ and $\hat{C}$ connected by a path. \end{claim}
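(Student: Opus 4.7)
The plan is to show that the recurrent walk $W$, already known to fully traverse the odd cycle $C$, must in fact traverse a second odd cycle $\hat{C}$ connected to $C$ by a (possibly trivial) path, which will put $W$ on a fully traversed FEC and, in the final step of the proof, yield a contradiction with Condition 4.

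First I would invoke Condition 2 to show that $W$ cannot be just a single traversal of $C$ followed by the recurrent edge $(w_0,w_1)$: since $C$ has length $2x+1$, this would give $2x+2$ consecutive steps on $C$, which is forbidden. Hence $W$ uses edges outside $C$. Then I would analyse the ``excess'' portion of $W$. Since $W$ begins and ends at $w_0$ (up to the recurrent edge) and by \cref{lemma: Three is a magic number} each vertex appears at most twice in $W$, any excursion off $C$ must be anchored at a vertex $v$ of $C$ and return to $v$, with its internal vertices each appearing at most twice.

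Next I would show the excursion must itself contain a cycle $\hat{C}$. An acyclic excursion would form a tree that $W$ traverses into and back out of, forcing some factor $uvu$ at a leaf and violating Condition 1. Having obtained $\hat{C}$, I would then rule out that $\hat{C}$ is even. A fully traversed even cycle of length $2k$ costs $2k$ consecutive steps, which exceeds the $k+1$ bound of Condition 3. A partially traversed even $\hat{C}$ can be handled by invoking Condition 5 (the even-cycle swap): any message path that covers half of an even cycle admits an alternative message path $W'$ going the opposite way round, and carefully alternating between $W$ and $W'$ collapses the excursion back into one of the cases already excluded by Conditions 2--3. The only remaining possibility is that $\hat{C}$ is an odd cycle.

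Finally I would establish that the connection from $v$ to $\hat{C}$ is a single simple path traversed in both directions. Each internal vertex on the connection must appear twice in $W$ (once outbound, once return), and by the magic number lemma this forces both traversals to use the same simple path; allowing the degenerate cases $y=0$ (shared vertex) and $y=1$ (single connecting edge) then gives exactly the $FEC_{x,y,z}$ structure. The main obstacle will be the case analysis for the excursion, in particular ruling out the possibility that $W$ leaves and re-enters $C$ at different vertices or that the excursion branches before reaching $\hat{C}$; pinning down the partial-even-cycle case through Condition 5 is the most delicate technical step, since it requires reasoning about hypothetical alternative message paths rather than about $W$ directly.
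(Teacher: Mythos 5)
Your overall strategy matches the paper's: Condition 2 forces an excursion off $C$, the excursion is shown to contain an odd cycle $\hat{C}$ (your tree/leaf argument essentially re-derives the paper's Repetition Lemma), even cycles are disposed of via Conditions 3 and 5, and the target structure is a fully traversed FEC. However, your final step has a genuine gap. You claim that because each internal vertex of the connection appears twice in $W$, Lemma \ref{lemma: Three is a magic number} ``forces both traversals to use the same simple path.'' It does not: if the outbound leg uses a path $P_1$ and the return leg a different path $P_2$, the internal vertices of $P_1$ and of $P_2$ each appear only \emph{once}, so the at-most-twice bound is not violated. What actually happens in that case is that $P_1\cup P_2$ contains one or more further cycles, i.e.\ the excursion is a \emph{chain of cycles} between $C$ and $\hat{C}$, and this case must be excluded separately. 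The paper does so explicitly: even cycles in the chain are removed by the Condition 5 substitution, and any remaining odd cycle in the chain pairs with $\hat{C}$ to form a fully traversed FEC, violating Condition 4. Your proposal never invokes Condition 4 inside this claim (you reserve it for the final contradiction), so the branching-excursion case that you yourself flag as ``the main obstacle'' is left unresolved rather than handled.

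A second, smaller omission: you establish that the excursion contains \emph{a} cycle and that it cannot be even, but you do not argue that it contains exactly \emph{one} odd cycle. The paper gets this from the Odd Cycle Lemma (at most one consecutive odd-cycle factor of $W$) combined with the FEC argument above for odd cycles sitting non-consecutively in the chain. With those two repairs your outline coincides with the paper's proof.
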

    \begin{claimproof}

    If $C$ is an odd cycle there must be an excursion from it or Condition $2$ would be violated. This excursion must contain exactly one consecutive odd cycle which we denote by $\hat{C}$. If $\hat{C}$ does not share its starting node with $C$, $W$ either forms a path between $C$ and $\hat{C}$ or some chain of cycles. We can use Condition $5$ to eliminate all even cycles of this chain, then if there are any odd cycles in the chain they correspond to a fully traversed $FEC$ when paired with $\hat{C}$ and so violate Condition $4$. Thus, we have a $W$ that takes a simple path from $C$ to $\hat{C}$ and back, although possibly intersecting $C$ along the way.
    \end{claimproof}
    \begin{claim}{Neither $C$ nor $\hat{C}$ intersect with the path between them. }\end{claim}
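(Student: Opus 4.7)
The plan is to argue by contradiction using the node-appearance bound from Lemma~\ref{lemma: Three is a magic number} (every vertex appears at most twice in $W$). Suppose, toward a contradiction, that some interior vertex $u$ of the connecting path $P$ between $C$ and $\hat{C}$ also lies on $C$; the case of $\hat{C}$ is entirely symmetric.

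First, I would argue that, without loss of generality, the outgoing and returning traversals of $P$ inside the excursion follow the \emph{same} simple path. If they differed, their concatenation would close into a cycle contained in the excursion: if that cycle is even, Condition~\ref{cond: even replacement} lets us substitute one leg for the other so the two legs coincide, while if it is odd it would constitute a third consecutive odd cycle in $W$ alongside $C$ and $\hat{C}$, which is ruled out by Lemma~\ref{lemma: Odd Cycle Lemma}. Edge cases where the two legs share only part of their length are handled by iterating this substitution until the remaining disagreement is a single closed cycle.

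With $P$ now taken to be simple and traversed twice, the interior vertex $u$ already contributes two occurrences to $W$ (one per leg of the excursion). Since $u \in C$, the full traversal of $C$ inside $W$ (from $w_0$ around and back to $w_0 w_1$) visits $u$ at least one additional time, and in fact twice if $u$ happens to coincide with $w_0$ or with the excursion root $w_j$. Summing these, $u$ occurs at least three times in $W$, contradicting Lemma~\ref{lemma: Three is a magic number}. A symmetric argument with the roles of $C$ and $\hat{C}$ swapped rules out interior intersections with $\hat{C}$. This is exactly the structural hypothesis needed to close out the theorem in the next step: with path--cycle intersections ruled out, $C \cup P \cup \hat{C}$ is an honest $\mathrm{FEC}_{x,y,z}$ that $W$ traverses in full, which exceeds the $x+y+z+2$ budget of Condition~\ref{cond: FEC} and yields the final contradiction.

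The main technical obstacle I anticipate is precisely the preliminary reduction to a single simple $P$: a priori the out- and back-legs of the excursion need only agree at $w_j$, so several ``mixed'' configurations must be checked and pushed through Condition~\ref{cond: even replacement} or Lemma~\ref{lemma: Odd Cycle Lemma} before the node-counting can be invoked cleanly. Once that reduction is done, the three-appearances argument is essentially forced.
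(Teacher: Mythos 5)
Your core argument for the claim itself is the same as the paper's: the connecting path is traversed once in each direction, so any vertex it shares with $C$ (or $\hat{C}$) beyond the attachment point would appear at least three times in $W$, contradicting Lemma~\ref{lemma: Three is a magic number}. The preliminary reduction you spend most of your effort on --- forcing the two legs of the excursion onto a single simple path via Condition~\ref{cond: even replacement} and Lemma~\ref{lemma: Odd Cycle Lemma} --- is exactly how the paper gets there, but it is carried out in the \emph{preceding} claim (``$W$ consists of two odd cycles connected by a path''), so for this statement it is already available as a hypothesis rather than something to re-derive.

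One caution about your closing remark: ruling out path--cycle intersections does not yet make $C\cup P\cup\hat{C}$ an honest $FEC_{x,y,z}$, because $C$ and $\hat{C}$ may still intersect \emph{each other} in one or more vertices, in which case the union is not an FEC and Condition~\ref{cond: FEC} does not apply. The paper's proof of this very claim in fact ends by observing that, precisely because a disjoint pair would form a fully traversed FEC in violation of Condition~\ref{cond: FEC}, the cycles $C$ and $\hat{C}$ \emph{must} intersect; a further, separate claim is then needed to rule that intersection out (again via the three-appearances bound, together with the single-shared-vertex FEC case). So the theorem does not close immediately after this step in the way your last paragraph suggests, though this does not affect the correctness of your proof of the stated claim.
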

    \begin{claimproof}
    Assume that the path to $\hat{C}$ does intersect $C$, since we are taking the same path in both directions any node shared between the path and $C$ appear in $W$ three times. 
    This violates Lemma \ref{lemma: Three is a magic number} and so $C$ must be disjoint from the path to $\hat{C}$.
    Similarly $\hat{C}$ must be disjoint from the path otherwise it would violate the same lemma.
    If $C$ is disjoint from $\hat{C}$ the pair would form a fully traversed $FEC$, thereby violating Condition $4$. Thus, $C$ and $\hat{C}$ must intersect.
    \end{claimproof}
    \begin{claim}{Claim: $C$ and $\hat{C}$ do not intersect. }\end{claim}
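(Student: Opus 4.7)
The plan is to prove the claim by contradiction. I will assume that $C \cap \hat{C} \neq \emptyset$, pick a shared node $v$, and derive a violation of one of the already-established conditions or appearance-counting lemmas. Together with the previous claim, which showed that $C$ and $\hat{C}$ must intersect (since their disjoint case forms a fully traversed FEC violating Condition 4), this yields the overall contradiction that completes the proof of \cref{thm:balance}. My case analysis will partition by how $v$ sits relative to the endpoints $u \in C$ and $u' \in \hat{C}$ of the connecting path in the excursion of $W$.

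The first case is $v \in \{u, u'\}$; by the symmetry between $C$ and $\hat{C}$ I may assume $v = u$, so $u \in \hat{C}$ as well. If the connecting path has length at least $1$, then the node $u$ appears in $W$ at the start of the excursion, at the end of the excursion, and also once more as an interior vertex of the $\hat{C}$-loop (since $u \in \hat{C}$ but $u \neq u'$), producing three appearances and contradicting the lemma that every node appears at most twice in $W$. If instead the path has length $0$, so $u = u' = v$, the excursion reduces to a single loop around $\hat{C}$ beginning and ending at $v$, so the entire walk $W$ lies on the faux-even cycle $C \cup \hat{C} = FEC_{x,0,z}$ as one uninterrupted stretch of $(2x+1) + (2z+1) + 1 > x + z + 2$ consecutive steps that visits both cycles, violating Condition 4.

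The second case is $v \notin \{u, u'\}$, so $v$ is interior to both cycles' consecutive traversals in $W$. If $v$ happens to coincide with one of the boundary nodes $w_0$ or $w_1$ of the $C$-factor, then $v$ already appears twice in $W$ just from $C$'s traversal (at the two positions the word $w_0 w_1 \ldots w_k w_0 w_1$ places it), so its third appearance inside the $\hat{C}$-traversal again contradicts the at-most-twice lemma. Otherwise $v$ is strictly interior to the consecutive odd cycle factor $C$ and still appears exactly twice in $W$, directly contradicting the lemma that any node belonging to a consecutive odd cycle factor and appearing twice must be the start-and-end vertex of that cycle.

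The main obstacle I anticipate is the sharpness of the degenerate sub-case $v = u = u'$: there the appearance-counting argument alone is insufficient because $v$ appears only twice, and one must instead invoke Condition 4 carefully, verifying that $W$ is a single uninterrupted stretch on $FEC_{x,0,z}$ (which follows from the excursion having no off-FEC path component) and that both constituent cycles are genuinely visited. The remaining sub-cases reduce to straightforward applications of the appearance-counting lemmas established earlier in the proof of the theorem.
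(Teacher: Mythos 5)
Your proof is correct and follows essentially the same route as the paper: count appearances of any shared node, use the appears-at-most-twice lemma and the start-and-end lemma to force the intersection down to the single excursion point, and then invoke Condition 4 on the resulting fully traversed $FEC_{x,0,z}$ (your handling of the degenerate sub-cases is in fact slightly more explicit than the paper's). One small fix: in your second case the start-and-end lemma must be applied to $\hat{C}$ (the unique \emph{consecutive} odd cycle factor of $W$) rather than to $C$, whose traversal is interrupted by the excursion; applying it to $\hat{C}$ forces $v=u'$ and yields the same contradiction.
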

    \begin{claimproof}
    Let $W=uv..wx_1,...,x_kw..uv$ where $C=u...w...uv$ and the excursion to $\hat{C}$ is given by $wx_1...x_kw$.
    Now assume that $\hat{C}$ contains a node from $C$ which occurs strictly before $w$ in $W$. This node is on a consecutive odd cycle and appears twice.
    There must exist a latest such node in the ordering of $C$, we denote it $y$.
    Since $y$ is on a consecutive odd cycle and appears twice in $W$ it must be the start and end point of $\hat{C}$. 
    However, then either $y=w$ which is impossible or $y$ appears three times, violating Lemma \ref{lemma: Three is a magic number}. 
    The same argument holds taking the earliest node shared by $C$ and $\hat{C}$ strictly after $w$. Thus, the only node that can be shared by $C$ and $\hat{C}$ is $W$, implying that $W$ forms two odd cycles sharing a single node.
    However, this is an $FEC$ which is fully traversed and so violates Condition $4$.\\
    Thus, $W$ cannot exist and so there can be no recurrent message in $S$.
    \end{claimproof}
\end{proof}
We can now immediately obtain a proof of corollary \ref{corr: balance is fast}.
\begin{proof}
    If there does not exist $k\leq 2|E|$ such that $A^k(S)=\emptyset$ then there must be a message path of length $2|E|+1$ for some message in $S$.
    By the pigeon hole principle this message path must visit the same edge twice travelling in the same direction.
    Thus, this requires a recurrent message.
    Therefore, $S$ cannot be balanced.
\end{proof}
\section{Proof of Uniqueness (Theorem \ref{thm: AF is unique})}
\label{apx: Uniqueness}
In this appendix we prove \cref{thm: AF is unique}.
We begin with the following immediate observations:
\begin{observation}
    \;
    \label{obs: basic uniqueness properties}
    \begin{enumerate}
        \item For all $\emptyset\neq S \subset \mathbb{N}$ and $u \in \mathbb{N}$, $b(u,S)\neq \emptyset$. Otherwise no new nodes will be informed after the first round, violating correctness.
        \item For $u, v, w \in \mathbb{N}$, $f(v,\{u,w\},\{u\})\in \{\{w\},\{u,w\}\}$. Otherwise if $v$ is a bridge on the graph the message can never reach the other side.
        \item For $u, v \in [\kappa]$, if $f(u,\{v\},\{v\})=v$ then $f(v,\{u\},\{u\})=\emptyset$. Otherwise $P$ will not terminate on the two node path labelled with $u$ and $v$.
        \item For $u,v,w \in [\kappa]$ if $f(u,\{v\},\{v\})=\{v\}$ and $f(w,\{v\},\{v\})=\{v\}$ then either $f(v,\{u,w\},\{u\})=\{u,w\}$, $f(v,\{u,w\},\{w\})=\{u,w\}$ or both. Additionally, $f(v,\{u,w\},\{u,w\})=\emptyset$. Otherwise the protocol would not terminate on the three node path with labels $uvw$. \label{obs:p3 behaviour}
    \end{enumerate}
\end{observation}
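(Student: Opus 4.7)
The plan is to verify each of the four items by contradiction on the smallest graph that discriminates the stated constraint, exploiting \StrictStatelessness{} (a non-initiator that received nothing sends nothing, which is already encoded in the requirement $f(u,N(u),\emptyset)=\emptyset$) and \Determinism{} (once a local configuration recurs, the execution is periodic).

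For item~(1), take the single edge $uv$ with initiator $u$: if $b(u,\{v\})=\emptyset$ then no message is transmitted in round~$1$, and by \StrictStatelessness{} $u$ cannot retain that it was an initiator, so $v$ never activates its forwarding function, contradicting correctness. For item~(2), embed $v$ as a cut vertex on the path $uvw$ initiated at $u$; item~(1) forces the message $u\to v$ in round~$1$, and if $w\notin f(v,\{u,w\},\{u\})$ then by \Determinism{} $v$ will never forward to $w$ whenever it is in the same local state, so $w$ is never informed. Combined with $f(v,\{u,w\},\{u\})\subseteq N(v)=\{u,w\}$ this gives the stated two alternatives. For item~(3), run the two-node path from $u$: round~$1$ gives $u\to v$, round~$2$ gives $v\to u$ under the hypothesis $f(v,\{u\},\{u\})=\{u\}$, and round~$3$ returns to the round-$1$ configuration under the hypothesis $f(u,\{v\},\{v\})=\{v\}$, violating termination by \Determinism{}.

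Item~(4), which I expect to be the main obstacle, is handled by tracing the three-node path $uvw$ initiated at $v$. By item~(1) some non-empty $b(v,\{u,w\})$ fires; the leaves reflect by hypothesis, so $v$ next receives in exactly one of the states $\{u\}$, $\{w\}$, or $\{u,w\}$. A case analysis shows that if both $f(v,\{u,w\},\{u\})\neq\{u,w\}$ and $f(v,\{u,w\},\{w\})\neq\{u,w\}$ then (using item~(2) to fix the only alternatives as $\{w\}$ and $\{u\}$ respectively) the system settles into a period-$4$ ping-pong with $v$ alternating between received sets $\{u\}$ and $\{w\}$ and never going silent, contradicting termination; hence at least one of the two sets must equal $\{u,w\}$. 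Once $v$ has fired to both leaves simultaneously, it receives $\{u,w\}$ in the following round, and any non-empty $f(v,\{u,w\},\{u,w\})$ would be fully reflected by both leaves and return the configuration to the same state, again contradicting termination, so $f(v,\{u,w\},\{u,w\})=\emptyset$. The only real bookkeeping is this period-$4$ trace in item~(4); items~(1)--(3) are one-line consequences of the minimal counterexample graph.
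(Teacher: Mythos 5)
Your overall route is the same as the paper's: the paper presents these as ``immediate observations'' justified by exactly the minimal witness graphs you use (an edge/star for (1), the path $uvw$ for (2) and (4), the two-node path for (3)). Items (1)--(3) are fine; for item (1) note only that the claim quantifies over arbitrary non-empty $S$, so the witness should be the star with centre $u$ and leaf set $S$ rather than the single edge (the argument is otherwise unchanged).

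There is, however, a concrete hole in your item (4). For the first claim you initiate the broadcast at the centre $v$, but that initiator does not force a contradiction in every sub-case: if $b(v,\{u,w\})=\{u,w\}$ and $f(v,\{u,w\},\{u,w\})=\emptyset$, the run initiated at $v$ is simply ``$v$ sends to both leaves, both reflect, $v$ receives $\{u,w\}$ and goes silent'' --- correct and terminating --- and $v$ is never placed in received-state $\{u\}$ or $\{w\}$, so the period-$4$ ping-pong you invoke never starts. The claim must instead be witnessed by initiating at a leaf, say $u$: then $v$ is forced into state $\{u\}$ in round $2$, and if $f(v,\{u,w\},\{u\})=\{w\}$ and $f(v,\{u,w\},\{w\})=\{u\}$ (the only alternatives left by item (2)) the period-$4$ cycle follows and termination fails. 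Similarly, your argument for $f(v,\{u,w\},\{u,w\})=\emptyset$ only literally works when the response is $\{u,w\}$ (``fully reflected by both leaves and return the configuration to the same state''); if the response is a singleton, only one leaf reflects, $v$ lands in a singleton received-state, and you must chain through the first claim (at least one singleton response equals $\{u,w\}$) to close the loop into a period-$4$ or period-$6$ non-terminating cycle. Both repairs are routine, but as written the case analysis does not close.
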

Our strategy will be to show the existence of a small set of labels which when restricted to $P$ behaves identically to Amnesiac Flooding for subcubic graphs. From there we can inductively argue that this holds for all labels up to degree $3$ and then in turn that this holds for all graphs. We begin with the following lemma, where for $u,v \in [\kappa]$, we denote $f(u,\{v\},\{v\})=\{v\}$ by $u>_*v$ for notational compactness, however please note that this relation is not transitive. 
\begin{lemma}
\label{lemma: ramsey}
    For any $c'$ if $\kappa$ is sufficiently large $\exists S \subseteq [\kappa]$ such that $\forall u,v \in S$ $f(u,\{v\},\{v\})=\emptyset$ and $|S|=c'$.
\end{lemma}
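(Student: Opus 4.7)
The plan is to encode the leaf-to-leaf relation $u >_* v$, meaning $f(u,\{v\},\{v\})=\{v\}$, as an oriented graph and then invoke Ramsey's theorem. I define a directed graph $D$ on vertex set $[\kappa]$ by placing an arc $u \to v$ precisely when $u >_* v$. By Observation~\ref{obs: basic uniqueness properties}(3) the relation is anti-symmetric, so $D$ contains no pair of opposing arcs and is a legitimate oriented graph. Let $U$ be its underlying undirected shadow: $uv$ is an edge of $U$ iff at least one of $u \to v$ or $v \to u$ lies in $D$. An independent set in $U$ of size $c'$ is, by construction, exactly a set $S$ with $|S|=c'$ and $f(u,\{v\},\{v\}) = \emptyset$ for all distinct $u,v \in S$, which is the conclusion sought.

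Next I would apply Ramsey's theorem to $U$: setting $\kappa \geq R(r,c')$ for a suitable constant $r$ guarantees that $U$ contains either a clique of size $r$ or an independent set of size $c'$. The independent-set alternative is exactly what I want, so the remaining burden is to rule out the clique alternative. A clique of size $r$ in $U$ lifts to a full tournament on $r$ vertices inside $D$, so it suffices to show that no tournament on $r$ vertices can be embedded in $D$ whenever $P$ is a correct, terminating broadcast protocol.

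The hard part will be this last step: producing, from every tournament of size $r$, an explicit small network on which $P$ cannot both broadcast correctly and terminate, contradicting the hypothesis. The intuition is that each arc $u \to v$ commits the leaf $u$ to bouncing any received message back to $v$, so large tournaments concentrate many such reflection relations at common vertices. I would look for a vertex $v$ with a large in-neighbourhood in the tournament (averaging gives in-degree at least $(r-1)/2$) and assemble star- and path-based gadgets on $v$ together with some of its in-neighbours. On each gadget, Observations~\ref{obs: basic uniqueness properties}(2) and~\ref{obs: basic uniqueness properties}(4) force $f(v,\cdot,\cdot)$ to behave in particular ways to avoid looping, and for $r$ large enough these local constraints become jointly inconsistent. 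The main obstacle is identifying the minimal $r$ for which the aggregated constraints collide — the paper commits to $r = 9$, i.e.\ $\kappa = R(9,c')$ with $c' = 8$ used later — and carefully verifying the gadget reductions; smaller tournaments can still be realized by valid non-AF-like protocols thanks to the freedom left by Observation~\ref{obs: basic uniqueness properties}(4), so the bound cannot be pushed much lower without a finer construction.
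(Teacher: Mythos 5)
Your framework is the same as the paper's: encode $u>_*v$ as an arc of a digraph $D$ on $[\kappa]$, note that an independent set in the underlying undirected graph is exactly the desired $S$, apply Ramsey to reduce to excluding a tournament in $D$. But the entire mathematical content of the lemma lives in the step you defer — showing that a tournament of some fixed size cannot occur — and the direction you sketch for it does not match what actually works. A single vertex $v$ with many in-neighbours (many leaves that reflect messages back to $v$) is \emph{not} an obstruction: Observation~\ref{obs: basic uniqueness properties}(4) already shows that one ``cherry'' $u>_*v<_*w$ is consistent with correct termination, because $v$ is merely forced to reply to at least one side of the $P_3$ with messages to both neighbours, and a star centred at $v$ with reflecting leaves still terminates. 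So hunting for a high in-degree vertex by averaging is likely a dead end.

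The paper's actual obstruction is \emph{two vertex-disjoint} cherries: if $u>_*v<_*w$ and $x>_*y<_*z$ with all six labels distinct, then on the path $uvwxyz$ the nodes $u,v,y,z$ retain their $P_3$ neighbourhoods (so their forced behaviour from Observation~\ref{obs: basic uniqueness properties}(4) carries over), only $w$ and $x$ have any freedom, and an exhaustive case check shows no choice for them yields termination from every initiator on $P_6$ and its connected subgraphs. Hence $D$ excludes the digraph $H$ consisting of two disjoint vertices each with two in-neighbours. The tournament is then killed by a clean counting argument: any orientation of $K_4$ has $6$ arcs but at most $4$ can be absorbed with all in-degrees $\le 1$, so every $K_4$-tournament contains a vertex of in-degree $\ge 2$; a tournament on $8$ vertices splits into two disjoint $K_4$'s and therefore contains $H$. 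This is why $R(8,c')$ suffices. Your proposal is missing both the identification of $H$ (via the $P_6$ gadget) and the disjoint-$K_4$ argument, and without them the clique alternative is not ruled out, so the proof is incomplete.
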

\begin{proof}
    We begin by considering $u,v,w,x,y,z \in [\kappa]$ such that $u>_*v<_*w$ and $x>_*y<_*z$. By  Observation \ref{obs: basic uniqueness properties} (4), wlog. we can assume that $f(v,\{u,w\},\{u\})=\{u,w\}$ and that $f(y,\{x,z\},\{z\})=\{x,z\}$. Now consider the path on 6 nodes labelled $uvwxyz$, as well as two paths on three nodes labeled $uvw$ and $xyz$. Since $u,v,y,z$ have the same neighbourhood on $P_6$ as on their corresponding $P_3$ they must be obeying the same policy. Thus, only $w$ and $x$ have a different policy. However, by exhaustive search it can be shown that there exists no protocol that will terminate from every node of the $P_6$, as well as each connected sub-graph. Thus there cannot exist such $u,v,w,x,y,z$. \\

    Consider the digraph $D$ with node set $V=[\kappa]$ and edges $\{(u,v)\in V\times V|u>_*v\}$, the set $S$ in the statement of the lemma corresponds to an independent set of size $c'$. The condition derived in the previous section implies that $D$ is $H$-subgraph free for $H$ the digraph in figure \ref{fig: ramsey}. There exists a $\kappa$ (explicitly $R(8,c')$) such that $D$ must contain either a tournament on $8$ nodes or an independent set on $c'$ nodes. In the latter case the statement holds, in the former consider that there is no way to orient the edges of $K_4$ such that no node has two predecessors. Therefore, if $D$ contains a tournament on $8$ vertices then it contains $H$ as a sub-digraph. Thus, the claim holds for $\kappa>R(8,c')$.
\end{proof}
\begin{figure}
    \centering
    \includegraphics[width=0.7\textwidth]{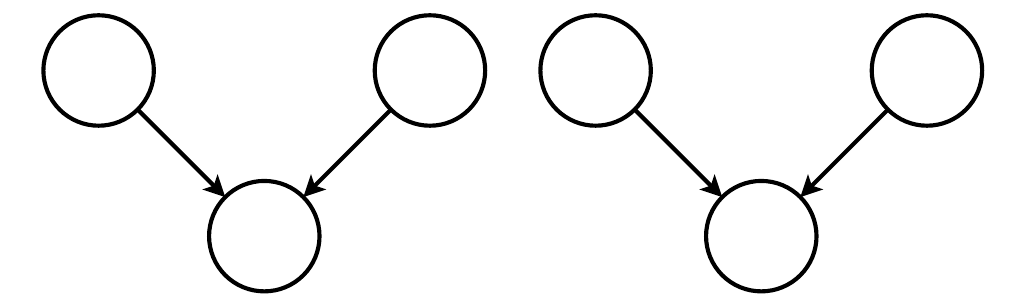}
    \caption{The forbidden sub-digraph used in the argument of lemma \ref{lemma: ramsey}.}
    \label{fig: ramsey}
\end{figure}
With this established, we can strengthen this argument to obtain Amnesiac like behaviour for degree 2.
\begin{lemma}
    \label{lemma:degree 2}
    If there exists $S\subseteq [\kappa]$ such that $|S|>6$ where $\forall u,v \in S$ $u\not>_*v$, then there exists $T\subseteq S$ with $|T|=|S|-3$ such that $\forall u,v,w \in T$, $f(v, \{u,w\},\{u\})=\{w\}$.
\end{lemma}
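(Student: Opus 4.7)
My plan is to identify a set $B\subseteq S$ of ``bouncer'' vertices of size at most $3$ and take $T$ to be $S$ minus any $3$-element superset of $B$ within $S$.

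First, set up notation. Call a triple $(u,v,w)$ of distinct elements of $S$ \emph{bad} if $f(v,\{u,w\},\{u\})=\{u,w\}$, i.e., $v$ bounces the message back to its sender $u$ rather than forwarding it purely onward. By the hypothesis $u\not>_*v$ for all $u,v\in S$ together with Observation~\ref{obs: basic uniqueness properties}(2), the only alternative value is $f(v,\{u,w\},\{u\})=\{w\}$, which is exactly the Amnesiac Flooding behaviour. So the conclusion of the lemma is equivalent to saying that no triple in $T^3$ is bad. Define $B=\{v\in S : \exists u,w\in S\text{ with }(u,v,w)\text{ bad}\}$; any $T\subseteq S\setminus B$ automatically has the no-bad-triple property, since the middle vertex of a bad triple must lie in $B$.

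Next, extract the central structural constraint from the $P_4$ test graph. Consider the labelled path $u$--$v$--$w$--$x$ with four distinct labels in $S$, initiated at $u$. A careful case analysis on the three critical values $X_v=f(v,\{u,w\},\{u\})$, $X_w=f(w,\{v,x\},\{v\})$, and $Y_v=f(v,\{u,w\},\{w\})$, each of which lies in a two-element set by Observation~\ref{obs: basic uniqueness properties}, shows that the only way for $P$ to fail to terminate on this graph is for an indefinite loop to form between $v$ and $w$. Tracing the rounds, such a loop requires both $(v,w,x)$ and $(w,v,u)$ to be bad, while the endpoints $u$ and $x$ act as harmless sinks thanks to the leaf property of $S$. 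Since $P$ is terminating, the constraint follows: for any four distinct $u,v,w,x\in S$, the triples $(v,w,x)$ and $(w,v,u)$ cannot both be bad.

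The final and most demanding step is to prove $|B|\le 3$. The strategy is a contradiction argument: assume $B$ contains four distinct vertices $v_1,v_2,v_3,v_4$, each with an associated witness pair $(a_i,b_i)\in(S\setminus\{v_i\})^2$ such that $(a_i,v_i,b_i)$ is bad. Using $|S|>6$, there are at least three ``fresh'' elements beyond any chosen four, giving room to relabel witnesses. The task is then to combinatorially construct a concrete labelling $u$--$v$--$w$--$x$ of $P_4$ with all four labels in $S$ and all four distinct for which both $(v,w,x)$ and $(w,v,u)$ are bad, contradicting the $P_4$ constraint. The principal obstacle lies here: depending on how the witness sets for the four bouncers intersect, several cases arise, and in degenerate cases where witnesses coincide one must invoke further small test graphs (for example, appropriately labelled $P_5$, $C_3$, or $C_4$) to derive the contradiction. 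Once $|B|\le 3$ is established, any $B''\supseteq B$ with $|B''|=3$ contained in $S$ yields $T:=S\setminus B''$ of size $|S|-3$ with the desired property.
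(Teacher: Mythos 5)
There is a genuine gap, and it sits exactly where you flag it: the claim $|B|\le 3$ is never established, and in fact it is the wrong target --- it does not follow from the hypotheses. The difficulty is that a ``bad'' triple $(u,v,w)$ is a statement about $v$'s behaviour for the \emph{specific} neighbourhood $\{u,w\}$ and sender $u$; nothing transfers to other witness pairs. Your $P_4$ constraint only forbids two bad triples that overlap in the very special pattern where each middle is the other's witness-sender ($(v,w,x)$ and $(w,v,u)$ bad), so it places no bound on the number of distinct middles. Concretely, consider four bad triples $(a,v_1,b_1),\dots,(a,v_4,b_4)$ all sharing the sender $a$ but with four distinct middles. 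No pair of these can be realised ``facing each other'' on a path (a path segment $v_i$--$v_j$ would give $v_i$ a neighbourhood containing $v_j$, not $a$), and on the natural test graph --- a spider centred at $a$ with legs $a$--$v_i$--$b_i$ --- the protocol can still terminate correctly, e.g.\ by $a$ answering the simultaneous bounce-back from $\{v_2,v_3,v_4\}$ with the empty set, which Definition~\ref{def: functions} permits. So $|B|=4$ is consistent with correctness and termination on every graph, yet the lemma still holds for this configuration because the single element $a$ already hits every bad triple. This shows your intermediate claim is strictly stronger than what is true, so the ``most demanding step'' you defer cannot be completed by any amount of case analysis.

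The fix is to aim at a weaker combinatorial target: a $3$-element \emph{transversal} of the family of bad triples, allowed to hit a triple at an endpoint and not only at its middle. The paper's route is: either some three identifiers meet every bad triple (remove them), or there exist two \emph{vertex-disjoint} bad triples $(a,b,c)$ and $(g,h,i)$; the latter is excluded by analysing the path $abcghi$, on which both middles simultaneously receive exactly their witness neighbourhoods --- something your $P_4$ cannot arrange for triples that are not entangled in the mutual-witness pattern. Once no two bad triples are disjoint, the three elements of any single bad triple form the required transversal, giving $T=S\setminus\{x,y,z\}$. Your setup (the definition of bad triples, the reduction of the conclusion to ``no bad triple inside $T$'', and the use of the leaf hypothesis to make path endpoints inert) is sound and matches the paper; it is only the choice of what finite quantity to bound, and the accompanying path construction, that needs to change.
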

\begin{proof}
    There are several cases to consider. If there are no triples of the form $u,v,w \in S$ such that $f(v,\{u,w\},\{u\})=\{u,w\}$ then the claim is trivial. Now assume that all such triples include one of the three identifiers $x,y,z \in S$. In this case, we take the remaining $|S|-3$ to be $T$ and the claim holds. Finally, assume that there are two disjoint triples $a,b,c$ and $g,h,i$ with this property. Without loss of generality we can assume that $f(b,\{a,c\},\{a\})=\{a,c\}$ and $f(h,\{g,i\},\{g\})=\{g,i\}$. Taking the path on $6$ nodes labelled $abcghi$, we have that either $b,c,g$ or $c,g,h$ must also be such a triple, or the protocol will not terminate on our path. 
    However, should $bcg$ or $cgh$ be such triples, we can shorten the path by deleting vertices from each end to obtain a labelling that once again does not terminate.
    More concretely, while ever there are only two such triples on the path that are ``facing each other'' they will bounce the message back and forth between each other.
    If there are more than two triples we just remove vertices until only two remain.
    Thus, as $P$ terminates for all graphs and labellings there of, we have that all such triples share one of three identifiers, and so we can take $T$ to be $S\setminus\{x,y,z\}$.
\end{proof}
\begin{corollary}
    If $\kappa\geq R(9,8)$ then $P$ is AF on the set $T$ described in the statement of lemma \ref{lemma:degree 2} up to degree $2$.
\end{corollary}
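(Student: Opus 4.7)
The plan is to combine Lemma \ref{lemma: ramsey} and Lemma \ref{lemma:degree 2} with a short degree-by-degree case check against $P_{AF}$. First I would apply Lemma \ref{lemma: ramsey} with $c' = 9$, which is permissible since $\kappa \geq R(9,8) = R(8,9)$, producing a set $S \subseteq [\kappa]$ with $|S| \geq 9$ such that $f(u,\{v\},\{v\}) = \emptyset$ for every $u,v \in S$ (leaves never respond to leaves). Feeding this $S$ into Lemma \ref{lemma:degree 2} then yields a subset $T \subseteq S$ with $|T| \geq 6$ satisfying $f(v,\{u,w\},\{u\}) = \{w\}$ for every triple $u,v,w \in T$. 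It remains to argue that $P$ and $P_{AF}$ behave identically on every connected graph of maximum degree at most $2$ whose vertices are labelled with distinct elements of $T$, and from every choice of initiator.

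Any such graph is a path or cycle, so only nodes of degree $1$ and $2$ occur, and it suffices to check $b$ and $f$ at those two degrees. For the forwarding function $f$, the relevant cases are $f(u,\{v\},\{v\}) = \emptyset$ (the defining property of $S$, which matches $\{v\} \setminus \{v\}$ under AF), $f(v,\{u,w\},\{u\}) = \{w\}$ and its symmetric sibling $f(v,\{u,w\},\{w\}) = \{u\}$ (both instances of Lemma \ref{lemma:degree 2}), and $f(v,\{u,w\},\{u,w\}) = \emptyset$ (from Observation \ref{obs: basic uniqueness properties}(4)); the trivial input $f(v,N,\emptyset) = \emptyset$ is enforced by the definition of a protocol. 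In every case the output coincides with the AF rule $f_{AF}(u,S,T) = S \setminus T$.

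The only step that demands a genuine argument, and thus the main (mild) obstacle, is matching $b$ at degree $2$. The degree-$1$ case $b(u,\{v\}) = \{v\}$ is immediate from Observation \ref{obs: basic uniqueness properties}(1). For $b(v,\{u,w\}) = \{u,w\}$ I would argue by contradiction from correctness: if $w \notin b(v,\{u,w\})$, then on the path $w\text{--}v\text{--}u$ initiated at $v$, only $u$ is informed in round $1$; since $u$ is a leaf and the defining property of $S$ gives $f(u,\{v\},\{v\}) = \emptyset$, and since Strict Statelessness forces $v$ to forget being an initiator in round $2$, no further message is ever sent and $w$ remains uninformed, contradicting the correctness of $P$. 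By symmetry $u$ cannot be omitted either, so $b(v,\{u,w\}) = \{u,w\}$, again matching AF.

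Having pinned down every input-output pair of $b$ and $f$ at degrees $1$ and $2$ over labels from $T$, a straightforward induction on the round number shows that $P$ and $P_{AF}$ produce the same configuration in every round on any path or cycle labelled from $T$ and from any initiator. Hence no graph of maximum degree at most $2$ with labels drawn from $T$ distinguishes $P$ from $P_{AF}$, which is exactly the statement that $P$ is AF up to degree $2$ on $T$.
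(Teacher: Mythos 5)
Your overall structure matches the paper's: pin down $b$ at degrees $1$ and $2$ and $f$ on every input that can arise on a path or cycle labelled from $T$, then conclude by induction on rounds. Your arguments for $b(u,\{v\})=\{v\}$, for $b(v,\{u,w\})=\{u,w\}$ (the three-node path initiated at the centre, using that leaves in $T$ never respond), and the citations of Lemma~\ref{lemma: ramsey} and Lemma~\ref{lemma:degree 2} all line up with the paper.

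There is, however, one genuine gap: you dispose of the case $f(v,\{u,w\},\{u,w\})=\emptyset$ by citing Observation~\ref{obs: basic uniqueness properties}(4), but that observation is stated (and justified) under the hypothesis that both leaves \emph{do} return the message, i.e.\ $f(u,\{v\},\{v\})=\{v\}$ and $f(w,\{v\},\{v\})=\{v\}$ --- exactly the opposite of the defining property of $T$. Its justification is non-termination on the three-node path, and on that path with labels from $T$ the centre never receives from both neighbours simultaneously (the leaves stay silent after round~$2$), so the path argument says nothing about $f(v,\{u,w\},\{u,w\})$ for $u,v,w\in T$. This input genuinely occurs on cycles (e.g.\ the initiator of $C_3$ receives from both neighbours three rounds later, as does the antipode on $C_4$), so it must be pinned down, and the paper does so with a separate argument on the triangle $C_3$ labelled from $T$: if the initiator responds to both neighbours the configuration repeats, if it responds to exactly one the message circulates forever, hence it must respond to neither. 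Substituting that cycle-based argument for your appeal to Observation~\ref{obs: basic uniqueness properties}(4) repairs the proof.
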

\begin{proof}
    There are two remaining cases to consider, the behaviour of $b$ and the case where a node receives messages from both of its neighbours.
    In the former case, consider the path on three nodes and broadcast from the central node.
    Since both of the broadcasting node's neighbours are leaves and by assumption will not return the message, for broadcast to be correct the initial node must sent to both of its neighbours.\\
    On the other hand, take $C_3$ labelled with $u,v,w \in T$.
    If $u$ initiates the broadcast, it will receive the message from both $v$ and $w$ three rounds later.
    If it sends to both $v$ and $w$, the process will repeat indefinitely.
    However, if it sends to only one of the two, the message will then circle forever.
    Thus, it must send to neither $v$ or $w$.\\
    Therefore, the claim holds.
\end{proof}
We now show that this behaviour extends to subcubic graphs.
\begin{lemma}
    \label{lemma: degree 2 to 3}
    Let $T \subseteq [\kappa]$ be a set of labels such that $|T|\geq 6$ and $P$ is AF up to degree $2$ on $T$, then $P$ is AF up to degree $3$ on $T$.
\end{lemma}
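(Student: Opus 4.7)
The plan is to show that on any subcubic graph labelled exclusively with identifiers from $T$, the only behaviour of $P$ consistent with correct termination of broadcast is that of Amnesiac Flooding. Since $P$ is already AF up to degree $2$ on $T$, every leaf and every degree-$2$ node in such a graph is pinned down to its AF behaviour; the only freedom left lies at the degree-$3$ nodes. For a degree-$3$ node $v$ with ordered neighbourhood $\{u,w,x\}\subseteq T$ there are four quantities to fix (all other subsets being symmetric under relabelling):
$b(v,\{u,w,x\})$, $f(v,\{u,w,x\},\{u\})$, $f(v,\{u,w,x\},\{u,w\})$, $f(v,\{u,w,x\},\{u,w,x\})$, whose AF values are $\{u,w,x\}$, $\{w,x\}$, $\{x\}$, and $\emptyset$ respectively. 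I would classify any deviation from these values as either \emph{missing} (failing to send to a neighbour that AF would send to) or \emph{extraneous} (sending to a neighbour that AF would not).

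For the missing deviations I would use tree-like test graphs in which $v$ is a cut vertex separating components whose only route to the rest of the network passes through $v$. The star $K_{1,3}$ centered at $v$ with $v$ as the initiator immediately forces $b(v,\{u,w,x\})=\{u,w,x\}$, since the three leaves will not forward under the established degree-$1$ behaviour. The \emph{tripod} (three length-two paths glued at $v$), initiated at a leaf and using AF behaviour at the degree-$2$ middle nodes, forces $\{w,x\}\subseteq f(v,\{u,w,x\},\{u\})$, and a Y-shaped graph with two informed branches converging on $v$ forces $x\in f(v,\{u,w,x\},\{u,w\})$. Each construction fits in at most six distinct labels, which are available since $|T|\geq 6$.

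For the extraneous deviations I would use subcubic graphs containing a short cycle through $v$: the simplest being a triangle $uvw$ with a pendant edge $vx$, supplemented by the prism $K_3\times K_2$ and the four-cycle with a chord to handle the $\{u,w\}$ and $\{u,w,x\}$ cases. In each case, the degree-$2$ nodes on the cycle already behave as AF, so any extraneous send at $v$ re-injects a message into the cycle in the wrong parity class; by \cref{thm:balance} the resulting configuration is imbalanced, and Amnesiac Flooding (hence $P$, which agrees on every other node) cannot terminate, contradicting correct termination of $P$. This is the cleanest way to rule out the extraneous cases and it leverages the dichotomy of the previous section to avoid long ad hoc case analyses.

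The main obstacle is the interdependence of the four rules, since a single test graph typically exercises more than one of them during its execution. I would sidestep this by fixing the verification order $(a)\to(b)\to(d)\to(c)$: when testing a given rule, all previously handled rules are already forced to agree with AF, so the execution of the test graph is completely determined except at the targeted node $v$ in the targeted round, making any deviation there directly observable as a violation of correctness or termination. Combining the four verifications shows that $P$ acts as Amnesiac Flooding at every degree-$3$ node labelled from $T$, which together with the degree-$\leq 2$ assumption yields that $P$ is AF up to degree $3$ on $T$.
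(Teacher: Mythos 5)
Your overall architecture closely mirrors the paper's: small subcubic test graphs (the star, the triangle-with-pendant, a $K_{2,3}$-like graph and the diamond), a split into ``missing'' versus ``extraneous'' sends, and a fixed verification order so that each test graph's execution is determined everywhere except at the targeted node. However, the step you lean on to kill the extraneous cases does not work. \cref{thm:balance} is a statement about the operator $A_G$, i.e.\ about executions of Amnesiac Flooding itself; it says nothing about executions of $P$, whose trajectory diverges from $A_G$'s in \emph{every} subsequent round in which the deviant degree-$3$ node $v$ receives messages. Worse, in the central case your premise is simply false: on the paw graph (triangle $uvw$ with pendant $vx$, broadcast from $x$), when $v$ later receives from both $u$ and $w$ and extraneously returns the message to both, the resulting configuration on the odd cycle has exactly one clockwise and one anticlockwise message --- it is \emph{balanced}, and Amnesiac Flooding started from it would terminate in three rounds. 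The protocol $P$ nevertheless fails to terminate, but only because $v$ repeats its deviation every time the two messages come back around; this has to be exhibited by directly tracing the execution to a repeated configuration (which is exactly what the paper does), not by an appeal to imbalance. The asymmetric sub-case ($v$ returning to exactly one of $u,w$) does yield an odd number of messages on the cycle, but again you cannot conclude non-termination of $P$ from \cref{thm:balance}, since $v$'s future deviations could in principle restore balance; a direct trace is still required.

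A second, smaller gap: none of your named constructions rules out the extraneous echo in the \emph{singleton} case, i.e.\ $u\in f(v,\{u,w,x\},\{u\})$. On the paw graph the only single-sender activation of $v$ comes from the pendant leaf, which silently absorbs any echo, so that graph cannot detect this deviation. The paper handles it separately with graphs containing two adjacent degree-$3$ nodes and a three-way case analysis over whether each echoes the other (including the asymmetric possibility, which requires its own construction). You would need to add something of this kind --- or at least a cycle-based gadget in which the echoed-to neighbour has degree $2$ and the echo provably circulates --- before the lemma is fully established.
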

\begin{proof}
    We will construct a set of subcubic graphs such that only AF behaviour will lead to broadcast and termination for $P$ when they are labelled from $T$.\\
    Our first contestant and the base for much of this argument is the star $S_3$, labelled with $u,v,w,x \in T$ with $u$ the centre.
    Since $P$ is AF on $T$ up to degree $2$, neither $v$, $w$ nor $x$ will ever return a message to $u$.
    Thus for broadcast at $u$ to be successful, we must have that $b(u, \{v,w,x\})=\{v,w,x\}$.
    Furthermore, if broadcast is initiated at a leaf, $u$ must pass the message to both other leaves, thus $\{w,x\}\subseteq f(u,\{v,w,x\},\{v\})$.
    We can then repeat this argument substituting all possible arrangements of $T$ to obtain this rule for all quadruples.
    In fact we will assume this occurs for each graph labelling pair we consider for the remainder of this proof.\\
    Now consider, the same $S_3$ but with $v$ and $w$ connected by a new edge to form a paw graph.
    Should broadcast be initiated at $x$, then after four rounds $u$ will receive a message from both $v$ and $w$.
    If $\{v,w\} \subseteq f(u,\{v,w,x\},\{v,w\})$ then the same state will be reached three rounds later leading to non-termination.
    However, this does not directly preclude an asymmetric policy where $u$ sends a message back to either $v$ or $w$, but not both.  
    Without loss of generality, assume $v \in f(u,\{v,w,x\},\{v,w\})$ but $w \notin f(u,\{v,w,x\},\{v,w\})$. 
    There are now two possibilities, either $f(u,\{v,w,x\},\{w\})=\{v,w,x\}$ or $f(u,\{v,w,x\},\{w\})=\{v,x\}$. 
    In both cases we obtain a repeating sequence and so neither terminate. 
    Thus, $v,w, \notin f(u,\{v,w,x\},\{v,w\})$.\\
    We now take our original $S_3$ and attach a new node $y \in T$ to $v$ and $w$.
    Then if the broadcast is initiated at $y$, $u$ will only receive the message once. Thus as it is a bridge, it must pass the message to $x$ the first time it receives the message. Thus, $f(u,\{v,w,x\},\{v,w\})=\{x\}$ and we only need to determine the policy when receiving from one or all of its neighbours.\\

    Unfortunately, the case for receiving a single message is slightly more complex and will require we make use of a communication graph with two nodes of degree $3$.
    Fortunately, the previous policies apply to all quadruples and so the behaviour for both nodes is partially determined.
    Let $u,v,w,x,y,z \in T$, we then wlog. have three possibilities to consider.
    \begin{itemize}
        \item $x \notin f(u,\{v,w,x\},\{x\})$ and $u\notin f(x,\{u,y,z\},\{u\})$: This is the policy of Amnesiac Flooding.
        \item $x \in f(u,\{v,w,x\},\{x\})$ and $u\notin f(x,\{u,y,z\},\{u\})$: Consider the first graph and labelling in ~\cref{fig: diamond and h}, the protocol will not terminate broadcasting from $u$.
        \item $x \in f(u,\{v,w,x\},\{x\})$ and $u\in f(x,\{u,y,z\},\{u\})$: Consider the first graph and labelling in ~\cref{fig: diamond and h}, the protocol will not terminate broadcasting from $u$ or $x$.
    \end{itemize}
    \begin{figure}
        \centering
        \includegraphics[width=\linewidth]{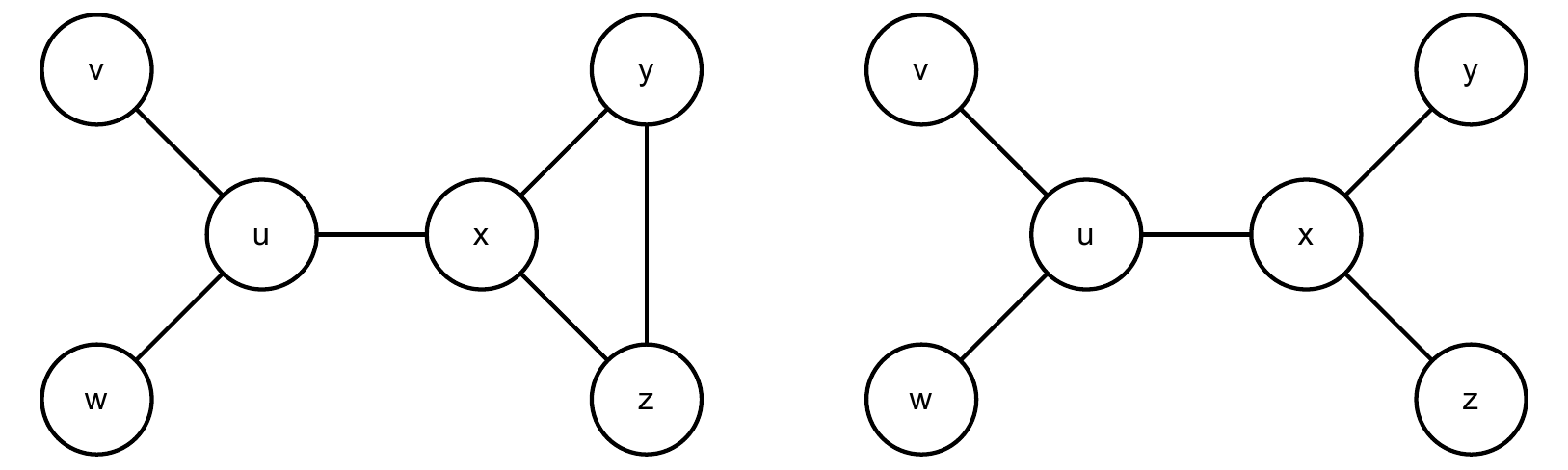}
        \caption{Left: The graph used in the proof of lemma \ref{lemma: degree 2 to 3} to exclude the case $x \in f(u,\{v,w,x\},\{x\})$ and $u\notin f(x,\{u,y,z\},\{u\})$. Right: The H graph used in the proof of lemma \ref{lemma: degree 2 to 3} to exclude the case $x \in f(u,\{v,w,x\},\{x\})$ and $u\in f(x,\{u,y,z\},\{u\})$.}
        \label{fig: diamond and h}
    \end{figure}
    Finally, we must deal with the case where a degree $3$ node receives from all of its neighbours.
    First, consider $K_{2,3}$ with the nodes of degree $3$ labelled $u$ and $y$, with the degree $2$ nodes labelled $v,w,x$.
    If $f(u,\{v,w,x\},\{v,w,x\}) \notin \{\emptyset, \{v,w,x\}\}$ then the protocol will not terminate when broadcast is initiated at $y$.
    However, if $f(u,\{v,w,x\},\{v,w,x\})=\{v,w,x\}$ then the protocol will not terminate on the diamond graph (degree 2 nodes labelled $v$ and $w$, while degree $3$ nodes are labelled $u$ and $x$) when broadcast is initiated at $u$.
    Therefore, $f(u,\{v,w,x\},\{v,w,x\})=\emptyset$.
\end{proof}
For ease of reference we will combine the following results into the more readable statement below.
\begin{lemma}
    There exists $T\subseteq [\kappa]$ such that $|T|\geq 6$ and $P$ is AF on $T$ up to degree 3.
\end{lemma}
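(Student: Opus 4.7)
The statement is a packaging of the preceding three results, so my plan is to simply chain them together with the right choice of constants. First, I would invoke Lemma~\ref{lemma: ramsey} with the parameter $c'=9$. Since we are assuming $\kappa\geq R(9,8)$, the forbidden subdigraph argument there produces a set $S\subseteq[\kappa]$ of size $9$ on which the relation $>_*$ is empty, i.e.\ $f(u,\{v\},\{v\})=\emptyset$ for all $u,v\in S$. This is the base ingredient: no pair in $S$ bounces leaf messages back at each other.

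Next, I would feed $S$ into Lemma~\ref{lemma:degree 2}. Since $|S|=9>6$, the lemma hands us a subset $T\subseteq S$ with $|T|=|S|-3=6$ on which the path-of-three rule $f(v,\{u,w\},\{u\})=\{w\}$ holds for every triple. Together with the leaf property inherited from $S$, the Corollary immediately following Lemma~\ref{lemma:degree 2} then upgrades this to the statement that $P$ is AF on $T$ up to degree $2$: the corollary handles the remaining two pieces (the initial function $b$ via a broadcast from the centre of $P_3$, and the double-reception case via $C_3$).

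Finally, because $|T|\geq 6$, the hypotheses of Lemma~\ref{lemma: degree 2 to 3} are met, so $P$ is AF on $T$ up to degree $3$ as well. Chaining these three results gives a $T\subseteq[\kappa]$ with $|T|\geq 6$ on which $P$ agrees with Amnesiac Flooding on every subcubic graph whose labels come from $T$, which is exactly the claim.

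No step here is an obstacle, since all the real work was done in the preceding lemmas; the only things to verify are arithmetic (that $9-3\geq 6$) and that $R(9,8)$ is large enough to make Lemma~\ref{lemma: ramsey} applicable with $c'=9$, both of which are guaranteed by the hypothesis $\kappa\geq R(9,8)$ that the theorem statement adopts. I would therefore present the proof as three lines: apply Lemma~\ref{lemma: ramsey} to obtain $S$ with $|S|=9$, apply Lemma~\ref{lemma:degree 2} and its corollary to extract $T$ of size $6$ on which $P$ is AF up to degree $2$, and apply Lemma~\ref{lemma: degree 2 to 3} to promote this to degree $3$.
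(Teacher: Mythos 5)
Your proposal is correct and matches the paper exactly: the paper states this lemma explicitly as a packaging of the preceding results (Lemma~\ref{lemma: ramsey} with $c'=9$, Lemma~\ref{lemma:degree 2} and its corollary to get a $6$-element set on which $P$ is AF up to degree $2$, and Lemma~\ref{lemma: degree 2 to 3} to lift this to degree $3$), with no additional argument. The only verification needed is the arithmetic $9-3=6$ and that $\kappa\geq R(9,8)=R(8,9)$ suffices for the Ramsey step, both of which you check.
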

With this established we can extend the argument to all labels.
While technically the $T$ referred to in the previous lemma is not necessarily $[|T|]$, all labels within $[\kappa]$ can be reordered arbitrarily as they are available for all graphs.
For this reason and ease of presentation we will simply take $[T]=[|T|]$.
\begin{lemma}
    \label{lemma: id induction}
    $P$ is AF on $\mathbb{N}$ up to degree $3$.
\end{lemma}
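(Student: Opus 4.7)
The plan is to prove by induction on $m\ge |T|$ that $P$ is AF on $[m]$ up to degree~$3$. The base case $m=|T|$ is exactly the preceding lemma, after the cosmetic relabelling of $T$ to $[|T|]$ that is permitted by the interchangeability of identifiers noted just above the statement. For the inductive step, assume $P$ is AF on $[m]$ up to degree~$3$ and aim to show $P$ is AF on $[m+1]$ up to degree~$3$.

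By definition, the inductive hypothesis already fixes $b(u,N)$ and $f(u,N,R)$ at every local configuration in which the deciding node and all of its neighbours carry labels from $[m]$. The remaining work is to pin down these two functions at every local configuration in which the new label $m+1$ appears either at $u$ itself or inside $N$. I would case split on the degree $|N|\in\{1,2,3\}$ and on which positions carry $m+1$, and in each case embed the local configuration into a subcubic gadget whose other labels come from $[m]$.

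The strategy inside each case is the one used in the proof of \cref{lemma: degree 2 to 3}: take the small graph used there to rule out non-AF behaviour (a $P_3$ for the leaf and bridge rules, $K_{1,3}$ for the initiator rule, the paw for the two-received response at degree $3$, the two graphs of \cref{fig: diamond and h} for the asymmetric one-received response, and $K_{2,3}$ together with the diamond for the three-received response) and replace exactly the labels whose behaviour is under scrutiny by $m+1$. Because $m\ge |T|\ge 6$, there are enough spare identifiers in $[m]$ to label all remaining vertices pairwise-distinctly, and by the inductive hypothesis those vertices evolve exactly as in Amnesiac Flooding. Any deviation at an $m+1$-labelled vertex therefore either traps the message on one side of a bridge (non-broadcast) or produces a configuration in which the only node not running AF keeps injecting messages into a cycle or FEC inside the gadget, giving a recurrent message and hence, by \cref{thm:balance}, contradicting termination.

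The main obstacle I expect is the degree-$3$, single-received case: the earlier proof needed a gadget with two vertices of degree~$3$, so I would run that gadget in two flavours, first with $m+1$ at one of the two degree-$3$ vertices (fixing $f(m+1,\cdot,\cdot)$) and then with $m+1$ at a degree-$2$ neighbour of such a vertex (fixing $f(u,\cdot,\cdot)$ for $u\in[m]$ with $m+1\in N$); each sub-case is resolved by contradicting termination through the balance criterion. Once the enumeration is complete, $P$ is AF on $[m+1]$ up to degree~$3$, and the induction is done. Since every subcubic graph labelled from $\mathbb{N}$ uses only finitely many identifiers and has maximum degree at most~$3$, AF on $\mathbb{N}$ up to degree~$3$ follows.
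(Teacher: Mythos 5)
Your high-level architecture --- induction on the label set, with the inductive step carried out by embedding the undetermined local configurations into subcubic gadgets whose remaining vertices are forced to run Amnesiac Flooding --- matches the paper's. But there are concrete gaps that stop the argument as written. The first is a size constraint: identifiers are drawn from $[n+\kappa]$ where $n$ is the number of vertices of the graph, so $m+1$ is only a legal label on graphs with at least $m+1-\kappa$ vertices. Your plan reuses the constant-size gadgets of \cref{lemma: degree 2 to 3} verbatim, but once $m+1$ exceeds the order of the gadget plus $\kappa$, none of those graphs can carry the label $m+1$ at all and the induction stalls. This is exactly why every construction in the paper's proof is padded with a path $c\ldots d$ containing all identifiers of $[m]$ not used in the body of the gadget: the padding inflates $n$ to roughly $m$, while the inductive hypothesis guarantees the padding path is inert. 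You need this (or an equivalent device) in every case.

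The second gap is that your gadget list cannot pin down the degree-$1$ and degree-$2$ behaviour of the new label. With every vertex labelled from $[m]$ already running AF, the deviation $f(m+1,\{\alpha\},\{\alpha\})=\{\alpha\}$ (a bouncing leaf) terminates on $P_3$, and indeed on every tree, since the bounced message is simply absorbed by the AF vertices; to refute it you need a gadget that returns the bounced message to $m+1$ forever, i.e.\ one containing an odd cycle --- the paper's extended paw. Likewise at degree $2$, several non-AF responses of the triple around $m+1$ terminate on every path (these are Cases $1$--$4$ of \cref{fig:table}) and are only excluded by the extended paw and an extended $C_5$, neither of which appears among the graphs of \cref{lemma: degree 2 to 3}. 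Relatedly, the inductive hypothesis only constrains vertices whose entire closed neighbourhood is labelled from $[m]$; the neighbours of the $m+1$-labelled vertex also have free policies, so each gadget requires a joint case analysis over the policies of $m+1$ and its neighbours rather than fixing them one at a time --- a deviant neighbour can mask a deviant $m+1$ and restore termination on too small a gadget. Once these points are repaired (padding, odd-cycle gadgets at low degree, joint case analysis), the remainder of your plan --- rerunning the degree-$3$ constructions of \cref{lemma: degree 2 to 3} --- does go through essentially as in the paper.
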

\begin{proof}
    Our argument will be by induction, we will show that if $P$ is AF on $[m]$ up to degree $3$, it is AF on $[m+1]$ up to degree $3$. We shall do this by constructing communication graphs where only a few agents may deviate from an AF policy and then showing that none of these policies are correct and terminating. Throughout we will focus on the identifiers $v=m+1$ and $w,x,y,z,a,b,c,d \in [m]$ however our result relies on considering all possible permutations of the $[m]$ labels on our graphs. We note here that $c$ and $d$ are essentially dummy nodes to bound a path of arbitrary size (and possibly trivial) which contains all of the unused labels, thus they do not violate our base case of 6 nodes.

    First consider the extended paw graph depicted and labelled as in figure \ref{fig:PawBase}. The only two nodes that can have non-AF policies in this scenario are $v$ and $w$ and all others must be using AF policies. Since, the long path starting at $y$ contains only identifiers from $[m]$ and has maximum degree $3$, any broadcast initiated outside of the path will never lead to messages leaving that path back into the main graph. Thus, we can essentially ignore it. Then we have eight main cases to consider, namely each combination of $f(v,\{w\},\{w\})=\emptyset$ or $\{w\}$, $f(w,\{v,x\},\{v\})=\{x\}$ or $\{v,x\}$, and $f(w,\{v,x\},\{x\})=\{v\}$ or $\{v,x\}$. If $f(v,\{w\},\{w\})=\{w\}$ and the other two are as in Amnesiac Flooding, then the message will simply bounce back along the extended paw forever. Similarly, if $f(w,\{v,x\},\{x\})=\{v,x\}$ and $f(v,\{w\},\{w\})=\emptyset$, the same will occur. 
    There are in fact only two cases where termination will occur. Firstly, all three can be as in Amnesiac Flooding. Secondly, $f(v,\{w\},\{w\})=\emptyset$, $f(w,\{v,x\},\{v\})=\{v,x\}$, and $f(w,\{v,x\},\{x\})=\{v\}$ also terminates. Full sketches of non-termination are given in figure \ref{fig:Paw Non-Termination} for each of the remaining three cases. By permuting the identifiers other than $v$, we obtain that for all $\alpha,\beta \in [m]$ we must have that $f(v,\{\alpha\},\{\alpha\})=\emptyset$ and $f(\alpha,\{v,\beta\},\{\beta\})=\{v\}$.\\ \\
    \begin{figure}
        \centering
        \includegraphics[width=0.6\linewidth]{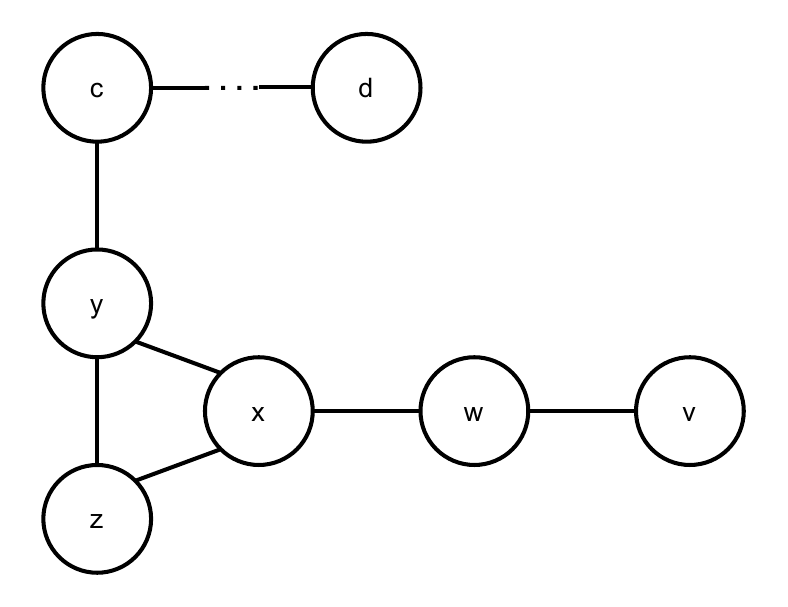}
        \caption{The extended paw graph used in the proof of lemma \ref{lemma: id induction}. Note that the path from $c$ to $d$ represents all identifiers from $[m]$ that are not included in the main body of the paw graph.}
        \label{fig:PawBase}
    \end{figure}
    \begin{figure}
        \centering
        The case $f(w,\{v,x\},\{v\})=\{v,x\}$:\\ \includegraphics[width=\linewidth]{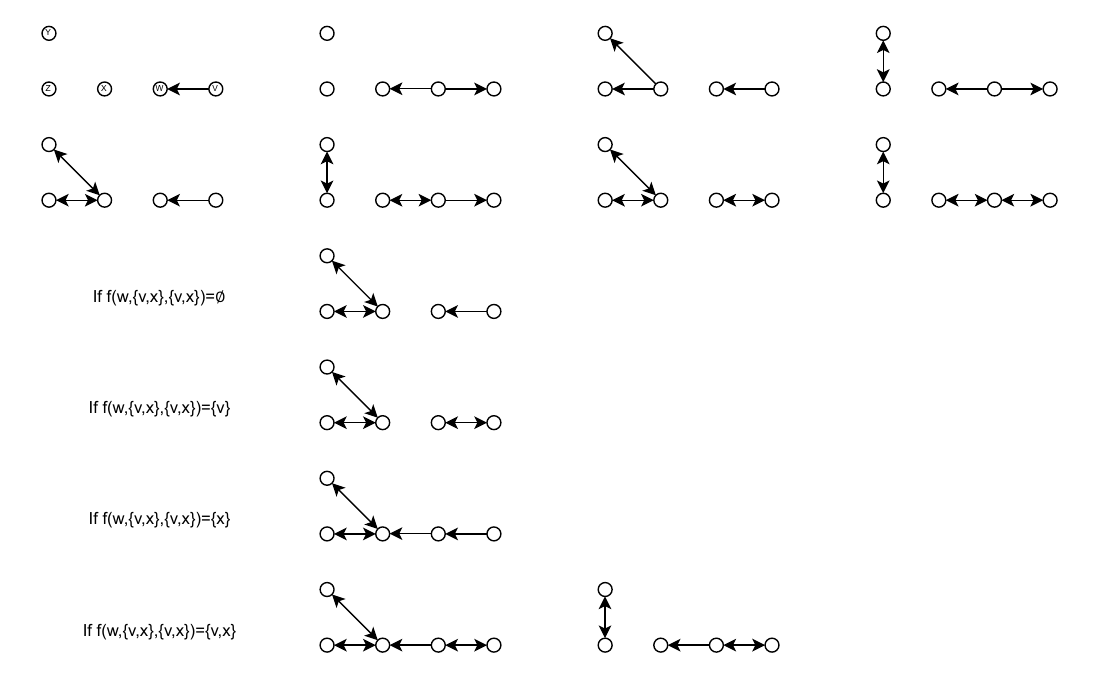}\\
        The case $f(w,\{v,x\},\{x\})=\{v,x\}$:\\ \includegraphics[width=\linewidth]{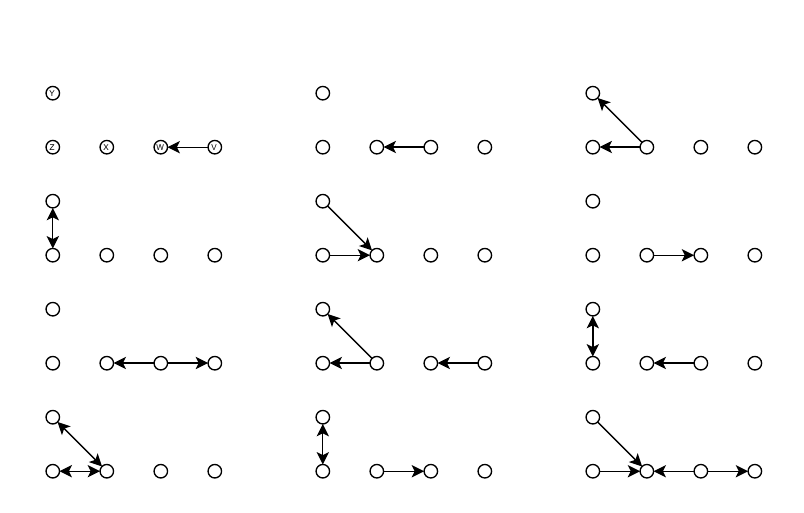}\\
    \end{figure}
    \begin{figure}
    \ContinuedFloat
    \centering
        The case $f(w,\{v,x\},\{v\})=\{v,x\}$ and $f(w,\{v,x\},\{x\})=\{v,x\}$:\\ \includegraphics[width=\linewidth]{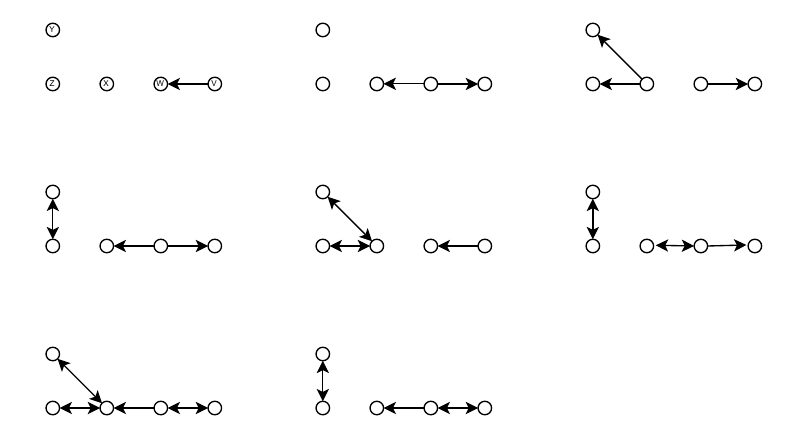}\\
        \caption{A non-terminating broadcast for the cases $f(v,\{w\},\{w\})=\{w\}$ and either: $f(w,\{v,x\},\{v\})=\{v,x\}$, $f(w,\{v,x\},\{x\})=\{v,x\}$ or both, respectively. Message traces are given up to either the final unique state or the first repeat.}
        \label{fig:Paw Non-Termination}
    \end{figure}
    We now need to show that $f(\alpha,\{v,\beta\},\{v\})=\{\beta\}$ as well. Consider a path on $m+1$ nodes labelled $wxvyzc...d$, where the section $c...d$ contains all remaining labels from $[m]$. Since, only $v,w,y$ could have policies different to those of Amnesiac Flooding in this setting, a message can only be sent from outside the subsection $v,w,y$ once, and so we can treat this structure like $P_3$. Similar to the path on three nodes, without loss of generality, there are essentially 12 scenarios to consider, which produce behaviours falling into 5 general cases expressed in the table in figure \ref{fig:table}.
    \begin{figure}
        \centering
        \resizebox{\textwidth}{!}{
        \begin{tabular}{|c|c|c|c|}
             \hline
             Terminating & $f(v,\{w,y\},\{w\})=\{y\}$ & $f(v,\{w,y\},\{w\})=\{y\}$ & $f(v,\{w,y\},\{w\})=\{w,y\}$ \\
             Policies & $f(v,\{w,y\},\{y\})=\{w\}$ & $f(v,\{w,y\},\{y\})=\{w,y\}$ & $f(v,\{w,y\},\{y\})=\{w,y\}$\\
             \hline
             $f(w,\{v,x\},\{v\})=\{x\}$& Case 0& Case 1 & Case 1 \\
             $f(y,\{v,z\},\{v\})=\{z\}$& & &\\
             \hline
             $f(w,\{v,x\},\{v\})=\{v,x\}$& Case 1 & Case 2 & Non-Terminating \\
             $f(y,\{v,z\},\{v\})=\{v\}$& & & \\
             \hline 
             $f(w,\{v,x\},\{v\})=\{v\}$& Case 1 & Non-Terminating & Non-Terminating\\
             $f(y,\{v,z\},\{v\})=\{v,z\}$& & & \\
             \hline 
             $f(w,\{v,x\},\{v\})=\{v,x\}$&Non-Terminating & Case 3 & Case 4 \\
             $f(y,\{v,z\},\{v\})=\{v,z\}$& & & \\
             \hline
        \end{tabular}}
        \caption{The possible terminating cases for the policy discussed in theorem \ref{lemma: id induction} for the path on $m+1$ nodes}
        \label{fig:table}
    \end{figure}
    It remains to construct a communication graph and labelling such that for each case $1$-$4$ termination fails.\\
    For case $1$, there is exactly one of $x,v,y$ which implements a policy different to Amnesiac Flooding, we denote it $\gamma$.
    Furthermore, without loss of generality there exists at least one of its neighbours $\beta$ such that upon receiving a message from $\beta$ it sends messages to both of its neighbours.
    We take the path $wxvyz$, which contains either the segment $\beta \gamma$ or $\gamma \beta$.
    In the former case, we attach $a$ and $b$ to $w$ to form a $C_3$ on $abw$ and the path $c...d$ to $z$, forming an extended paw graph.
    In the latter case, we instead attach $a$ and $b$ to $z$ and the path on $c...d$ to $w$ to again form an extended paw.
    Now consider broadcast initiated at the solitary leaf node.
    The message will reach $\gamma$ and then bounce back and forth between $\gamma$ and the $C_3$ for ever.

    In case $2$, upon receiving from $v$, $w$ will send to both of its neighbours, as will $v$ upon receiving from $y$.
    This means that they both respond in the same direction on the path.
    We are therefore able to construct an extended $C_5$ as in figure \ref{fig:C5}, on which broadcast will not terminate when initiated at $z$.\\
    \begin{figure}
        \centering
        \includegraphics[width=0.4\linewidth]{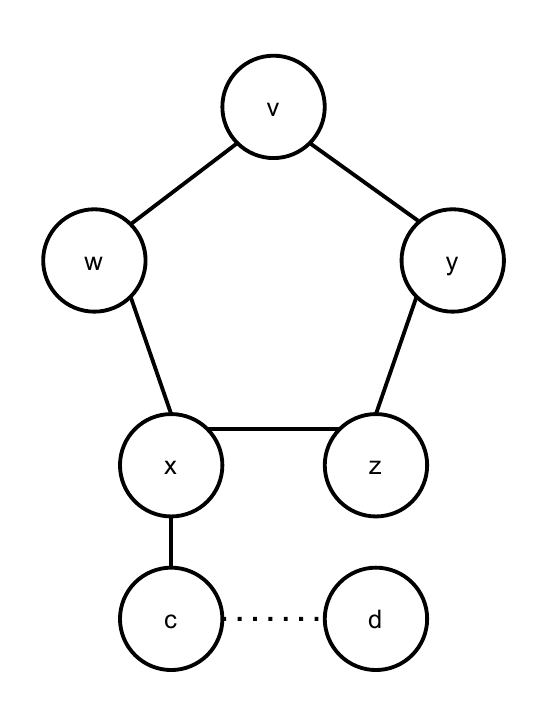}
        \includegraphics[width=0.5\linewidth]{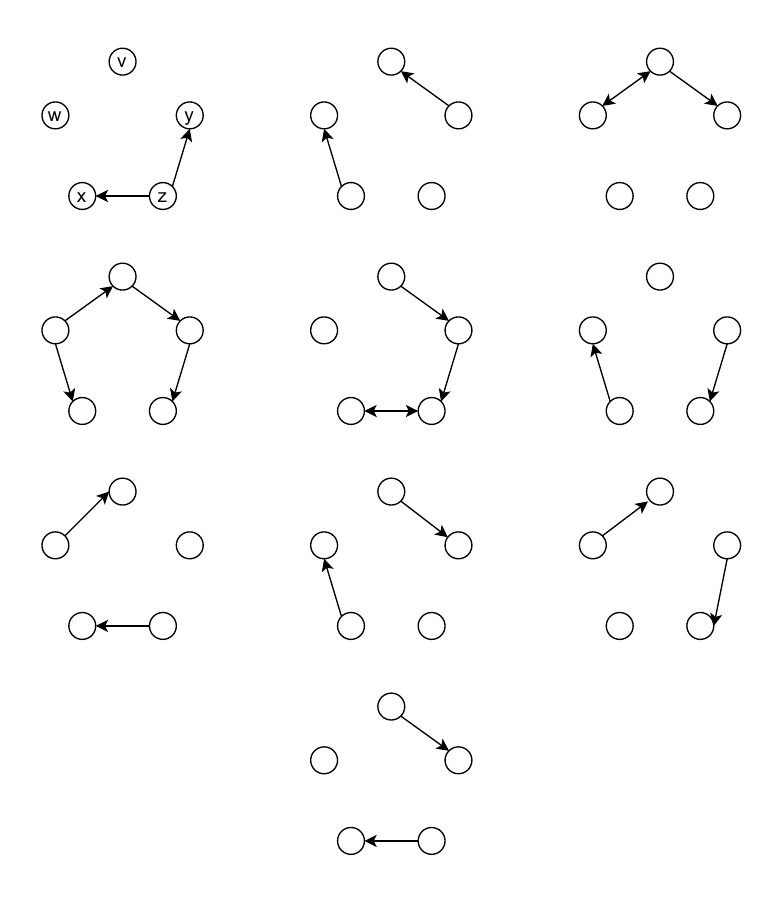}
        \caption{A communication graph for Case 2 where the path $c...d$ contains all remaining identifiers from $[m]$ and a non-terminating message sequence initiated at node $z$. The message sequence is given to the final unique configuration.}
        \label{fig:C5}
    \end{figure}
    In case $3$, both $w$ and $y$ will respond to $v$ however $v$ responds only to $y$.
    We construct a paw graph as in Case $1$. 
    In fact, we can use the same paw graph for case $4$, as both will fail to terminate when broadcast is initiated at $z$.
    For, the precise messages see figure \ref{fig: Cases3 and 4}.\\
    \begin{figure}
        \centering
        \includegraphics[width=\linewidth]{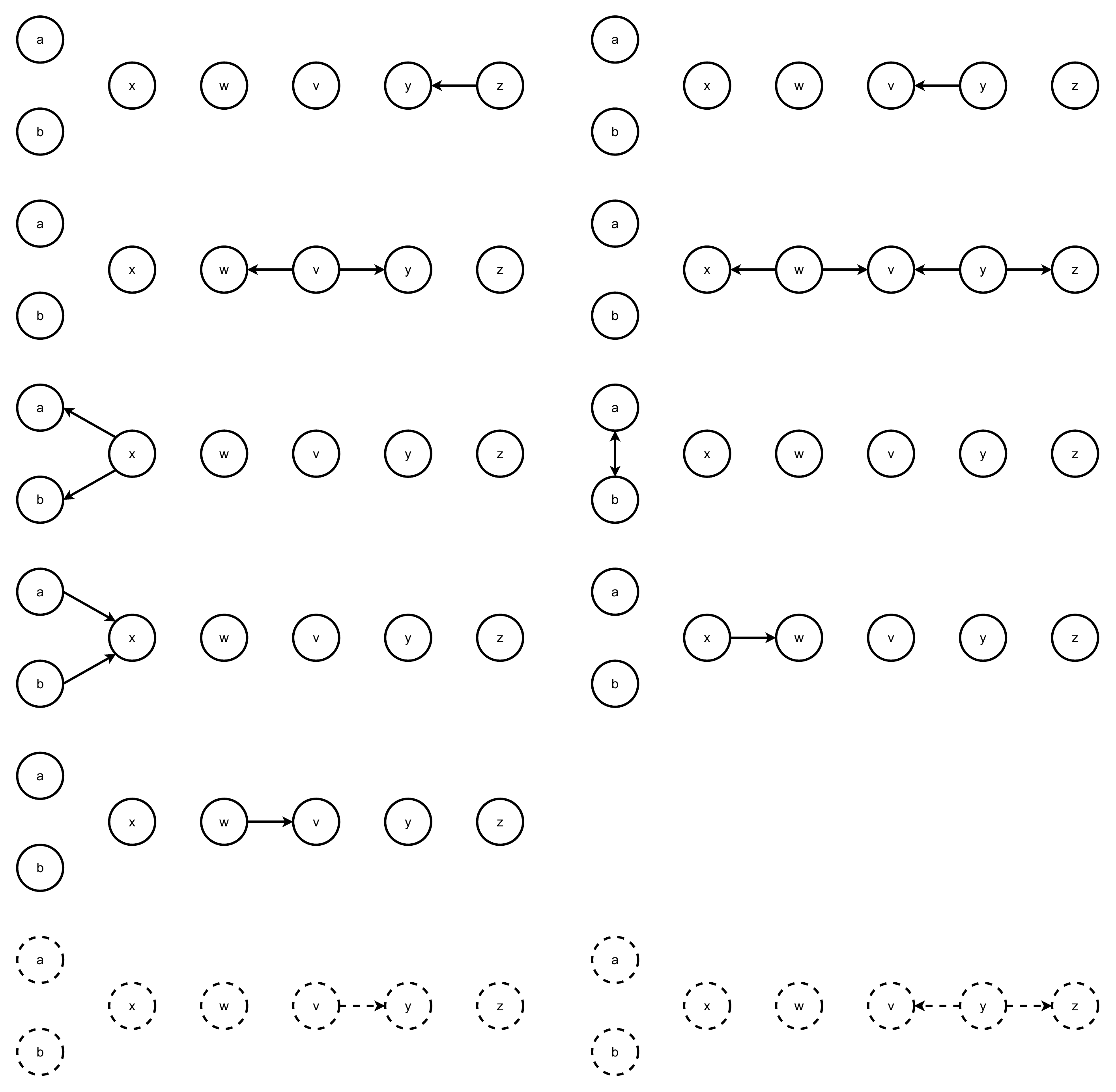}
        \caption{Non-terminating messages sequences for cases $3$ and $4$ in lemma \ref{lemma: id induction}. Both behave identically for the first 9 iterations, however case $3$ takes two further iterations to repeat a step (indicated by dashed lines). Note that path $c...d$ has been omitted here but would be attached to $z$.}
        \label{fig: Cases3 and 4}
    \end{figure}
    Thus as we have obtained non-termination for all cases other than case $0$, we must have for all $\alpha,\beta,\gamma \in [m+1]$ that $f(\alpha, \{\beta, \gamma\}, \{\beta\})=\{\gamma\}$.
    Furthermore, we must have that $b(\alpha, \{\beta, \gamma\})=\{\beta,\gamma\}$ as otherwise the protocol would not terminate on a cycle.
    This also implies that $f(\alpha, \{\beta,\gamma\},\{\beta,\gamma\})=\emptyset$ to ensure termination on odd cycles when broadcast is initiated at $\alpha$.
    Finally, we must have that $f(\alpha,\{\beta\},\{\beta\})=\emptyset$ as $\alpha$ could be on the end of an extended paw.
    Thus, we have that $P$ is AF on $[m+1]$ up to degree $2$.\\

    Fortunately, we can reuse much of the previous argument for obtaining degree $3$ behaviour.
    Excluding the diamond graph and the $K_{3,2}$ graph, we can use all of the constructions in \cref{lemma: degree 2 to 3} by simply adding a path $c...d$ to a node of degree at most $2$ which is not adjacent to $v$.
    However, we can slightly alter the diamond and $K_{3,2}$ graph to permit the addition of such path to a node of degree at most $2$ which is not adjacent to $v$ no matter where $v$ appears in the labelling.
    These are given explicitly in figure \cref{fig:degree 3 but longer}.
    Note that while, these do require $8$ nodes to construct, we can use the original constructions from \cref{lemma: degree 2 to 3} until we have more than $\kappa$ nodes.
    \begin{figure}
        \centering
        \includegraphics[width=\linewidth]{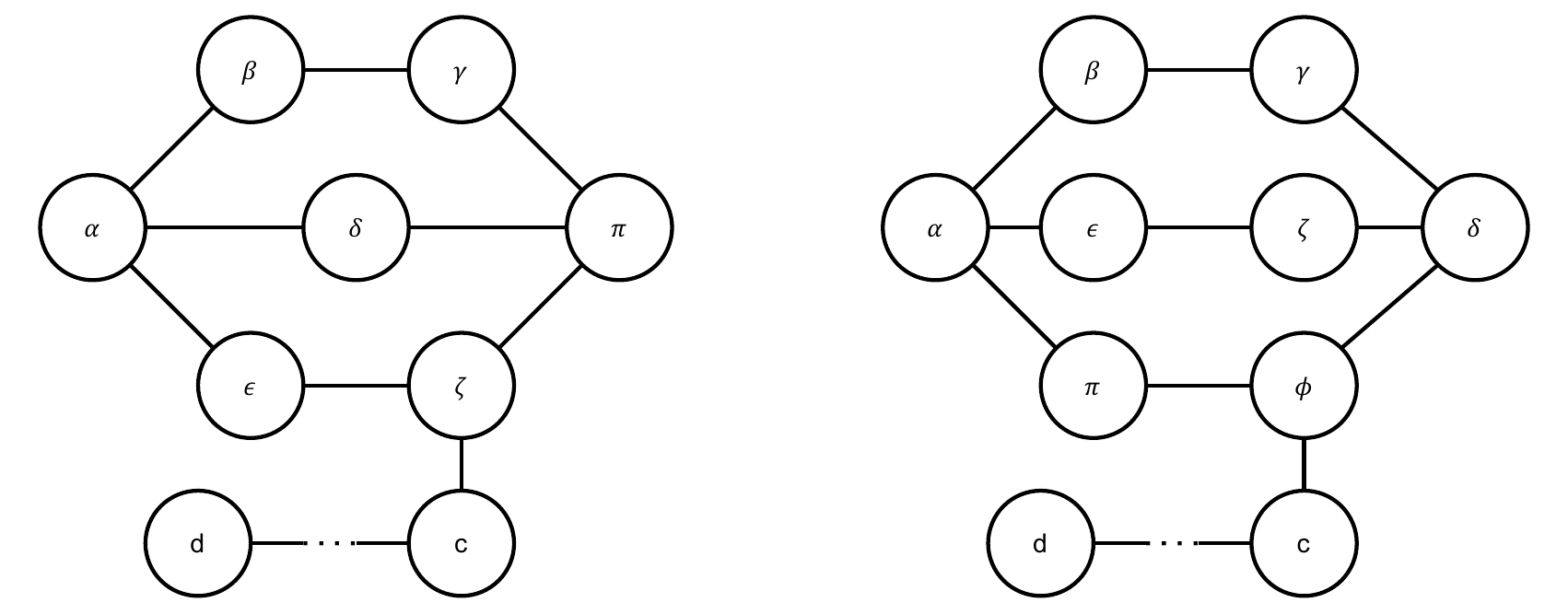}
        \caption{The two graphs used to determine the policy for identifiers of degree 3 in lemma \ref{lemma: id induction}. Left: The analogue to the diamond graph. Right: The analogue to $K_{3,2}$}
        \label{fig:degree 3 but longer}
    \end{figure}
    Thus, repeating the argument from \cref{lemma: degree 2 to 3}, we have that $P$ is AF on $[m+1]$ up to degree $3$.
    Since, $P$ is AF on $T$ up to degree $3$, by induction we have that $P$ is AF on $\mathbb{N}$ up to degree $3$.
\end{proof}
Thus painfully we have that $P$ is indistinguishable from Amnesiac Flooding on subcubic graphs.
It may reassure the reader that we will not have to repeat this process for every maximum degree as $3$ is something of a magic number. 
In particular, once nodes behave up to degree $3$ we can construct binary tree like structures to simulate high degree nodes with lower ones.
With this tool under our belt, we are ready to prove theorem \ref{thm: AF is unique}.
\begin{proof}[Proof of Theorem \ref{thm: AF is unique}]
    By lemma \ref{lemma: id induction} we have that $P$ is AF up to degree $3$ on $\mathbb{N}$.
    It is immediate that for all $u \in \mathbb{N}$, $S \subset \mathbb{N}$ $b(u,S)=S$ as $u$ could be at the centre of a star with one of its leaves replaced by a long path.
    In this case, $u$ will never receive a message again and so must send to all of its neighbours in the first round.

    For an arbitrary label $u \in \mathbb{N}$ and degree $k\in \mathbb{N}$ we can show that $u$ must behave as though it is implementing Amnesiac Flooding if it is at a node of degree $k$.
    Take $S \subset \mathbb{N}$ such that $|S|=k$, we will show that for all non-empty $T\subseteq S$, $f(u,S,T)=S \setminus T$ via induction on the size of $T$.

    Beginning with our base case.
    If $u$ receives from a single neighbour we can construct a pair of graphs (special cases of our general construction) that enforce the AF policy. 
    Consider the tree in figure \ref{fig:OneMessageDegreeK}, if $u$ receives a message from $v$ it must send to at least $x_1,...,x_{k-1}$ as $u$ will receive messages in only a single round. 
    This follows as the rest of tree will use AF policies and since the graph is bipartite each node will be active only once. 
    Thus, the only two options for $f(u,S,\{v\})$ are $S$ or $S\setminus \{v\}$.\\ \\
    Now consider the second graph from figure \ref{fig:OneMessageDegreeK} and a broadcast initiated at $u$. 
    We can see that if $f(u,S,\{v\})=S$ then the cycle and $u$ will simply pass a message back and forth forever. 
    Thus, $f(u,S,\{v\})=S\setminus\{v\}$.
    \begin{figure}
        \centering
        \includegraphics[width=\linewidth]{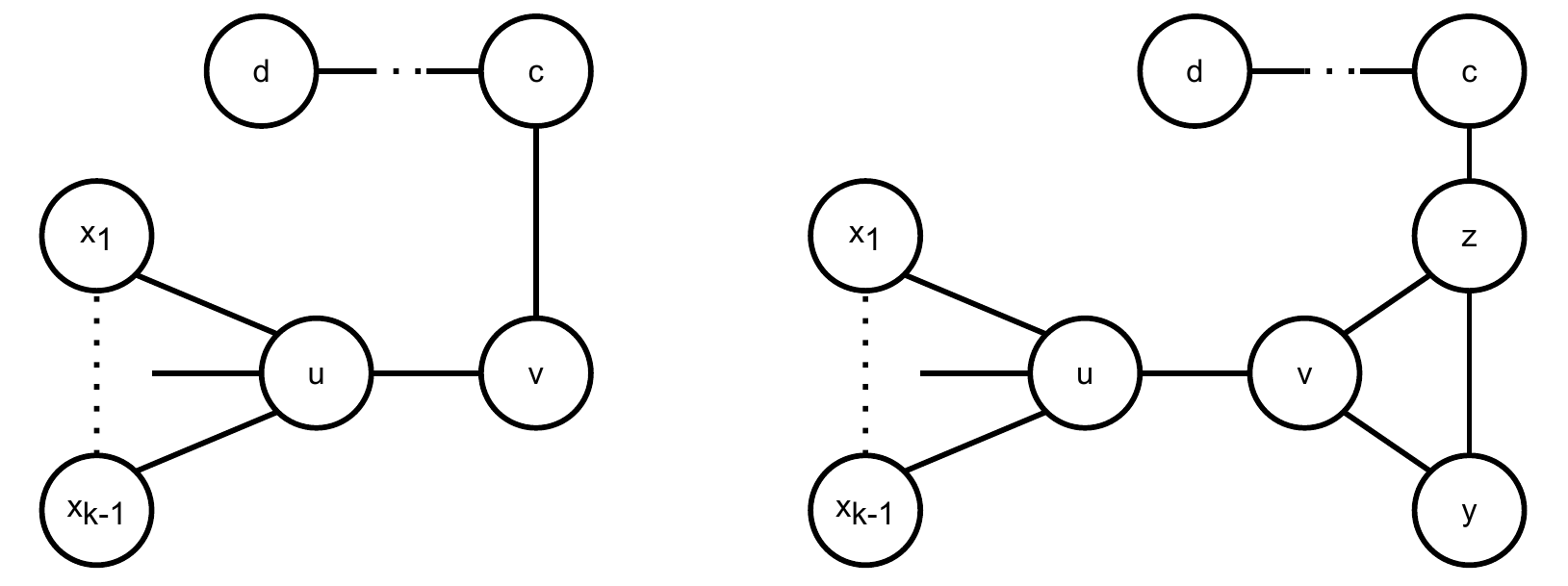}
        \caption{Two graphs used in the proof of theorem \ref{thm: AF is unique} to determine identifiers' response to receiving only a single message. In both cases $c...d$ is a path containing all unused identifiers from $[m]$ where $m$ is the highest id used in labelling the body of the graph. Left: A tree that forbids sending to too small a subset of neighbours. Right: A graph that forbids sending a message to all neighbours.}
        \label{fig:OneMessageDegreeK}
    \end{figure}
        We can now generalize this construction and perform our induction. 
    Assume that for any non-empty subset $T$ of $S$ of size at most $q$, $f(u,S,T)=S\setminus T$. 
    For the sake of contradiction assume that this is not true for $V$ where $T\subset V \subseteq S$ with $|V|=q+1$. 
    Thus either $f(u,S,V)\cap V\neq \emptyset$ or $f(u,S,V)\cup V\neq S$. \\ 
    In the first, case let $W=V\cap f(u,S,V)$, and for some ordering label the elements of $W$: $w_1,...,w_r$ and the elements of $V\setminus W$: $v_1,...,v_{q+1-r}$.\\ 
    If $W=V$, then we construct a communication graph of the form in figure \ref{fig: BaseMany-MessagesDegreeK} and consider a broadcast initiated at $u$. 
    As every node other than $u$ in the graph has maximum degree $3$, the rest of the graph will use policies indistinguishable from Amnesiac Flooding. 
    Thus, the messages will travel in only one direction through the binary tree and so at the time $w^*$ receives a message it will be the only message on the graph. 
    This message will then be passed onto the cycle where it will circulate, before being passed back onto the binary tree in the opposite direction. 
    Again the policy of all nodes other than $u$ is indistinguishable from Amnesiac Flooding and so when $u$ next receives a message, it receives from all of $V$ and these are the only messages (outside of the path from $c$ to $d$).
    We therefore have a repeating sequence as $u$ will forward the message back to every node of $W$.
    \begin{figure}
        \centering
        \includegraphics[width=\linewidth]{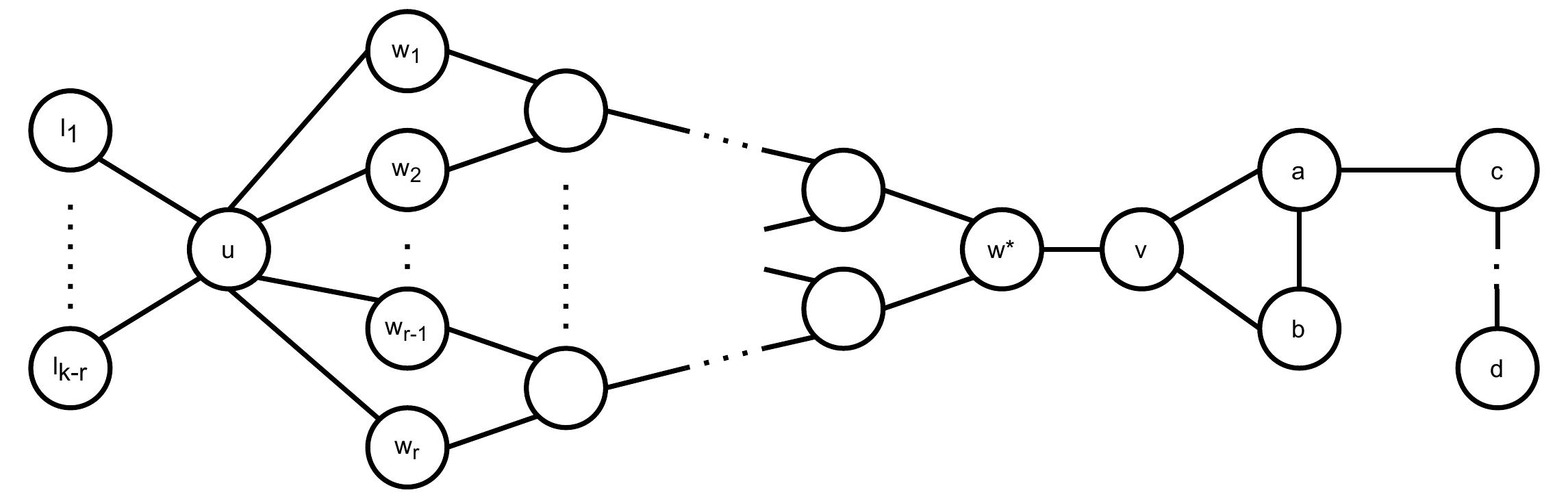}
        \caption{A graph used in the proof of theorem \ref{thm: AF is unique}. The graph consists of a star centred at $u$ with its leaves partitioned into two sets of size $r$ and $k-r$. The leaves in the set of size $r$ are connected by a binary tree of depth $\lceil \log_2{r}\rceil$ with root $w^*$ to a cycle, which in turn is connected to a path. Note that the path $c...d$ contains all identifiers from $[m]$ not used in the labelling of the rest of the graph, where $m$ is the largest $id$ present.}
        \label{fig: BaseMany-MessagesDegreeK}
    \end{figure}\\
        If $W\subset V$, then we construct a slightly different communication graph (see figure \ref{fig:Many-MessagesDegreeK}).
    This graph instead partitions $S$ into $W$, $V \setminus W$ and $S \setminus V$, with separate binary trees for $W$ and $V \setminus W$.
    Here we consider a broadcast initiated at $a$.
    As every node other than $u$ has maximum degree $3$ it must make the same forwarding decisions as Amnesiac Flooding. 
    Thus, when $u$ first receives a message it will receive a message from all identifiers in $V$ and there will be no other messages in the body of the graph.
    Then by assumption $u$ will send a message to some portion of $\{l_1,...,l_{k-q-1}\}$ as well as all of $W$. 
    The leaves will not respond and messages will travel only upwards in the binary tree with $W$ as leaves and $w^*$ as its root. 
    Thus, $a$ will next receive a message only from $w^*$ and since it makes the same decision of Amnesiac Flooding it will send to $v^*$ and $c$. 
    Until $u$ receives a message the only messages in the body of the graph will be those travelling down the binary tree with $V\setminus W$ as its leaves and they will all arrive at $u$ simultaneously. 
    Thus, $u$ will then receive only from $V\setminus W$. Since $|V\setminus W|<|V|=q+1$ by assumption $u$ must make the same decision as Amnesiac Flooding and so will send messages to all of $W$ and its leaves. 
    This creates a repeating sequence and so we have non-termination. 
    Thus, since $f(u,S,V)\cap V\neq\emptyset$ always allows us to construct a communication graph with a non-terminating broadcast, we must have that $f(u,S,V)\cap V =\emptyset$.
    \begin{figure}
        \centering
        \includegraphics[width=\linewidth]{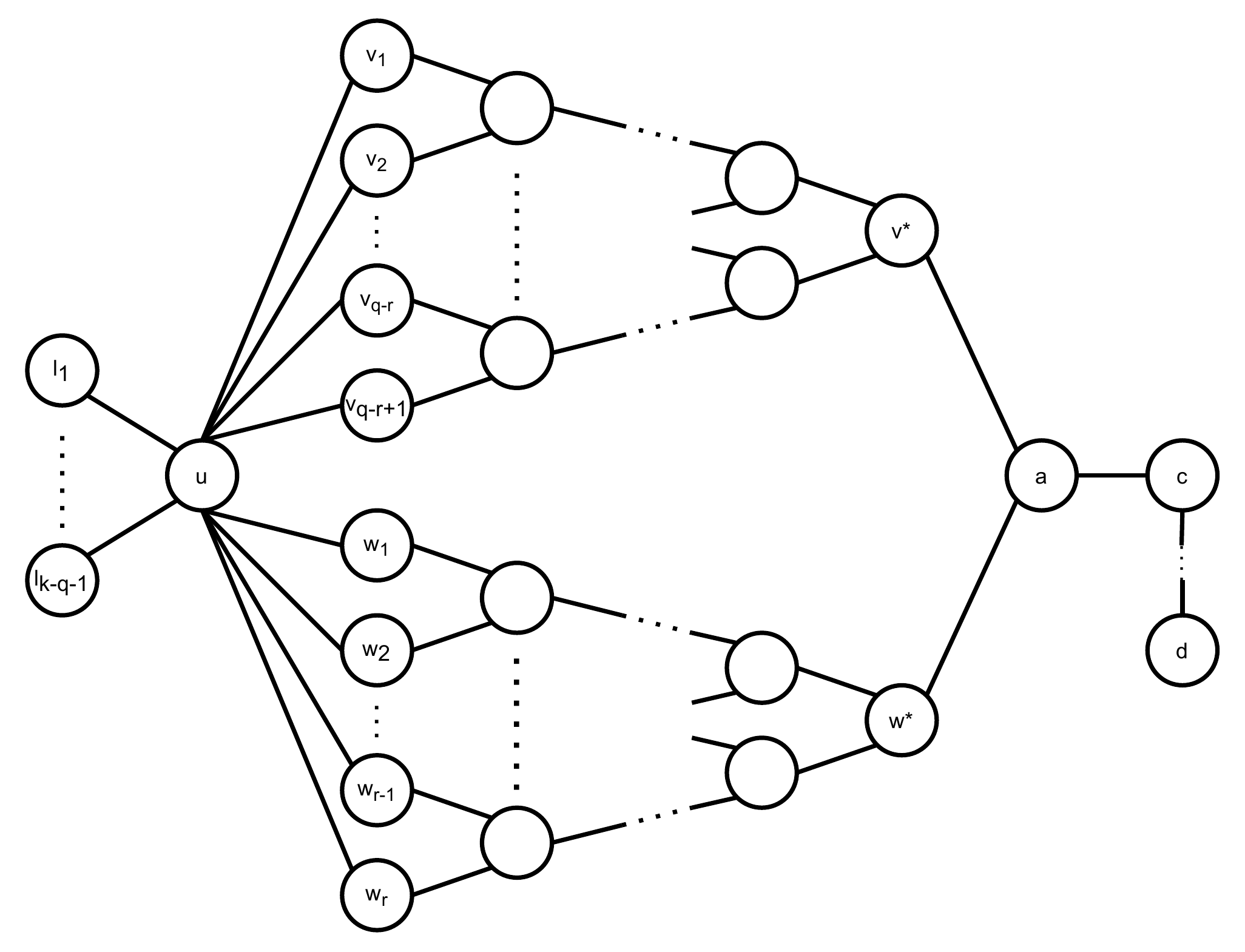}
        \caption{A graph used in the proof of \cref{thm: AF is unique}. The graph consists of a star centred at $u$ with its leaves partitioned into three sets of size $r$, $q-r+1$ and $k-q-1$. The  leaves in the first two sets are then each joined to a single node labelled by $w*$ and $v*$ respectively by binary trees of depth $\lceil \log_2{q}\rceil$. The single nodes are connected to a node labelled $a$ which is the start of a path $ac...d$ that contains all identifiers in $[m]$ not used in the labelling of the body of the graph where $m$ is the largest id used in that labelling.}
        \label{fig:Many-MessagesDegreeK}
    \end{figure}
    Now consider the communication graph from figure \ref{fig:Many-MessagesDegreeK} again but with $W$ an arbitrary strict subset of $V$.
    Since $f(u,S,V)$ does not contain any id from $V$ and none of $\{l_1,...,l_{k-q-1}\}$ will send a message back to $u$, $u$ sends messages in only a single round. 
    Therefore, $u$ must send to all of $\{l_1,...,l_{k-q-1}\}$ otherwise some would not receive the message (and so the protocol would not implement broadcast correctly). 
    Thus, $f(u,S,V)=S\setminus V$ and so we have our contradiction.

    Therefore, if $f(u,S,T)=S\setminus T$ for $S\subset \mathbb{N}$ where $|S|=k$ and all $T\subseteq S$ such that $0<|T|\leq q$, then $f(u,S,V)=S\setminus V$ for $V\subset S$ such that $|V|=q+1$. 
    Thus, by induction since we know this to be true for all $u$ and $S$ when $q=1$ it must hold for all $q<|S|$.

    This gives our claim, as for every $u$ and $k$ we can apply this argument and show that for any $k \in \mathbb{N}$, $P$ is AF up to degree $k$.
\end{proof}
\section{Uniqueness: Existence of Protocols in the relaxed cases}
\label{apx: relaxed}
In this appendix, we present protocols for each of the relaxations discussed in \cref{subsec: relaxation}
\subsection{Protocol with limited statelessness}
For the protocol with limited statelessness, we simply exploit the ability of the protocol to identify whether it is on a star.
The protocol can be stated as follows:
\begin{definition}[Neighbourhood 2 flooding]
    \emph{Neighbourhood 2 Flooding} is the protocol defined by the following rule:
    upon receiving a message, if a node has a neighbour of degree $2$ or higher, \emph{or} has only a single neighbour, it implements Amnesiac Flooding.
    Otherwise, upon receiving a message it sends to all of its neighbours.
\end{definition}
\begin{proposition}
    Neighbourhood 2 flooding is correct and terminates in finite time.
\end{proposition}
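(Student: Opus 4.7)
The plan is to reduce most cases to Amnesiac Flooding by identifying precisely when Neighbourhood 2 Flooding deviates from it, and then to handle the remaining case directly. The key observation I would establish first is a \emph{structural dichotomy}: at any node $v$, the ``otherwise'' clause of the protocol is triggered exactly when $v$ has $\deg(v)\ge 2$ and every neighbour of $v$ has degree $1$. Since we work with connected graphs, this forces the entire graph $G$ to be a star with $v$ as its centre. Hence, on any graph $G$ that is not a star, every node executes Amnesiac Flooding, so the protocol coincides with Amnesiac Flooding on $G$, and correctness plus termination follow from \cref{Thm: termination}.

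The remaining work is to verify the claim directly when $G=S_k$ for $k\ge 2$. Here, leaves always satisfy ``has only a single neighbour'' and therefore run Amnesiac Flooding, while the centre $v$ runs the deviant rule ``upon receiving, send to all neighbours''. I would then split into two subcases. If the initiator is the centre $v$, then in round $1$ it sends to all leaves, each leaf receives and (running Amnesiac Flooding with its single neighbour as the sole sender) sends nothing, so the protocol terminates after two rounds with every node informed. If the initiator is some leaf $u_i$, then in round $1$ $u_i$ sends to $v$; in round $2$ $v$ sends to all leaves (by the deviant rule); in round $3$ each leaf receives and sends nothing. Termination occurs in three rounds and every node is informed.

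The only subtle point worth flagging explicitly is the treatment of the initial round of the initiator, which the definition only specifies by the ``upon receiving'' clause; I would adopt the standard convention (matching \cref{def: AF}) that the initiator transmits to all of its neighbours in round $1$, which is in any case consistent with both clauses of the protocol at a star centre. With this convention fixed, the two bullet-point sub-cases above give termination and correct broadcast on stars, and combined with the reduction to Amnesiac Flooding on non-stars the proposition follows. I do not anticipate any real technical obstacle: the entire argument is driven by the single structural observation that deviation from Amnesiac Flooding is confined to centres of stars.
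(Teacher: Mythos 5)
Your proposal is correct and follows essentially the same route as the paper's proof: reduce the non-star case to Amnesiac Flooding (whose correctness and termination are already established) and verify the star case directly by splitting on whether the initiator is the centre or a leaf. Your explicit structural dichotomy (the deviant clause fires only at the centre of a star) and your handling of $K_2$ via the ``single neighbour'' clause are just slightly more detailed renderings of the same argument.
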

\begin{proof}
    The protocol is automatically correct and terminating for non-star graphs by the correctness of Amnesiac Flooding.
    If the protocol is on a star, then either the broadcast is initiated at the centre or a leaf.
    On either the first round (if at the centre) or the second (if at a leaf), the centre sends to all of the leaves.
    The leaves all have a neighbour of degree $2$ or more and so do not respond.
    Thus all nodes have received the message and no further messages are sent.
    In the special case of $K_2$ we simply treat the initiating node as the centre, and the same argument holds.
\end{proof}
This implies that while the ability to retain information about the identifiers in a nodes neighbourhood, does not permit anything other than Amnesiac Flooding, even slightly more information does as Neighbourhood 2 flooding produces a different sequence of messages on star graphs.
However distinct behaviour is obtained only for graphs of constant diameter.
In general, a similar construction (Neighbourhood k flooding) can be produced if nodes know the structure of their $k$th order neighbourhood and send to all neighbours upon receiving a message if and only if they are on a tree, can see all of its leaves and have the highest ID out of all nodes.
This still obtains only a linear maximum diameter in the number of hops nodes aware are of.
It remains open whether an protocol in this setting can be found with behaviour distinct from Amnesiac Flooding on graphs of unbounded diameter.
\subsection{Protocol with non-determinism}
For a non-deterministic relaxation we assume each agent has access to a single bit of uniform randomness per round, and otherwise work in the setting of \cref{def: functions}.
We note that while these bits must be independent between rounds, they can be either shared or independent between agents.
With this established we present the following protocol which is correct and terminates in finite time with probability $1$.
\begin{definition}[Random Flooding]
    \emph{Random Flooding} is the protocol defined by the following rule: on each round, if a node has a random bit equal to $0$ it implements Amnesiac Flooding. Otherwise upon receiving a message it sends to all neighbours.
\end{definition}
\begin{proposition}
    Random Flooding correctly achieves broadcast in diameter rounds and terminates in finite time with probability $1$.
\end{proposition}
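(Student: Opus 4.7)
The plan is to establish the two assertions of the proposition separately. For the broadcast claim, I would induct on the shortest-path distance $d(v,s)$ from each node $v$ to the initiator $s$. The key observation is that whichever coin an informed node $u$ that received in round $t-1$ flips in round $t$, it forwards the message to every neighbour that did not send to $u$ in round $t-1$: under coin $0$ this is the Amnesiac Flooding rule, and under coin $1$ it sends to all its neighbours anyway. Since any strictly farther neighbour $v$ of $u$ cannot yet have been informed, $v$ has not sent to $u$, so $u$ does forward to $v$ under either coin. By induction, every node at distance $t$ from $s$ receives the message within $t+1$ rounds, so broadcast completes within $O(D)$ rounds deterministically, independently of the coin outcomes.

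For the almost sure termination claim, I would view Random Flooding as a finite-state Markov chain on the $2^{|V|^2}$ possible message configurations, with the empty configuration as the unique absorbing state. Since the state space is finite, it suffices to exhibit, for every reachable non-empty state, an explicit coin-flip sequence of bounded length that drives the chain to the empty state with a positive probability bounded below uniformly in the state; a standard geometric tail argument then yields termination in finite time almost surely. The sequence I propose is: prescribe that every node flips coin $1$ for the next $K=O(D)$ rounds and then coin $0$ for one further round. Under universal coin $1$ each round, the set $R_k$ of nodes receiving a message in round $k$ evolves deterministically as $R_{k+1}=N(R_k)$, where $N$ denotes the open neighbourhood, and in a connected graph this stabilises in $O(D)$ rounds---at the full set $V$ for a non-bipartite graph, and at oscillation between the two full bipartition classes $A$ and $B$ for a bipartite graph. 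In either case, once stabilisation is reached, every $u \in R_{k+1}$ satisfies $N(u)\subseteq R_k$ and has thus received from all of its neighbours; the subsequent coin $0$ round then forces each such $u$ to send to no neighbour, while every node outside $R_{k+1}$ neither receives nor sends, so the configuration becomes empty.

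The joint coin sequence occurs with probability at least $p(G)>0$ uniformly over the current state, and the resulting geometric decay of the survival probability gives almost sure termination. The main subtlety to verify carefully is the bipartite case: one might naively hope to reach a round where every node in $V$ has just received from all of its neighbours, but on a bipartite graph the recipient set $R_k$ stays on a single colour class at a time, so the termination argument must instead exploit the identity $N(A)=B$ in a connected bipartite graph to conclude that saturating $R_k$ on one side still yields full neighbour-reception for the nodes on the other side. It is also worth noting that Random Flooding's reachable configurations need not be balanced in the sense of \cref{thm:balance}, since a coin-$1$ action does not correspond to the multicast-initiator operator $A_{I,G}$ defined in the paper; one therefore cannot simply reduce to Amnesiac Flooding's termination guarantee, which is precisely why this direct coin-$1$-then-coin-$0$ construction is required.
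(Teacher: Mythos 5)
Your proof is correct and follows essentially the same route as the paper's: broadcast is deterministic along shortest paths regardless of the coins, and termination follows because the event ``all coins equal $1$ for $O(D)$ rounds, then all coins equal $0$'' forces the empty configuration and occurs with probability bounded away from zero, giving a geometric tail. If anything, your analysis of the recipient set via $R_{k+1}=N(R_k)$ handles the bipartite case more carefully than the paper's own phrasing, which loosely asserts that \emph{all} nodes send to all neighbours after $D$ all-ones rounds even though on a bipartite graph only one colour class may be active in a given round.
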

\begin{proof}
    Regardless of the random bits, messages will propagate down all edges and all agents will receive a message along the shortest path from the initiator node.
    Thus, the protocol is correct and achieves broadcast in exactly $D$ rounds.\\
    Consider the random variable $X(t)$ which for $t \in \mathbb{N}$ takes the value $1$ if on rounds $t, t+1,...,t+D$ all nodes receive $1$ as a random bit, but at round $t+D+1$ all nodes receive $0$, and $0$ otherwise.\\
    We now make the following claim, if for some $t \in \mathbb{N}$, $X(t)=1$, then the protocol has terminated by time $t+D+2$. 
    If there are no messages at time $t$ then, Random Flooding cannot send any new messages and so the protocol has already terminated.
    Otherwise, there must exist at least one node receiving a message at time $t$.
    Since this node has a random bit of $1$, it will send to all of its neighbours.
    Repeating this notion, at time $t+D$ all nodes will be sending to all of their neighbours.
    However, at time $t+D+1$ all agents implement Amnesiac Flooding, and so upon receiving from all of their neighbours terminate, giving our claim.\\
    The event $X(t)=1$ occurs with positive probability $\Omega(2^{-n^2})$ and $X(t)$ is independent of $X(t+D+2)$ for all $t \in \mathbb{N}$.
    Let $Y=\cap_{i\geq 1}\{X(i(D+1))=0\}$, since $Y$ is the limit event of a sequence of monotonically decreasing events and $\lim_{m\to\infty}\mathbb{P}(\cap_{1\leq i \leq m}\{X(i(D+1))=0\})\to 0$, $\mathbb{P}(Y)=0$.
    Thus, Random Flooding terminates in finite time with probability $1$.
\end{proof}
In fact, we can weaken the setting further such that the ``random'' bits are drawn from some sequence with only the fairness guarantee that $X(t)=1$ occurs almost surely for finite $t$.
\subsection{Protocols that can read messages/send multiple messages}
We lump these two together, as we will demonstrate that only one bit of information need be propagated with the message.
This could be in the form of readable header information, but can equally be encoded in the decision to send one or two messages.
Without specifying, we take the setting of \cref{def: functions} but where agents have access to a single bit of information determined by the initiator when broadcast begins and which is available to them on any round where they receive a message.
We will exploit this single bit of information to encode whether the initiator node is a leaf or not, which permits us to make use of the following subroutine,
\begin{definition}[Parrot Flooding]
    \emph{Parrot Flooding} is the protocol defined by the following rule: if a node is a leaf upon receiving a message it returns it to its neighbour. Otherwise, all non-leaf nodes implement Amnesiac Flooding.
\end{definition}
Parrot flooding has the following useful property, the proof of which we defer.
\begin{lemma}
    \label{lemma: parrot flooding works}
    On any graph $G=(V,E)$ with initiator node $v \in V$ such that $v$ is not a leaf node, parrot flooding is correct and terminates in finite time.
\end{lemma}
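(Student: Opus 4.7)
The plan is to argue correctness and termination separately.

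For correctness, I would proceed by induction on $d = d(v, u)$ that $u$ is informed by round $d$. Fix a shortest path $v = x_0, \ldots, x_d = u$; since $v$ is non-leaf by hypothesis and each intermediate $x_i$ for $1 \leq i < d$ has two neighbors on the path, every such $x_i$ is non-leaf and hence executes Amnesiac Flooding. Because messages travel at most one edge per round, $x_d$ cannot have sent anything before round $d$, so $x_d \notin R^P_{d-1}(x_{d-1})$ (where $R^P$ denotes the parrot receipt set); by the AF rule, $x_{d-1}$ forwards to $x_d$ in round $d$.

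For termination, the first step would be to show by induction on $t$ that the parrot configuration $S^P_t$ is always balanced on $G$ in the sense of Definition~\ref{def: balance}. The base case $S^P_1 = \{(v, w) : w \in N(v)\}$ is balanced since it is reachable from $\emptyset$ by a single multicast step with initiator set $\{v\}$. For the inductive step, observe that $S^P_{t+1} = A_G(S^P_t) \cup R_{t+1}$, where $R_{t+1}$ consists of reflection messages $(\ell, u)$ sent by leaves that received at round $t$. Since $A_G$ preserves balance (see Appendix~\ref{apx: balance}) and every reflection message lives on a pendant edge — which belongs to no cycle or faux-even cycle — adding $R_{t+1}$ does not alter balance.

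Next, I would reduce parrot flooding on $G$ to multicast Amnesiac Flooding on the pruned graph $G^- = G \setminus L$, where $L$ is the set of leaves. The key observation is that at any internal node $u$, the AF subtraction $N(u) \setminus R^P_t(u)$ separates internal and leaf targets, so $u$'s decision about which internal neighbor $w$ to send to depends only on whether $w$ sent to $u$ at round $t$; a leaf reflection can only affect whether $u$ is active at all. A straightforward induction then shows that parrot's internal trajectory coincides with that of multicast AF on $G^-$ with initiator sets $J_1 = \{v\}$ and, for $t \geq 2$, $J_t = \{u \in V(G^-) : \text{some leaf neighbor of } u \text{ reflects at round } t-1\}$.

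The main obstacle will be showing that the induced sequence $(J_t)_{t \geq 1}$ is eventually empty, so that Theorem~\ref{Thm: termination} applies to yield termination of multicast AF on $G^-$. I would tackle this contrapositively: if $J_t$ were non-empty for arbitrarily large $t$, the parrot trajectory would be infinite, and by finiteness of the configuration space would enter a periodic orbit $S^P_{t_0} = S^P_{t_0+p}$. Adapting the recurrent-message argument in the proof of Theorem~\ref{thm:balance}, one can extract from this orbit a periodic message path that decomposes into internal-edge steps (bounded by the balance lemmas on consecutive traversals of cycles and FECs) together with leaf out-and-back excursions (each of which contributes nothing to any cyclic traversal and can be contracted away). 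The balance invariant on $G^-$ would then preclude the existence of such a recurrent path, yielding the desired contradiction; once the internal trajectory terminates, leaf reflections also stop within at most two subsequent rounds.
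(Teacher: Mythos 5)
Your correctness argument is fine and matches the paper's (every node is informed along a shortest path from the non-leaf initiator). The termination argument, however, has a genuine gap at its final step. The claim that ``the balance invariant on $G^-$ would then preclude the existence of such a recurrent path'' cannot be right, because balance (in the sense of Definition~\ref{def: balance}) is vacuously satisfied by \emph{every} configuration on a tree -- trees contain no cycles and no faux-even cycles -- and yet parrot flooding on a path initiated at a terminal leaf is recurrent forever: the message bounces between the two end-leaves indefinitely. Notice that your termination argument never genuinely uses the hypothesis that the initiator is not a leaf: the initial configuration is balanced in both cases, the reduction to $G^-$ is available in both cases, and the balance invariant on $G^-$ gives no information in either. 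So whatever contradiction you hope to extract must come from an invariant strictly finer than balance, and your proof does not supply one. Relatedly, the proposed contraction of a leaf excursion $x\,w\,\ell\,w\,y$ to $x\,w\,y$ breaks down precisely when $x=y$, producing an immediate back-track $x\,w\,x$ -- a move Amnesiac Flooding never makes (Condition~\ref{cond: No leaves}) -- so the contracted walk is not an AF message path on $G^-$ and the recurrence machinery of \cref{thm:balance} cannot be applied to it. Equivalently, the induced initiator sequence $(J_t)$ may a priori be infinite, and \cref{Thm: termination} only covers finite sequences, so the reduction to multicast AF does not close on its own.

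The paper fills exactly this hole by introducing a second invariant, \emph{$P$-balance}: for every leaf-to-leaf walk $L$ in $G$ (a ``leaf path''), the messages on its path representation must satisfy a parity condition -- for each message $m$, the number of messages whose heads lie at even distance from $m$'s head along $L_p$ is even. This is the pendant-tree analogue of the cycle-parity conditions in Definition~\ref{def: balance}. The non-leaf hypothesis enters precisely here: externally activating a \emph{non-leaf} node of a leaf path injects two messages at even head-distance and preserves the parity, whereas initiating at a leaf injects one and destroys it. One then shows that from a $P$-balanced configuration no message can traverse an entire leaf path plus one further step, which combined with the cycle/FEC bounds rules out all recurrent messages. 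Your outline is recoverable only by adding an invariant of this kind; as written, the argument would ``prove'' termination equally well for a leaf initiator, where the statement is false.
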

This allows us to obtain the following protocol which can be adapted to either the higher bandwidth or readable header relaxation.
\begin{definition}[1-Bit Flooding]
    \emph{1-Bit Flooding} is the protocol defined according to the following rule: When broadcast begins, the initiator picks $0$ for the shared bit if it is a leaf and $1$ otherwise.
    Upon receiving a message, if the shared bit is a $1$ the node implements parrot flooding, and implements Amnesiac Flooding otherwise.
\end{definition}
Therefore, we immediately have.
\begin{proposition}
    On any graph $G=(V,E)$ 1-Bit flooding is correct and terminates in finite time.
\end{proposition}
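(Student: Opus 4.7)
The plan is to argue by a straightforward case analysis on whether the initiator node is a leaf of $G$. Since the shared bit is fixed by the initiator at the moment broadcast begins and is available to every subsequent receiver (either as a readable header bit or implicitly through the number of messages sent along an edge), the choice of subroutine is uniform at every node throughout the execution. Thus the entire behaviour of 1-Bit Flooding reduces, in each case, to one of two already-analysed protocols.

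First I would consider the case where the initiator is a leaf. By the specification of 1-Bit Flooding the shared bit is set to $0$, so every receiving node invokes Amnesiac Flooding. The resulting sequence of message configurations is therefore identical to Amnesiac Flooding initiated at the same (leaf) node, which is correct and terminates in finite time by \cref{Thm: termination}.

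Next I would consider the case where the initiator is not a leaf. Here the shared bit is set to $1$, so every receiving node implements Parrot Flooding, and the execution coincides exactly with Parrot Flooding initiated at a non-leaf vertex. Correctness and termination then follow immediately from \cref{lemma: parrot flooding works}.

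The main obstacle is not in the case split itself, which is essentially bookkeeping, but rather in the deferred proof of \cref{lemma: parrot flooding works} that Parrot Flooding succeeds from any non-leaf initiator. That lemma is where the real work lies: one must argue that the bouncing return messages from leaves do not create recurrent messages in the Amnesiac-like dynamics running on the interior of the graph. A natural approach would be to invoke \cref{thm:balance} and observe that leaves contribute no cycles, so the extra returning messages can be shown to preserve balance on every cycle and faux-even cycle of $G$ and hence cannot sustain a non-terminating execution. At the level of the present proposition, however, that analysis is encapsulated entirely by the preceding lemma, and no further work is required.
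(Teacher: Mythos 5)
Your proof is correct and takes essentially the same route as the paper, which likewise treats the proposition as immediate from the case split on whether the initiator is a leaf: the shared bit makes every node run Amnesiac Flooding (correct and terminating by \cref{Thm: termination}) or Parrot Flooding from a non-leaf initiator (correct and terminating by \cref{lemma: parrot flooding works}). One minor caveat on your closing aside: the paper's proof of that lemma cannot rest on showing that the leaf-bounced messages preserve balance alone---it explicitly observes that a balanced Parrot Flooding configuration need not terminate (e.g.\ a path initiated at an endpoint) and introduces the stronger invariant of $P$-balance on leaf paths---but since you invoke the lemma only as a black box, this does not affect your argument for the proposition itself.
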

If we have readable header information, then the single bit is placed in the header.
On the other hand, if the nodes are able to send multiple messages per edge per round, then the initiator may choose to send $2$ messages in the case where it wishes to encode a $1$ and a single one otherwise.
In this case, if nodes always forward the same number of messages then the single bit is propagated and maintained.\\

The remainder of the section is devoted to the proof of \cref{lemma: parrot flooding works}.
First, we formalise parrot flooding in the same manner as Amnesiac Flooding.
\begin{definition}[Parrot Flooding Redefinition]
    Let $S$ be a configuration on $G=(V,E)$ as in Amnesiac Flooding. The \emph{Parrot Flooding} protocol functions as follows: $P_{G,I}(S)=A_{G,I}(S)\cup T$ where $T=\{(v,u)|(u,v) \in S \wedge deg(v)=1\}$.
\end{definition}
We begin with the following observation.
\begin{observation}
    Let $S$ be a configuration of messages on $G=(V,E)$. Since messages are only reintroduced at leaves rather than on cycles or FECs, $P(S)$ is balanced on $G$ if and only if $S$ is balanced on $G$. Formally, this follows from \cref{thm: Byzantine}.
\end{observation}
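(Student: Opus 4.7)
My plan is to factor the Parrot Flooding operator as $P(S) = A(S) \cup T$, where $T = \{(\ell, u) : (u, \ell) \in S \text{ and } \deg(\ell) = 1\}$ collects exactly the leaf-bounced messages, and then prove the equivalence in two steps: (i)~$S$ is balanced on $G$ iff $A(S)$ is balanced on $G$, and (ii)~$A(S)$ is balanced on $G$ iff $A(S) \cup T$ is balanced on $G$. Chaining these two equivalences gives the observation. Step~(i) is precisely the first (unnamed) lemma of \cref{apx: balance}, which I would simply cite.

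The substantive work lies in step~(ii), which I would establish by exploiting the structural fact that leaves of $G$ cannot lie on cycles or FECs. By \cref{def: balance}, the balance of a configuration is determined entirely by its restrictions $S|_H$ to subgraphs $H$ that are cycles (odd or even) or FECs; any other $H$ automatically satisfies the balance condition. Every vertex of such a cycle has cycle-degree $2$, and every vertex of an FEC subgraph has FEC-degree $2$ or $3$: interior path vertices have degree exactly $2$, non-junction cycle vertices have degree $2$, and the one or two junction vertices have degree $2$ or $3$. Hence every vertex appearing in such an $H$ has $G$-degree at least $2$ and so is not a leaf of $G$. Since every message in $T$ has a leaf as one of its endpoints, this forces $T|_H = \emptyset$, giving $(A(S) \cup T)|_H = A(S)|_H$ for every cycle or FEC subgraph $H$. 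Consequently all four clauses of \cref{def: balance} evaluate identically on $A(S)$ and on $P(S) = A(S) \cup T$, which is step~(ii).

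The one place I would take extra care is the FEC verification, since FECs are slightly richer than single cycles, permitting junction vertices of degree $3$ and a possibly non-trivial connecting path; the essential point is that no FEC vertex has FEC-degree $1$, which follows directly from the definition as two odd cycles joined by a (possibly trivial) path. As an alternative route — the one the statement alludes to — one can view Parrot Flooding as Amnesiac Flooding perturbed only at the leaf vertices and invoke \cref{thm: Byzantine}, which guarantees that perturbations confined to vertices lying neither on cycles nor on paths between odd cycles cannot disturb the termination structure, an interpretation fully consistent with the balance invariant proved above.
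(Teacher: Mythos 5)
Your proposal is correct and matches the paper's intended argument: the paper justifies this observation in one line by exactly the decomposition $P(S)=A(S)\cup T$ with $T$ supported only on leaf-incident edges, which cannot belong to any cycle or FEC subgraph, combined with the already-proved invariance of balance under $A$. Your two-step chaining simply makes that one-liner rigorous (the slight imprecision about junction degrees in an FEC with $y=0$ is harmless, since the essential point you isolate --- no cycle or FEC vertex has degree $1$ --- is what the argument actually uses).
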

However, if parrot flooding is initiated at the terminal vertex of a path this configuration is balanced but will not terminate.
Therefore, we require a stronger condition than balance to capture termination under parrot flooding.
Consider a message path of parrot flooding (the definition of which is obtained by simply replacing the operator $A$ with $P$ in definition \ref{def: message paths}).
We immediately inherit \cref{lemma: recurrence,lemma: odd cycles,lemma: even cycles,lemma: even cycles,lemma: EvenSub} as none of them depend on the behaviour of leaves.
Thus any recurrent message $(u,v) \in S$ must have a message path $uv...wlw...uv$ where $l$ is a leaf node or it would be captured by the argument used in the proof of \cref{thm:balance}.
Therefore, in fact there exists a configuration $\hat{S}=P^k(S)$ for some $k \in \mathbb{N}$ such that $(w,l)\in \hat{S}$ where $(w,l)$ is recurrent on $\hat{S}$.
The rest of the argument is similar to that used in the proof of \cref{thm:balance}, and we will construct a suitable notion of balance.
First, however, we must define the additional structure that produces non-termination.
\begin{definition}[Leaf paths]
    A \emph{leaf path} $L=l_1...l_k$ on $G=(V,E)$ is a sequence of nodes from $V$ starting and ending at leaves such that $l_il_{i+1} \in E$ and $l_i\neq l_{i+2}$ unless $i=1$ and $k=3$.\\
    The path representation of $L$ denoted $L_p$ is given by the path $l_1,...,l_k$ where copies of the same node are treated as distinct (if $l_i$ is the $j$th appearance of some node $v \in V$ we replace it with a new node $v_j$).\\
    Let $S\subseteq V^2$ be a configuration on $G$, then the path representation of $S$ with respect to $L$ denoted $S_{L_{p}}$ is given by taking all messages from $S$ only between nodes of $L$ and for each message $(u,v)$ adding the set $\{(u_i,v_j)|u_i \in L_p\wedge v_j \in L_p\}$.
\end{definition}
\begin{definition}
    A config $S$ is \emph{$P$-Balanced} on a leaf path $L$ if for all messages $m$ in $S_{L_p}$, the subset of messages that have an even distance between their head and $m$'s along $L_{p}$ is of even cardinality.
    Let $G=(V,E)$ be a graph and $S\subseteq V^2$ a configuration on $G$. Then $S$ is $P$-Balanced on $G$ if $S$ is balanced on $G$ and $P$-Balanced on all leaf paths of $G$.
\end{definition}
This property is conserved by parrot flooding.
\begin{lemma}
    $P$-Balance is preserved under Parrot flooding with non-leaf external node activation.
\end{lemma}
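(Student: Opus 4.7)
The plan is to mirror the strategy used to prove balance preservation for Amnesiac flooding, handling separately the two ingredients of $P$-balance: ordinary balance on cycles and FECs of $G$, and $P$-balance on each leaf path of $G$.

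For the cycle/FEC component, the key observation is that every vertex lying on a cycle or FEC has degree at least two in the corresponding subgraph, so no leaf of $G$ can lie on any such subgraph. Consequently, the additional messages introduced by parrot reflection at leaves --- all of which are incident to degree-one vertices --- contribute nothing to any cycle or FEC of $G$. Thus the restriction of $P(S)$ to any cycle or FEC $H$ coincides with the restriction of $A(S)$ to $H$, so the ordinary balance preservation lemma for Amnesiac flooding (which already handles non-leaf external activation) applies verbatim on $H$.

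For the leaf-path component, the idea is to reduce to the even-cycle balance preservation already established. Given a leaf path $L$ with path representation $L_p$, I would construct an even cycle $C_L$ by taking two disjoint copies of $L_p$ and gluing their respective leaf endpoints together. Under this identification, one round of parrot flooding on $L$ corresponds to one round of Amnesiac flooding on $C_L$: interior-node updates transfer directly, while a leaf reflection at an endpoint of $L$ corresponds to a message crossing from one copy of $L_p$ to the other in $C_L$, i.e., a standard one-step move along $C_L$. Moreover, the $P$-balance condition on $L$ --- that the set of messages at even head-distance from any given message in $L_p$ has even cardinality --- is exactly the even-cycle balance condition on $C_L$, since distances along $L_p$ agree with distances along $C_L$ for messages within a single copy, and the gluing preserves parities across copies. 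A non-leaf external activation at a node $v$ of $L$ then lifts to a symmetric activation at both copies of $v$ in $C_L$, introducing pairs of oppositely-directed messages whose heads sit at even distance apart, which is balance-preserving on $C_L$.

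The main obstacle, where the bookkeeping becomes delicate, will be handling a non-leaf node of $L$ that simultaneously receives internal messages along $L$ and external messages from outside $L$, in which case parrot flooding cancels some outgoing $L$-messages. I expect the resolution to be that such cancellations mirror identically on both copies of the vertex in $C_L$, so the correspondence with Amnesiac flooding on $C_L$ is maintained and the even-cycle balance argument applies step by step. A secondary technical nuisance is the degenerate case $k=3$ with $l_1 = l_3$ in the leaf-path definition, which can be dispatched by direct inspection since the resulting configuration consists of at most one reflecting message between a leaf and its unique neighbour.
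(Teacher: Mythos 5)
Your treatment of the cycle/FEC component is fine and matches the paper: the reflected messages produced by parrot flooding are all incident to degree-one vertices, hence lie on no cycle or FEC, so the restriction of $P(S)$ to any such subgraph agrees with that of $A_{I}(S)$ and the Amnesiac balance-preservation lemma applies. The leaf-path component, however, has a genuine gap. The reduction to an even cycle $C_L$ obtained by gluing two copies of $L_p$ at their leaf endpoints does not work, for two independent reasons. First, the balance conditions are not equivalent under the gluing. If each message of $S_{L_p}$ is doubled into a mirror pair (one per copy), then every original message contributes exactly one clockwise and one anticlockwise message to $C_L$, and both members of a pair have the same distance-parity to any fixed message; hence the even-cycle balance condition (``equal numbers clockwise and anticlockwise at even head-distance'') is satisfied by \emph{every} doubled configuration, so it carries no information about $P$-balance. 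A single message on $L_p$ heading toward a leaf is the cleanest counterexample: it is not $P$-balanced (its own even-distance class has odd cardinality, and indeed it reflects forever), yet its doubled image is a pair of oppositely-directed messages on $C_L$, which is balanced. Second, the dynamics do not commute with the gluing: a collision on $L_p$ consists of two messages traversing the \emph{same} edge in opposite directions, but their images lie on the two \emph{distinct} copies of that edge in $C_L$ and therefore do not annihilate under Amnesiac Flooding there (and, conversely, the mirror pair of a single reflecting message does annihilate on $C_L$ while the original survives on $L_p$). The single-copy ``unfolding'' variant fares no better, since under it all messages travel in the same rotational direction on $C_L$ and colliding path messages again land on different copies of the edge. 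So the claimed step-by-step correspondence with Amnesiac Flooding on $C_L$ fails precisely at collisions, which is the one event the balance bookkeeping must track.

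The paper avoids any such reduction and argues directly on $L_p$: one round of parrot flooding on $G$, restricted to $L_p$, is one round of parrot flooding on $L_p$ with some non-leaf external activation; each step changes the head-distance between any two surviving messages by $0$ or $\pm 2$; an external activation of a non-leaf node inserts two messages whose heads are at distance $2$ (hence in the same parity class); and a collision removes two messages at head-distance $0$. Each event therefore changes the cardinality of every even-distance class by $0$ or $2$, which is exactly what the $P$-balance condition requires. If you replace your cycle reduction with this direct parity count on $L_p$, the proof goes through.
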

\begin{proof}
    We have already established that balance is preserved by parrot flooding, so it remains to show that $P$-Balance is conserved on all leaf paths.
    Let $Q$ be a leaf path starting at $u$ and ending at $v$ on $G=(V,E)$ and let $S$ be a configuration on $G$.
    Then $P_G(S)_{Q_{p}}=P_{Q_p,I}(S)$ for some $I \subset V$ such that $u,v \not\in I$, i.e. a step of parrot flooding on $G$ is equivalent to a step of parrot flooding on the leaf path representation with some possible external node activation.
    After a step of parrot flooding, two messages on $Q_p$ will either have the distance between their heads: stay the same, increase by two or decrease by two.
    Upon externally activating a non-leaf node two messages are created, with their heads an even distance apart on $Q_p$.
    Symmetrically, a collision between two messages can only be between two messages with an even distance (explicitly $0$) between their heads and removes both.
    Thus, $P$-Balance is maintained.
\end{proof}
\begin{corollary}
    If $S$ is not $P$-Balanced on $G$ then $\not\exists k \in \mathbb{N}$ such that $P_G^k(S)=\emptyset$.
\end{corollary}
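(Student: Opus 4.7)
The plan is to derive this as a direct contrapositive of the preceding lemma, using the fact that the empty configuration is a trivial base case for $P$-Balance.

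First, I would verify that $\emptyset$ is $P$-Balanced on any graph $G=(V,E)$. The balance condition from \cref{def: balance} holds vacuously since there are no messages on any subgraph. For every leaf path $L$ of $G$, the path representation $\emptyset_{L_p}$ is empty as well, so the parity condition on subsets of messages at even head-distance from each given message holds vacuously. Hence $\emptyset$ satisfies both parts of the $P$-Balance definition.

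Next, suppose for contradiction that some $k \in \mathbb{N}$ satisfies $P_G^k(S) = \emptyset$, and take $k$ minimal with this property. Since $\emptyset$ is $P$-Balanced and the preceding lemma asserts that one step of parrot flooding (with no external activation, which is the special case $I = \emptyset$ of the "non-leaf external node activation" hypothesis) preserves $P$-Balance, the contrapositive gives: if $P_G(T)$ is $P$-Balanced, then so is $T$. Applying this contrapositive backwards to the sequence $S, P_G(S), \dots, P_G^k(S) = \emptyset$ yields inductively that each $P_G^i(S)$, and in particular $S$ itself, must be $P$-Balanced. This contradicts the hypothesis that $S$ is not $P$-Balanced, completing the argument.

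The statement is essentially a one-line corollary and I do not foresee any real obstacle. The only subtlety worth spelling out is that the unadorned $P_G$ in the corollary corresponds to parrot flooding without any external initiator set, which fits within the scope of the lemma. Beyond that, the argument is purely formal bookkeeping in the style of the analogous forward-direction corollary \cref{corr: Forward} for Amnesiac Flooding.
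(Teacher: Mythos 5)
Your overall strategy is the right one and matches what the paper intends (the corollary is stated without proof, as an immediate consequence of the preservation lemma together with the fact that $\emptyset$ is $P$-Balanced, exactly mirroring \cref{corr: Forward}). Your verification that $\emptyset$ is $P$-Balanced is correct and worth spelling out. However, there is a genuine logical error in the key step. The preceding lemma asserts ``$T$ is $P$-Balanced $\Rightarrow$ $P_G(T)$ is $P$-Balanced.'' The \emph{contrapositive} of that statement is ``$P_G(T)$ is not $P$-Balanced $\Rightarrow$ $T$ is not $P$-Balanced,'' which is logically equivalent to the lemma and is useless for walking backwards from the empty configuration. What you actually invoke --- ``if $P_G(T)$ is $P$-Balanced, then so is $T$'' --- is the \emph{converse} of the lemma, and it does not follow from the lemma by any formal manipulation. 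With only the forward preservation statement, nothing rules out a non-$P$-Balanced configuration mapping to a $P$-Balanced one and then terminating, which is precisely the scenario the corollary must exclude.

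The gap is fixable, because the needed converse direction is true and is in fact what the lemma's proof establishes: the argument there shows that head-distances change by $0$ or $\pm 2$, that external activations and collisions each change the relevant parity counts by an even amount, i.e.\ the parities defining $P$-Balance are exactly conserved in both directions. This mirrors the Amnesiac Flooding case, where the analogous preservation lemma is explicitly stated as an ``if and only if'' and \cref{corr: Forward} relies on the ``only if'' half. So to repair your proof you should either (i) observe that the lemma's proof yields invariance rather than mere one-directional preservation and cite that, or (ii) argue directly that imbalance is preserved forward ($S$ not $P$-Balanced $\Rightarrow$ $P_G(S)$ not $P$-Balanced) and iterate, concluding $P_G^k(S)\neq\emptyset$ since $\emptyset$ is $P$-Balanced. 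As written, the appeal to ``the contrapositive'' does not close the argument. (The minimality of $k$ in your contradiction setup is also unnecessary, but harmless.)
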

We now show that from a $P$-Balanced configuration no message can take too many steps along a leaf path.
\begin{lemma}
    Let $G=(V,E)$ be a graph and $m$ a message in $S\subseteq V^2$. If $m$ has a message path from $S$ on $G$ that takes $k+1$ consecutive steps along a leaf path of length $k$ then $S$ is not $P$-Balanced on $G$.
\end{lemma}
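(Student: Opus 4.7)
The plan is to mirror the argument of \cref{lemma: even cycles}, by observing that under parrot flooding a leaf path of length $k$ behaves dynamically like an even cycle of length $2k$: its two leaf endpoints act as perfect reflectors, since the parrot rule forces any leaf receiving a message to return it along the same edge in the next round. The bound $k+1$ in the statement is the exact counterpart of the bound $k+1$ that \cref{lemma: even cycles} gives for an even cycle of length $2k$, which is what suggests the reduction.

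Concretely, I would build an auxiliary even cycle $C_L$ from the path representation $L_p$ by taking a second copy $L_p'$ of $L_p$ and identifying each leaf endpoint of $L_p$ with its counterpart in $L_p'$; this yields a cycle on $2k$ edges. I would then define a lift $\Phi$ sending $S_{L_p}$ to a configuration on $C_L$ by placing each message on its copy in $L_p$ together with its mirror on $L_p'$. Two properties must be verified. First, one step of parrot flooding on $S$ restricted to $L$ corresponds to one step of Amnesiac Flooding on $\Phi(S_{L_p})$ in $C_L$, possibly up to external node activations at interior vertices coming from the behaviour of $G$ outside $L$; just as in the proof that $P$-Balance is preserved under parrot flooding, such activations on $C_L$ can only add messages in mirrored pairs whose heads lie at even distance along $C_L$, and so preserve the even-cycle balance invariant. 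Second, the $P$-Balance condition on $L$ is precisely the even-cycle balance condition from \cref{def: balance} applied to $\Phi(S_{L_p})$ on $C_L$: the distance parity between heads along $L_p$ matches the distance parity between heads along $C_L$, so the ``even-cardinality-of-the-even-distance-set'' condition is exactly the balance condition for even cycles.

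With this correspondence in place, the conclusion follows immediately. A message path of length $k+1$ along $L$ lifts to a message path of length $k+1$ on $C_L$, which is an even cycle of length $2k$. Invoking \cref{lemma: even cycles} on $C_L$ forces $\Phi(S_{L_p})$ to be imbalanced on $C_L$. Pulling this back through $\Phi$ then yields $P$-imbalance on $L$, and hence $S$ is not $P$-Balanced on $G$.

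The main obstacle I expect is the bookkeeping around the path representation and external activations. Because a leaf path can revisit interior vertices of $G$ (distinct copies in $L_p$ may map to the same underlying vertex), the lift $\Phi$ has to faithfully reproduce not only the messages internal to $L$ but also their interaction with neighbours outside $L$ and with these other copies. I will need to check in detail that all such interactions on $G$ translate into activations on $C_L$ of the restricted form required to preserve balance, and to handle the degenerate cases in the definition of a leaf path (notably when $y=0$ or $k=3$). Once that bookkeeping is done, \cref{lemma: even cycles} does the substantive work.
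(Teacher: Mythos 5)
There is a genuine gap: the reduction to an even cycle by doubling does not work, because neither of the two properties you say "must be verified" actually holds. First, the dynamics do not correspond at the leaves. Under parrot flooding a leaf \emph{reflects} a message, whereas on your cycle $C_L$ the two mirror copies of that message arrive at the identified leaf vertex from opposite sides in the same round and annihilate (each is the reversal of the other, so neither is forwarded). Concretely, take a path and initiate parrot flooding at one of its leaves: the single resulting message bounces back and forth forever, while its lift on $C_L$ is a pair of opposite messages issued from one vertex of an even cycle, which collide at the antipode and terminate after $k$ rounds. Second, and more fundamentally, the claimed equivalence of the two balance notions is false. Your lift $\Phi$ sends every message of $S_{L_p}$ to a mirrored pair consisting of one clockwise and one anticlockwise message whose heads are an even cycle-distance apart; consequently, for \emph{every} configuration $S$, the set of messages of $\Phi(S_{L_p})$ at even head-distance from a given lifted message splits into equally many clockwise and anticlockwise ones, so $\Phi(S_{L_p})$ satisfies the even-cycle condition of Definition~\ref{def: balance} unconditionally. $P$-imbalance is therefore invisible after lifting, and Lemma~\ref{lemma: even cycles} can never be triggered. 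The underlying reason is that $P$-Balance is a \emph{direction-free} mod-$2$ count of the even-distance class: direction is not conserved on a leaf path (it flips at each reflection), so the directional equality that drives the even-cycle invariant is simply the wrong conserved quantity here. (A further, smaller issue is that $C_L$ is not a subgraph of $G$, so Lemma~\ref{lemma: even cycles} would not apply to it verbatim even if the correspondence held.)

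The paper instead argues directly on $L_p$, mirroring the \emph{proof technique} of Lemmas~\ref{lemma: odd cycles} and~\ref{lemma: even cycles} rather than their statements: if $S$ were $P$-Balanced, then since the even-distance class of $m$ has even cardinality there is an odd number of \emph{other} messages on $L_p$ at even head-distance from $m$; at least one of these must collide with $m$ within $k$ steps, because any new message introduced to block such a collider lies in the same parity class and produces a collider that reaches $m$ even sooner. This contradicts the existence of a $(k+1)$-step excursion along $L$. If you want to salvage your plan, you would need to replace the even-cycle balance condition on $C_L$ by a bespoke invariant matching the direction-free parity of $P$-Balance and reprove the step-bound for it, at which point the reduction buys you nothing over the direct argument.
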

\begin{proof}
    For contradiction assume such a message path exists on the leaf-path $L$ and $S$ is $P$-Balanced on $G$. Then since $m$ takes a step on $L$ there must exist an odd number of other messages on $L_p$ with an even head distinct to $m$ as $S$ is $P$-Balanced on $L$.
    These messages will collide with $m$ the first time they cross over, which unless they are all blocked will occur before $k+1$ steps occur. 
    However, for them to all be blocked, new messages must be added to the $L_p$ with an even head distance to $m$. 
    But, any new message $m_1$ blocking a potential blocker $m'$ of $m$ must introduce $m_2$ with an even head distance to $m$ which will reach $m$ before $m'$ would have.
    Thus, $m$ will collide with another message on $L$ before taking $k+1$ steps along $L$.
\end{proof}
This gives us what we need to mirror the result of \cref{thm:balance}.
\begin{lemma}
    Let $G=(V,E)$ be a graph and $S\subseteq V^2$ be a configuration of it, $S$ has a recurrent message in parrot flooding if and only if $S$ is not $P$-Balanced on $G$.
\end{lemma}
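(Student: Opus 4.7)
The plan is to mirror the proof of \cref{thm:balance} in the parrot-flooding setting, substituting $P$ for $A$ and combining the new leaf-path lemma with the cycle, even-cycle, FEC, and even-substitution lemmas (\cref{lemma: odd cycles,lemma: even cycles,lemma: FEC,lemma: EvenSub}), which have already been transferred verbatim to $P$. The two directions will be handled separately, with almost all of the work concentrated in the reverse implication.

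For the forward implication, assume $S$ is not $P$-Balanced. The corollary preceding the lemma gives $P_G^k(S)\neq\emptyset$ for every $k\ge 0$, so the orbit of $S$ is non-terminating. The analogue of \cref{lemma: recurrence} for $P$ --- whose original proof uses only that every message in a non-empty configuration admits a predecessor message, a property that holds equally under $P$ --- then yields a recurrent message in $S$.

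For the reverse implication, I would assume $S$ is $P$-Balanced and, for contradiction, let $m=(v_0,v_1)\in S$ have a recurrent message path $W=v_0v_1\cdots v_0v_1$. If $W$ never visits a leaf, then $P$ and $A$ agree at every node along $W$, so $W$ is also a recurrent message path under amnesiac flooding from $S$; since $P$-Balance implies balance, \cref{thm:balance} yields an immediate contradiction. Otherwise $W$ contains at least one factor $wlw$ with $l$ a leaf of $G$, and the target becomes a leaf path $L$ on which $W$ takes at least $|L|+1$ consecutive steps in the path representation $L_p$, so that the leaf-path lemma supplies the required contradiction.

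The main obstacle is constructing $L$ and certifying the long traversal. Starting from the factor $wlw$ in $W$ and following $W$ forward from the second occurrence of $w$, the continuation cannot circulate indefinitely while avoiding the leaves, because the inherited cycle, even-cycle, FEC, and even-substitution lemmas --- applied to any segment of $W$ that stays within a non-leaf subgraph --- rule out sustained closed circulation exactly as in the proof of \cref{thm:balance}. Thus $W$ must eventually reach a second leaf $l'$, producing a second bounce factor $w'l'w'$. The sub-walk of $W$ between the two bounces, together with the bounces themselves, is a walk in $G$ that starts and ends at a leaf and avoids the forbidden $l_i=l_{i+2}$ pattern at interior non-leaf steps (again inherited from the amnesiac no-backtracking condition), so it matches the definition of a leaf path $L$. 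A single straight traversal of $L$ takes $|L|$ steps, and the two bounces supply the extra step needed to push the total to at least $|L|+1$, which triggers the leaf-path lemma on $L$. The subtleties I expect to spend the most time on are the degenerate case $l=l'$ (in which $L$ is a closed leaf-to-leaf walk whose path representation must be obtained via the unrolling rule), and the case where the interior of $L$ passes through cycles of $G$, which must be shown to leave the leaf-path conditions intact; both are resolved using the same cycle/FEC case analysis already deployed for \cref{thm:balance}, so the contradiction with $P$-Balance closes the argument.
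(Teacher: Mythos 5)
Your proposal matches the paper's proof in both structure and substance: the forward direction goes through non-termination and the recurrence lemma, and the reverse direction shows that a recurrent message path from a $P$-Balanced (hence balanced) configuration must bounce off a leaf, traverse a full leaf path between consecutive bounces, and take one extra step, contradicting the leaf-path lemma. One small repair: in the leaf-free branch, a $P$-message path that avoids leaves is not automatically an $A$-message path from the same configuration (leaf bounces elsewhere in the graph can change which messages reach the walk's nodes and when), so rather than transferring $W$ to amnesiac flooding you should apply the inherited cycle/even-cycle/FEC/substitution lemmas directly to the $P$-message path --- exactly as you already do in the other branch and as the paper does.
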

\begin{proof}
    If $S$ is balanced on $G$ we must have that any recurrent message uses a leaf node in its message-path. Therefore, that message must traverse a full leaf path and then take an additional step.
    By the previous lemma this cannot happen if the configuration is $P$-Balanced on $G$. 
    Therefore, a recurrent message does not exist if $S$ is $P$-Balanced on $G$.

    However, if $S$ is not $P$-Balanced on $G$ then the configuration is non-terminating and so must have a recurrent message.
\end{proof}
\begin{corollary}
    \label{corr: Parrots}
    Parrot flooding terminates on $G$ from a configuration $S$ if and only if $S$ is $P$-Balanced on $G$.
\end{corollary}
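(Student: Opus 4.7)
The plan is to reduce the corollary to the immediately preceding lemma by the same pigeon-hole argument used to prove \cref{corr: balance is fast} for Amnesiac Flooding. The forward direction (termination implies $P$-Balance) is essentially free: the corollary stated just after the preservation-of-$P$-Balance lemma records that a non-$P$-Balanced configuration never reaches $\emptyset$ under iterated $P_G$, which is exactly the contrapositive of what is needed. So the actual work is on the reverse direction.

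For the reverse direction, I would suppose $S$ is $P$-Balanced on $G$ and assume for contradiction that $P_G^k(S)\neq\emptyset$ for every $k$. I would fix some $k > 2|E|$ and pick any message in $P_G^k(S)$. Each message in $P_G^j(S)$ with $j\ge 1$ is produced either by the amnesiac rule $A_G$ (in which case it has a natural predecessor $(v_{j-1},v_j)\in P_G^{j-1}(S)$ with $v_{j-1}\ne v_{j+1}$) or by a leaf bounce (in which case its predecessor is the incoming message $(v_{j+1},v_j)$, so $v_{j-1}=v_{j+1}$, which is permitted by the message-path definition and simply encodes the walk turning around at a leaf). Iterating this backward construction produces a message path $v_0 v_1 \ldots v_{k+1}$ of length greater than $2|E|+1$ starting at some $m=(v_0,v_1)\in S$.

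Now I would invoke pigeon-hole on oriented edges: since the walk has more than $2|E|$ directed steps, some oriented edge $(u,v)$ must appear twice; let $i<j$ be indices with $(v_i,v_{i+1}) = (v_j,v_{j+1}) = (u,v)$. Then the subpath $v_i v_{i+1}\ldots v_j v_{j+1}$ is precisely a message path of the form $uv\ldots uv$ for the message $(u,v)\in P_G^i(S)$, so $(u,v)$ is recurrent on $G$ from $P_G^i(S)$. The preceding lemma then forces $P_G^i(S)$ to not be $P$-Balanced, and preservation of $P$-Balance under $P_G$ (already established) pushes this back to $S$, contradicting the hypothesis. Hence parrot flooding terminates from any $P$-Balanced $S$.

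The step I expect to be most error-prone is the backward tracing through leaf bounces: one must verify that the recursive message-path definition, read with $P_G$ in place of $A_G$, really does legitimise a step $(v_{k-1},v_k)\to(v_k,v_{k+1})$ when $v_k$ is a leaf with $v_{k-1}=v_{k+1}$, so that leaf-bounced messages are not orphaned from their predecessors and the global walk remains a bona fide message path. Once this bookkeeping is dispatched, the remainder of the argument is a direct transcription of the proof of \cref{corr: balance is fast}, now invoking the parrot-flooding analogue of the recurrence lemma rather than the Amnesiac one.
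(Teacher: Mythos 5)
Your proof is correct and follows the route the paper intends: the paper leaves this corollary unproved as an immediate consequence of the preceding lemmas, and your argument is exactly the natural instantiation — forward direction from the preservation corollary, reverse direction by extracting a recurrent message from a long message path (the same pigeonhole step the paper uses for \cref{corr: balance is fast}) and then invoking the recurrent-message/$P$-Balance lemma together with preservation. The leaf-bounce bookkeeping you flag is indeed fine, since the message-path definition places no constraint forbidding $v_{j-1}=v_{j+1}$.
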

\begin{proof}[Proof of \ref{lemma: parrot flooding works}]
    Parrot flooding correctly broadcasts as every node receives a message along the shortest path from a member of the initiator set.
    In the first round, every node in the initiator set sends to all of its neighbours, corresponding to an external activation of any leaf paths that it may lie on.
    This cannot break $P$-Balance and since $\emptyset$ is $P$-balanced, by corollary \ref{corr: Parrots} the protocol must terminate in finite time. 
\end{proof}

\section{Proof of the Fault Sensitivity results}
\label{apx: faults}
In this appendix, we prove the fault sensitivity and extended dichotomy results of section \ref{sec: faults}, using \cref{thm:balance} as our main technical tool.
\subsection{Extended Dichotomy}
Here we present a proof of lemma \ref{lemma: going backwards}.
\begin{lemma}[Lemma \ref{lemma: going backwards} restated]
        Let $G=(V,E)$ be a graph, $S$ a configuration of messages on $G$ and $T=\{u \in V| \forall v \in N(v): (v,u) \in S\}$ the set of sink vertices. Then $A_{T,G}(\overline{A_{\emptyset,G}(\bar{S})})=S$
    \end{lemma}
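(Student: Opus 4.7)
The plan is a direct unfolding of the definitions. I read $T$ as the set of vertices $u$ with $\{v : (u,v)\in S\} = N(u)$, i.e.\ vertices that sent along every incident edge in $S$ (``sinks'' in the time-reversed configuration $\bar S$). Writing $S' := \overline{A_{\emptyset,G}(\bar S)}$, I work out $A_{T,G}(S')$ element by element and match it against $S$.

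First I unfold $S'$. By the definition of $A$,
\[
A_{\emptyset,G}(\bar S) \;=\; \bigl\{(x,y)\,:\, xy\in E,\; (y,x)\notin\bar S,\; \exists w\, (w,x)\in\bar S\bigr\},
\]
and translating back via $(a,b)\in\bar S \Leftrightarrow (b,a)\in S$ gives
\[
S' \;=\; \bigl\{(y,x)\,:\, xy\in E,\; (x,y)\notin S,\; \exists w\,(x,w)\in S\bigr\}.
\]
Intuitively, $S'$ places an arrow into every vertex $x$ that sent in $S$, along each of $x$'s ``unused'' incident edges.

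Next I characterize membership in $A_{T,G}(S')$. By definition $(u,v)\in A_{T,G}(S')$ iff $uv\in E$, $(v,u)\notin S'$, and either $\exists w\,(w,u)\in S'$ or $u\in T$. Using the explicit form of $S'$, together with $uv\in E$:
\[
(v,u)\notin S' \;\Longleftrightarrow\; (u,v)\in S \ \text{or}\ u\text{ sent no message in }S;
\]
\[
\exists w\,(w,u)\in S' \;\Longleftrightarrow\; u\text{ sent at least one message in }S\text{ but not to every neighbour};
\]
and $u\in T$ captures exactly the complementary case ``$u$ sent to every neighbour in $S$''. Hence the disjunction ``$\exists w\,(w,u)\in S'$ or $u\in T$'' is equivalent to ``$u$ sent at least one message in $S$''.

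Finally I combine the two reductions: $(u,v)\in A_{T,G}(S')$ iff $uv\in E$, $u$ sent at least one message in $S$, and $\bigl((u,v)\in S$ or $u$ sent no message in $S\bigr)$. The last two clauses together force $(u,v)\in S$, and conversely every $(u,v)\in S$ trivially satisfies all three conditions (with the $T$-clause firing precisely when $u$ sent to every neighbour). Therefore $A_{T,G}(S') = S$, which is the claim. The whole argument is a bookkeeping exercise; the one genuinely subtle point, and the reason $T$ must appear at all, is the boundary case of a vertex $u$ that sent to every neighbour in $S$: such a $u$ has no incoming arrow in $S'$, so it would fail to be reactivated by the $\exists w$ clause alone, and the explicit initiator set $T$ is exactly what is needed to recover these ``saturated'' vertices in the reverse step.
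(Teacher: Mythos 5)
Your proof is correct and follows essentially the same route as the paper's: both are a direct unfolding of the definition of $A$ applied twice, resolved by the same trichotomy on whether $u$ sent to none, some-but-not-all, or all of its neighbours in $S$ (the paper organises this as two containments proved by contradiction; your explicit computation of both sides is cleaner but not substantively different). One point worth flagging: the set $T$ you work with (vertices that sent along every incident edge of $S$, i.e.\ sinks of $\bar S$) is not what the displayed formula in the statement literally says, which is $\forall v\in N(u):(v,u)\in S$, i.e.\ vertices that \emph{received} from every neighbour. Your reading is the right one: with the literal formula the lemma is false (on the path $a$--$b$--$c$ with $S=\{(b,a),(b,c)\}$ one has $\overline{A_{\emptyset,G}(\bar S)}=\emptyset$ and $T=\{a,c\}$, so the left-hand side evaluates to $\bar S$ rather than $S$), and it is also the reading the paper's own proof relies on when it asserts that ``the nodes in $T$ are sinks in $\bar S$''. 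So you have silently repaired a typo in the statement rather than introduced a gap; it would be worth saying so explicitly rather than leaving the discrepancy implicit.
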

\begin{proof}[Proof of lemma \ref{lemma: going backwards}]
    Assume the message $(u,v) \in A_T(\overline{A(\bar{S})})$ but $(u,v) \notin S$. There are two cases. The first case is that $u$ is an initiator and therefore in $T$, however by definition the nodes in $T$ are sinks in $\bar{S}$ and so this is impossible. The second is that there exists some message $(w,u) \in \overline{A(\bar{S})}$ and $(v,u) \notin \overline{A(\bar{S})}$. For $(w,u)$ to be in $\overline{A(\bar{S})}$ requires that $(u,w) \in A(\bar{S})$, this requires some $(x,u)$ to be in $\bar{S}$. However, since $(v,u) \notin \bar{S}$ we must have that $(u,v) \in A(\bar{S})$ and so $(v,u)$ must be in $\overline{A(\bar{S})}$. Thus, we have a contradiction and so $A_{T,G}(\overline{A_{\emptyset,G}(\bar{S})})\subseteq S$

    Consider a message $(u,v) \in S$ and thus, $(v,u) \in \bar{S}$. If $v$ is not a sink in $\bar{S}$, either there exists a message leaving $u$ over an edge or an edge incident to $u$ over which no message is sent in either $S$ or $\bar{S}$. In either case, $A(\bar{S})$ will include a message sent over that edge which we denote $(u,w)$. Furthermore, we know that $(v,u) \not\in \overline{A(\bar{S})}$ as that would require $(u,v) \in A(\bar{S})$  which is impossible as $(v,u) \in \bar{S}$. Since $\overline{A(\bar{S})}$ must contain the message $(w,u)$, $(u,v)$ must belong to $A(\overline{A(\bar{S})})$.

    On the other hand, assume that $v$ is a sink in $\bar{S}$, then it is in $T$ and will be an initiator in $A_{T,G}(\overline{A_{\emptyset,G}(\bar{S})})=S$ unless it receives a message in $\overline{A_{\emptyset,G}(\bar{S})}$. For $v$ to receive a message in $\overline{A_{\emptyset,G}(\bar{S})}$ requires that $v$ sends a message in $A_{\emptyset,G}(\bar{S})$. However, $v$ is a sink in $\bar{S}$ and so cannot send any messages in $A_{\emptyset,G}(\bar{S})$. Thus we have a contradiction. Therefore, $S\subseteq A_{T,G}(\overline{A_{\emptyset,G}(\bar{S})})$.
\end{proof}
\subsection{Fault Sensitivity}
We begin with the single message failure case, restated below.

\begin{proof}[Proof of Single Message Failure Theorem]
    For the non-termination, this is an immediate consequence of theorem \ref{thm:balance} as the dropping of the message must produce an imbalance.\\
    For correctness, let $S_u$ be the configuration of Amnesiac Flooding on $G=(V,E)$ in the first round of broadcast from $u$. 
    A node $v$ will become informed via the shortest message path out of any message from $S$ ending at $v$, since this is the shortest path it cannot cross the same edge twice and so removing the second use of an edge will not harm correctness. 
    On the other hand if $uv$ is not a bridge, then the state reached by dropping a message on $uv$ the configuration becomes imbalanced and so all nodes will receive messages infinitely often. 
    Finally, if the message $(u,v)$ is dropped the first time $uv$ is used and $uv$ is a bridge, either there is an odd-cycle on $u$'s side of the cut or there isn't. 
    If there isn't then $u$ will be activated only once, as the component the messages can access is bipartite. 
    Otherwise, $u$ will be activated a second time and so the message will cross $uv$ and flood the other half as normal.
\end{proof}
We now turn to the result for Uni-directional link failure.
We will first need the following lemma.
\begin{lemma}\label{lemma: directed}
    Let $G=(V,E,A)$ be a simple-mixed communication graph such that $(u,v) \in A \implies (v,u) \notin A$, $C$ a cycle on $G$ containing at least one arc from $A$ and $S\subseteq V^2$ a message configuration on $G$. If $S_C$ contains more messages travelling in the same direction as an arc from $A$ than counter to it, $S_C$ is not a terminating configuration of Amnesiac Flooding on $G$.
\end{lemma}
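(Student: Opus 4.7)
My plan is to introduce a signed count of messages on $C$ and show it is non-decreasing under one step of Amnesiac Flooding; since the hypothesis makes this count strictly positive initially, at least one message will persist on $C$ at every round, giving non-termination.

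Concretely, I would first orient $C$ so that the arc(s) of $A \cap C$ point in the chosen positive direction; the intended setting is that of the application to uni-directional link failures, where a single failed link puts exactly one arc on the cycle, so all arcs on $C$ are trivially co-oriented. Writing $p_t, q_t$ for the numbers of messages of $A_G^t(S_C)$ travelling in the positive and negative directions along $C$ respectively, and setting $\delta_t = p_t - q_t$, the hypothesis says $\delta_0 > 0$.

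The main step is the monovariant $\delta_{t+1} \geq \delta_t$. For each $v \in C$ let $R^+_v, R^-_v, E_v \in \{0,1\}$ record whether $v$ receives at round $t$ a positive $C$-message, a negative $C$-message, or any message from outside $C$, respectively; and let $\alpha_v, \beta_v \in \{0,1\}$ indicate whether the cycle edge in the positive / negative direction at $v$ admits such a send. Under the chosen orientation, $\alpha_v = 1$ at every node, while $\beta_v = 0$ exactly when the edge immediately preceding $v$ on $C$ is an arc. The Amnesiac Flooding rule then expresses the $C$-outgoing messages of $v$ in round $t+1$ as a simple Boolean function of $(R^+_v, R^-_v, E_v, \alpha_v, \beta_v)$, and I would check the eight combinations of $(R^+_v, R^-_v, E_v)$ one by one: in each case the contribution of $v$ to $\delta_{t+1} - \delta_t$ reduces to either $0$ or $1 - \beta_v \geq 0$. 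Summing over $v$ gives $\delta_{t+1} \geq \delta_t$, and iterating forces $\delta_t \geq \delta_0 > 0$, so $p_t \geq 1$ for every $t$ and consequently $A_G^t(S_C) \neq \emptyset$, i.e.\ non-termination.

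The main obstacle I expect is the case bookkeeping, in particular for nodes that simultaneously receive messages from within $C$ and from outside $C$; there, the $E_v$ flag must not be allowed to secretly spawn an extra negative-direction message on $C$ that would spoil the monovariance. A secondary subtlety is that the invariant relies on all arcs of $A \cap C$ being co-oriented; small two-arc examples show this hypothesis is needed for the statement, but it is free in the intended application since a single uni-directional link failure places at most one arc on any given cycle.
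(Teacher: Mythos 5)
Your proof is correct and is essentially the paper's argument: both establish that the signed count of with-arc minus counter-arc messages on $C$ is a non-decreasing quantity under one round of Amnesiac Flooding (external activations add a balanced pair or a single with-arc message, collisions remove a balanced pair, and arc-blocking removes only counter-arc messages), so it remains positive and messages persist on $C$ forever. Your explicit flagging of the need for all arcs of $A\cap C$ to be co-oriented is apt --- the paper asserts this co-orientation without justification --- but otherwise the two proofs differ only in bookkeeping (your per-node case analysis versus the paper's per-event accounting).
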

\begin{proof}
    Since $C$ is a cycle on $G$, all arcs are oriented in the same direction with respect to $C$. 
    The number of messages on $C$ can change for three reasons: either messages are added due to external activation, messages are removed as a result of a collision between two messages, or a message reaches an arc oriented in the opposite direction. 
    The first creates either two messages, one travelling in each direction, or only a single message travelling in the same direction as the arcs. 
    The second removes one message travelling in both directions. 
    The third can remove only messages travelling counter to the arcs. 
    Therefore, there is no way for the number of messages travelling with the arcs to decrease relative to the number travelling counter to the arcs. 
    Thus, if there are more travelling with the arcs, the protocol can never reach termination.
\end{proof}
We can now prove the result for the uni-directional link failure case.

\begin{proof}[Proof of Uni-directional Link Failure Theorem]
    For the first claim, there are two cases either the graph contains a bridge or it does not. 
    If the graph contains a bridge we take $e$ to be a bridge edge and orient it towards the broadcasting node. 
    This prevents broadcast as it disconnects the graph.\\
    On the other hand, if the graph does not contain a bridge it must contain a cycle. 
    We take the closest node $v$ of the cycle to our initiating node $u$ and orient one of its cycle edges $(w,v)$ towards it (please note that $v$ may in fact be $u$ or even non-unique).
    As $v$ is the closest node of the cycle to $u$ and nodes are always first informed over the shortest path from the initiating node, $v$ first receives a message it must do so from outside of the cycle.
    Furthermore, there can either be no other messages on the cycle when $v$ first receives the message.
    Then $v$ will attempt to send a message to both of its neighbours on the cycle, however $w$ will not receive the message as it is sent counter to the arc $(w,v)$.
    Any other node of the cycle being activated will cause the addition of one message in each direction, where as $v$ will only contribute a single message in the direction of the arc.
    Thus, there will be more messages travelling in the direction of the arc around the cycle and so by lemma \ref{lemma: directed} the protocol will not terminate.\\

    For the second claim, we observe that either the set of edge failures does not leave a strongly connected mixed-graph, or it contains a cycle with an arc. 
    In the former case there must be two nodes $u$ and $v$ such that no message from $u$ can reach $v$, so we take $u$ to be our initial node and the process fails to broadcast. 
    In the latter case, if the arc is $(u,v)$ we take the node $v$ to be our initial node. 
    In this case, immediately we have only a single message travelling in the direction of the arc around the cycle and none travelling counter to it. 
    Therefore, by application of lemma \ref{lemma: directed} the protocol will not terminate.
\end{proof}
Finally, we prove the Byzantine case.

\begin{proof}[Proof of Byzantine Failure Theorem]
    Since the Byzantine nodes are time bounded, for them to force the protocol not terminate they must selectively forward messages in order to force a non-terminating configuration at their final step. 
    By theorem \ref{thm:balance}, this requires them to force an imbalanced configuration.
    If the Byzantine set contains a node on a cycle or FEC, this is trivial as in a single step they can either send an additional message or fail to send a message, which will lead to imbalance.
    Conversely, if the Byzantine set does not contain such a node there is no way they can produce an imbalance on a cycle or FEC, as they can only externally activate (or deny external activations to) such a structure.
    But external activations do not affect the balance of a graph.
    Thus, the Byzantine agents can force non-termination if and only if the set contains a node on a cycle or FEC.\\

    The Byzantine agents can trivially prevent a broadcast if their set contains a cut node set, as Byzantine nodes on the cut can simply not send messages, effectively removing the nodes from the communication graph.
    However, if the set of nodes does not contain a cut node set, this strategy cannot disconnect the graph and so Amnesiac Flooding will still be successful.
    Assume for the sake of contradiction there exists a forwarding strategy such that the Byzantine agents can prevent broadcast.
    Let $B\subset V$ be the set of Byzantine nodes, $u \not \in B$ our initial node and $v\in V$ a node that will not be informed under this strategy.
    Since $B$ does not contain a cut node set there must be a shortest path $P$ from $u$ to $v$ in $G\setminus B$.
    As $v$ is not informed, the Byzantine nodes must prevent a message from travelling along from $u$ to $v$ along $P$.
    However, no member of $B$ lies on $P$ and so the only way to block a message on $P$ is to send a message in the other direction to collide with it.
    This can only be achieved by one of the Byzantine nodes $b_0$ sending a message that will reach a node $w$ on $P$ before the message from $u$ would.
    However, since $w$ has not yet received a message it must forward the message from $b_0$ in both directions along the path.
    Thus the message from $b_0$ will now inform $v$, and so must be prevented from reaching it.
    Iterating this argument, any message sent by the Byzantine nodes that blocks the message from $u$ along $P$ to $v$ will only lead to some other message reaching $v$ first.
    Thus, there is no strategy that can force the broadcast to be incorrect.
    
\end{proof}
\bibliographystyle{ACM-Reference-Format}
\bibliography{references}
\end{document}